\crefname{proofcntr}{proof}{proofs}
\Crefname{proofcntr}{Proof}{Proofs}
\DeclarePairedDelimiter\ceil{\lceil}{\rceil}
\DeclarePairedDelimiter\floor{\lfloor}{\rfloor}
\newcommand{\good}{\textsc{good}}
\newcommand{\bad}{\textsc{bad}}
\newcommand{\tsigma}{\Tilde{\sigma}}
\newcommand{\opt}{\textsc{OPT}}
\newcommand{\wP}{\widetilde{P}}
\newcommand{\hmax}{H_{\max}}
\newcommand{\cB}{\mathcal{B}}
\newcommand{\cH}{\mathcal{H}}
\newcommand{\cS}{\mathcal{S}}
\newcommand{\cX}{\mathcal{X}}
\newtheorem*{rep@theorem}{\rep@title}
\newcommand{\newreptheorem}[2]{%
	\newenvironment{rep#1}[1]{%
		\def\rep@title{#2 \ref{##1}}%
		\begin{rep@theorem}}%
		{\end{rep@theorem}}}
\title{One-shot purity distillation with local noisy operations and one-way classical communication}
\author[1]{Sayantan Chakraborty\thanks{kingsbandz@gmail.com/sayantc@nus.edu.sg}}
\author[2]{Aditya Nema\thanks{aditya.nema30@gmail.com}}
\author[2]{Francesco Buscemi\thanks{buscemi@i.nagoya-u.ac.jp}}
\affil[1]{Centre for Quantum Technologies, National University of Singapore}
\affil[2]{Department of Mathematical Informatics, Nagoya University}
\date{\today}
\begin{document}

\maketitle

\begin{abstract}

Local pure states represent a fundamental resource in quantum information theory. In this work we obtain one-shot achievable bounds on the rates for local purity distillation, in the single-party and in the two-party cases. In both situations, local noisy operations are freely available, while in the two-party case also one-way classical communication can be used. In addition, in both situations local pure ancillas can be borrowed, as long as they are discounted from the final net rate of distillation. The one-shot rates that we obtain, written in terms of mutual information-like quantities, are shown to recover in the limit the asymptotic i.i.d.  rates of Devetak [PRA, 2005], up to first order analysis.

\end{abstract}

\section{Introduction}
Pure states play a crucial role in quantum theory and its new technologies, from quantum computing to quantum communication and quantum cryptography. This is true not only for practical purposes, where quantum effects typically reveal themselves only at very low (effective) temperatures, but also in the mathematics of quantum information theory, where techniques involving the purification of states and operations provide the starting point of virtually all theorems and calculations that we know of. In this latter context, pure states are typically introduced as ancillary registers, in tensor product with the system under consideration, ready to be used as workspace for some global unitary quantum operation, but sometimes they can be introduced surreptitiously, for example, when applying isometries embedding a quantum system into a larger space.

For the most part, in quantum information and quantum computation pure ancillas are treated as a resource that is freely available. However, moving from ``digital'' information to ``analog'' information, such as the information about the direction of a gyroscope, pure states become \textit{the} resource to account for, and cannot be treated as free anymore. This idea is particularly natural in thermodynamics. For example, Landauer \cite{Landauer} showed that one needs to do work in order to initialize an unknown state into a known pure state, a task that is known as \textit{erasure}. Conversely, Bennett et al. \cite{Bennet_etal_thermo}, resonating with the seminal work of Szilard \cite{Szilard}, showed that it is possible to extract work from a thermal bath if the system in initialized in a pure state.

Refs.~\cite{Oppenheim_puritydil,Horodecki_puritydilution,Synak-purity} are among the first to develop this idea in quantum Shannon theory, by introducing the framework of \textit{noisy operations}. Here the task is to start from a given mixed state and produce as many pure qubits as possible, by using only partial traces, unitary operations, and complete dephasings, i.e., projective measurements on an orthonormal basis. Devetak~\cite{Devetak_purity} and Krovi--Devetak~\cite{KroviDevetak} extended the task to a two-party scenario, with unbounded and bounded one-way classical communication, respectively. In comparison with previous analyses, Refs.~\cite{Devetak_purity,KroviDevetak} also allow the borrowing of pure ancillary states, as long as they are discounted from the final net rate. Under these constraints, Ref.~\cite{Devetak_purity} provides direct and converse coding theorems showing that the sum of the asymptotic rates at which both Alice and Bob can distill local pure states is given by
\[
\log d_A - H(A)_\rho + \log d_B - H(B)_\rho + \lim_{n \to \infty} \frac{1}{n} \max_{\Lambda^n} I(X^n;B^n)\;,
\]
where $\Lambda^n$ is a rank one POVM that maps the quantum system $A^{\otimes n}$ to the classical random variable $X^n$. 

The last term on the right-hand side of the above equation represents the excess rate with respect to what Alice and Bob would have obtained by acting in a purely local fashion (i.e., without communication), and it coincides with the one-way distillable common randomness of $\rho^{AB}$, namely, the maximum rate at which $\rho^{AB}$ can be used to extract bits of common randomness using one-way classical communications and local operations~\cite{Devetak_Winter}. Subsequently, Ref.~\cite{KroviDevetak} characterized the rate for the above bipartite case when the allowed classical communication was limited.

In this paper we provide achievable rates which serve as one-shot analogues of Devetak's original results. More details are provided in what follows.

\subsection{The Task}
We investigate the problem of producing pure states by performing some quantum operations on a given mixed state in the one-shot setting. We consider the following two scenarios of this problem: 

\begin{enumerate}
    \item In the first scenario, one party, say Alice, is provided with a single copy of some quantum state $\rho^A$ on system $A$. The task for Alice is to extract pure qubit states using only unitary operations on $A$ and classical randomness. Notice that this allows Alice to perform also dephasing channels, as these are classical mixtures of unitary operations. We call this task \emph{purity concentration}.
    \item In the second scenario, two parties, Alice and Bob possess the $A$ and $B$ sub-systems, respectively, of a given bipartite quantum state $\rho^{AB}$. They are allowed to perform local random unitary operations and to communicate via a dephasing (i.e., classical) channel. The task for Alice and Bob is to design a protocol using these resources such that together they can extract pure qubit states from the shared state $\rho^{AB}$. We call this task \emph{local purity distillation} but we will often refer to it as the \emph{distributed case}, and a protocol for it as a \emph{distributed protocol}.
\end{enumerate}

\begin{remark}
	In what follows, all logarithms are in base 2, and we implicitly assume the floor $\floor* d$ or the ceiling $\ceil* d$ (i.e., greatest/smallest integer less/grater than or equal to $d$), depending on the context, in any computation that evaluates the number of bits or qubits, or the dimension of some subspace, or the support of a probability distribution.
\end{remark}

\begin{definition}[purity concentration $\eps$-code]\label{def:e-purity-concentr}
Given a quantum state $\rho^A$ and a value $\eps\in[0,1]$, a purity concentration $\eps$-code consists of a $d_C$-dimensional ancilla system $C$ and a unitary $U^{AC\to A_pA_g}$ (the subscript $p$ stands for \emph{pure}, while $g$ stands for \emph{garbage}) such that
\[
\norm{\Tr_{A_g}[U (\rho^A\otimes\ketbra{0}^C) U^\dag]-\ketbra{0}^{A_p}}_1 \leq \eps\;.
\]
The rate of the code is defined as
\[
R\coloneqq \log d_{A_p}-\log d_C\;.
\]
A rate $R$ is defined to be $\eps$-achievable for purity concentration with respect to $\rho^A$ if there exists a purity concentration $\eps$-code with rate $R$.
\end{definition}

Notice how the size of the pure ancillary state borrowed at the beginning of the protocol is eventually discounted from the rate.

\begin{definition}[$\eps$-purity]
Given a state $\rho^A$, its $\eps$-purity, denoted by $\kappa_{\eps}(\rho^A)$ is defined as the supremum over all $\eps$-achievable rates for purity concentration.
\end{definition}

Before we formally define the distributed protocol, which is the main focus of this work, we wish to make some remarks about the resources which the protocol is allowed to use. Recall that in the LOCC paradigm the parties are allowed to use local operations and classical communication, with access to unlimited local ancilla. In contrast, the  CLOCC (closed local operations and classical communication) paradigm, introduced by Horodecki et al.~\cite{CLOCC, Devetak_purity} and described as a modification of the LOCC framework, does not allow the use of local ancilla. In this work, we extend CLOCC by adding to it two extra resources:
\begin{enumerate}
\item we allow the use of local pure ancillas, as long as their amount is discounted from the final rate of distillation;
\item we allow the use of local classical randomness, which also implies that noisy operations~\cite{Horodecki_puritydilution,GOUR20151} are considered free.
\end{enumerate}

\begin{remark}
	In order to avoid unnecessarily long expressions and duplicate notations, we adopt the convention that, given two operators $A$ and $B$, the Hermitian convolution $ABA^\dagger$ will be simply denoted as $A\cdot B$. The same notation, when necessary and no confusion arises, is extended to channels acting on operators.
\end{remark}

\begin{definition}[One-way local purity distillation $\eps$-code]
Given a bipartite quantum state $\rho^{AB}$, where the systems $A$ and $B$ belong to the two separated parties Alice and Bob, respectively, a local purity distillation $\eps$-code consists of:
\begin{enumerate}
	\item an ancillary system of dimension $d_C$;
    \item a unitary operator $U:AC\to A_pX$ on Alice's side;
    \item a dephasing channel $\mathcal{P}^{X\to X}$ with Kraus elements $\brak{\ketbra{x}}$ where $\brak{\ket{x}}$ is an orthonormal basis for the space $\mathcal{H}_X$;
    \item a unitary operator $V:XB\to B_pB_g$ on Bob's side.
\end{enumerate}
The above code should satisfy the condition that
\[
\norm{\Tr_{B_g}\Big[V\cdot \mathcal{P}\cdot U\cdot (\rho^{AB}\otimes \ketbra{0}^C)\Big]-\ketbra{0}^{A_p}\otimes \ketbra{0}^{B_p}}_1\leq \eps
\]
The rate $R$ of the code is defined to be
\[
R^{\to}\coloneqq \log d_{A_p}+\log d_{B_p}-\log d_C
\]
A rate $R^\to$ is said to be $\eps$-achievable for one-way local purity distillation if there exists a one-way local purity distillation $\eps$-code with rate $R^\to$.
\end{definition}

\begin{definition}[One-way $\eps$-distillable local purity]
The supremum over all $\eps$-achievable rates $R^{\to}$ for one-way local purity distillation, denoted by $\kappa^{\to}_{\eps}(\rho^{AB})$, is defined as the one-way $\eps$-distillable local purity of state $\rho^{AB}$.
\end{definition}

\subsection{Results}\label{subsec:Results}

We derive achievable rates for both the aforementioned tasks of purity concentration and one-way purity distillation with $\eps$ error in the one shot setting. Our rate expressions are derived in terms of one-shot entropic quantities which nonetheless converge to the known Shannon-theoretic bounds. We prove the following theorems (for the formal definition of the entropic quantities appearing below, see Section~\ref{sec:entropies} of the Appendix).

\begin{theorem}[Purity concentration]\label{thm:purityconc}
Given a state $\rho^A$ and a value $\eps\in[0,1]$, we have
\[
\kappa_{3\sqrt{\eps}}(\rho^A)\ge \log d_A - \widetilde{H}_{\max}^{\eps}(A)_{\rho^A}\;.\label{eq:local}
\]
\end{theorem}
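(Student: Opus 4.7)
My strategy follows the standard ``compress-and-discard'' approach: approximate $\rho^A$ by a nearby state $\tilde\rho^A$ of small rank $r$, rotate its support into one tensor factor of a splitting $A_p\otimes A_g$ of $AC$, and trace out $A_g$. The complementary factor $A_p$ is then left close to the pure state $\ketbra{0}^{A_p}$, yielding a rate of $\log d_A - \log r \geq \log d_A-\widetilde{H}_{\max}^{\eps}(A)_{\rho^A}$ pure qubits.

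Concretely, I would invoke the definition of the smoothed max-entropy from Section~\ref{sec:entropies} to pick $\tilde\rho^A$ with $\log\mathrm{rank}(\tilde\rho)\leq \widetilde{H}_{\max}^{\eps}(A)_{\rho^A}$ and $\norm{\tilde\rho-\rho}_1\leq \eps$ (or in purified distance, if that is what the smoothing uses). Let $\Pi$ be the projector onto $\mathrm{supp}(\tilde\rho)$, and $r=\mathrm{rank}(\Pi)$. Since $\Pi\tilde\rho = \tilde\rho$, the $\eps$-closeness yields $\Tr[\Pi\rho]\geq 1-\eps$, and by Winter's gentle measurement lemma $\norm{\Pi\rho\Pi-\rho}_1\leq 2\sqrt{\eps}$. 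Choosing $d_{A_g}=r$ and $d_{A_p}, d_C$ with $d_Ad_C = d_{A_p}d_{A_g}$ and $\log d_{A_p}-\log d_C = \log d_A-\log r$ (using the floor/ceiling convention of the paper to absorb integrality), I would fix an orthonormal basis $\{\ket{e_i}\}_{i=0}^{r-1}$ of $\mathrm{supp}(\tilde\rho)$, extend it to a basis of $\mathcal H_A$, and let $U^{AC\to A_pA_g}$ be any unitary satisfying $U(\ket{e_i}\otimes\ket{0}^C)=\ket{0}^{A_p}\otimes\ket{i}^{A_g}$ for $0\leq i<r$.

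The error analysis is then a short triangle inequality. By construction, $U\cdot(\Pi\rho\Pi\otimes\ketbra{0}^C)=\ketbra{0}^{A_p}\otimes\sigma^{A_g}$ for some subnormalized $\sigma^{A_g}$ with $\Tr[\sigma]=\Tr[\Pi\rho]$, so its partial trace over $A_g$ equals $\Tr[\Pi\rho]\cdot\ketbra{0}^{A_p}$, which is within $\eps$ of $\ketbra{0}^{A_p}$ in trace norm. Replacing $\Pi\rho\Pi$ with the actual input $\rho$ incurs an additional error of at most $\norm{\Pi\rho\Pi-\rho}_1\leq 2\sqrt{\eps}$ by contractivity of the partial trace and the unitary conjugation. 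The total trace-norm error is therefore at most $\eps+2\sqrt{\eps}\leq 3\sqrt{\eps}$, as required.

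The main obstacle I anticipate is the bookkeeping for the dimensions $d_{A_p},d_{A_g},d_C$: the ancilla $C$ absorbs integrality issues via the floor/ceiling convention, but one must verify that the stated rate bound is cleanly attained under the chosen factorization. A secondary point is that the concrete relationship between $\widetilde{H}_{\max}^{\eps}$ and the rank of a nearby state depends on the precise definition given in Section~\ref{sec:entropies}; if the smoothing distance or underlying R\'enyi parameter differs from what I assumed, minor adjustments (e.g., in the constants or in passing through a rank-reduction step first) may be needed, though the overall structure of the argument is robust.
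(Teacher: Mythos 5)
Your proposal is correct and follows essentially the same route as the paper: it borrows an ancilla of dimension equal to the rank of the trimmed state $\rho'^A$ (so that $\Tr[\Pi\rho]\geq 1-\eps$ by the definition of $\widetilde{H}_{\max}^{\eps}$), rotates $\operatorname{supp}\Pi$ into one tensor factor, and bounds the error by $\eps+2\sqrt{\eps}\leq 3\sqrt{\eps}$ via the gentle measurement lemma, exactly as in the paper's \cref{fact:local} and its application. The dimension bookkeeping you flag as a concern resolves just as you suggest, by taking $d_C=r$ so that $d_{A_p}=d_A$.
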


\begin{theorem}[One-way local purity distillation] \label{thm:purity_distillation}
Given a bipartite state $\rho^{AB}$ and a value $\eps\in(0,1)$, 
\[
\kappa_{\eps'}^\to(\rho^{AB})\ge \log d_Ad_B-\widetilde{H}_{\max}^{O(\eps^2)}(A)_{\rho^A}-\widetilde{H}_{\max}^{\eps}(B)_{\rho^B}+D^{\to}_{\eps}(\rho^{AB})+O(\log\eps)-O(1)\;,
\]
where
\begin{itemize}
	\item $\eps'$ is a suitable rational power function of $\eps$;
	\item $D^\to_{\eps}(\rho^{AB}):= \max_{\Lambda} I_{H}^{\eps_0}(X:B)_{\I^B\otimes \Lambda^A(\rho^{AB})}$,
	where $\eps_0=O(\eps^{1/4})$, and the maximization is over all rank-one POVMs $\Lambda^{A\to X}$ on the system $A$.
\end{itemize}
The above bound can be achieved with an amount of classical communication of at most $H_{\max}^{O(\eps^2)}(A)_{\rho^A}+O(\log \frac{1}{\eps})$.
\end{theorem}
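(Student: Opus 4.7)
The plan is to design a protocol combining one-shot purity concentration (Theorem~\ref{thm:purityconc}) with a hypothesis-testing-based common-randomness distillation. Fix a rank-one POVM $\Lambda^{A\to X}$ that attains the maximum in $D^\to_\eps(\rho^{AB})$ and let $\sigma^{XB} := (\Lambda \otimes \mathrm{id}^B)(\rho^{AB})$; the rate contribution from $D^\to_\eps$ will arise from extracting common randomness from $\sigma^{XB}$, while the local entropy corrections will come from Theorem~\ref{thm:purityconc} applied on each side.

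Alice's unitary $U^{AC\to A_pX}$ is built in three sub-steps. (i) The Naimark dilation of $\Lambda$, with a Naimark ancilla of size $|X_\Lambda|$, sends $A$ to a pair $X_0A'$ so that the reduced state on $A'$ is pure conditional on the classical outcome stored in $X_0$. (ii) A controlled unitary on $A'$ resets $A'$ unconditionally to $\ket{0}$, contributing $\log d_A$ pure qubits. (iii) An application of Theorem~\ref{thm:purityconc} to the residual $X_0$: since $U$ is unitary the reduced state on $X_0$ has entropy equal to that of $\rho^A$, so this step extracts roughly $\log|X_\Lambda| - \widetilde H_{\max}^{O(\eps^2)}(A)_{\rho^A}$ further pure qubits and leaves a residual of dimension $\approx 2^{\widetilde H_{\max}^{O(\eps^2)}(A)_{\rho^A}}$, which is identified with $X$ and dephased through the classical channel. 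The Naimark ancilla is cancelled by the $\log|X_\Lambda|$ freed in step (iii), giving Alice's net contribution $\log d_A - \widetilde H_{\max}^{O(\eps^2)}(A)_{\rho^A}$ and a classical communication cost matching $H_{\max}^{O(\eps^2)}(A)_{\rho^A} + O(\log 1/\eps)$.

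On the receiving end, the dephased message together with $B$ approximates $\sigma^{XB}$. Bob's unitary $V^{XB\to B_pB_g}$ proceeds in two stages: (a) a common-randomness distillation realized by a hypothesis-testing measurement between $\sigma^{XB}$ and $\sigma^X\otimes \sigma^B$, which by the defining property of the hypothesis-testing mutual information yields $I_H^{\eps_0}(X:B)_{\sigma^{XB}}$ pure qubits with error at most $\eps_0 = O(\eps^{1/4})$; and (b) a subsequent one-shot purity concentration (Theorem~\ref{thm:purityconc}) on the residual $B$-register, contributing $\log d_B - \widetilde H_{\max}^\eps(B)_{\rho^B}$ pure qubits. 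Summing Alice's and Bob's contributions and maximising over $\Lambda$ yields the target rate.

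The total error $\eps'$ is obtained by composing, via triangle inequalities and the gentle measurement lemma, the errors from Theorem~\ref{thm:purityconc} on Alice's side (parameter $O(\eps^2)$), the Naimark dephasing step, Bob's hypothesis-testing decoder ($\eps_0 = O(\eps^{1/4})$), and Theorem~\ref{thm:purityconc} on Bob's side (parameter $\eps$); tracking these errors gives $\eps'$ as a rational power of $\eps$. The main technical obstacle is step (iii) on Alice's side: the compression basis must be chosen compatibly with the POVM outcome structure so that dephasing on $X$ prepares a state close to $\sigma^{XB}$ on Bob's side, enabling his hypothesis test in step (a) to actually realise the bonus $I_H^{\eps_0}(X:B)$. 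Handling this compatibility --- essentially showing that one-shot purity concentration on the Naimark register and classical encoding of the POVM outcome can be performed simultaneously --- is what forces the smoothing parameter on Alice's term to be $O(\eps^2)$ instead of $O(\eps)$.
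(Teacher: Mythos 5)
Your architecture is essentially the ``dud protocol'' that the paper sets up in \cref{Protocol:dud} and then explains does \emph{not} work, and the two places where your argument breaks are exactly the two places where the paper has to do real work. First, your step (iii) claims that purity concentration on the Naimark/outcome register $X_0$ costs only $\widetilde H_{\max}^{O(\eps^2)}(A)_{\rho^A}$, on the grounds that ``the reduced state on $X_0$ has entropy equal to that of $\rho^A$.'' This is false: the outcome distribution $P_X(x)=\Tr[\Lambda_x\rho]$ of a rank-one POVM can have $\widetilde H_{\max}^{\eps}(X)$ vastly larger than $\widetilde H_{\max}^{O(\eps^2)}(A)$ (take a POVM with exponentially many near-uniform outcomes), so the residual register you propose to dephase and transmit is neither of the claimed size nor of the claimed communication cost. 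The paper's fix is not a ``compatible basis choice'' but the one-shot measurement compression theorem: it replaces $\Lambda$ by a compressed rank-one POVM $\tilde\Lambda$ (\cref{lem:povm}) whose high-probability outcome set $\mathcal S$ has size at most $2^{I_{\max}^{\eps}(X:RB)}$, and this is what gets bounded by $\widetilde H_{\max}^{O(\eps^2)}(A)$ via \cref{fact:Berta}, \cref{claim:positivity} and \cref{lem:Imax_Hmaxtilde}. Making the compressed POVM rank-one \emph{and} showing it still preserves the classical correlations, i.e.\ that $I_H^{\sqrt{\eps_0}}(Y_1:B)\ge I_H^{\eps_0/2}(X:B)-O(1)$ for some fixed compressed POVM index (the derandomization in the proof of \cref{lem:povm}), is a substantial step your proposal has no analogue of.

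Second, your Bob-side step (a) is not a protocol. A single hypothesis test between $\sigma^{XB}$ and $\sigma^X\otimes\sigma^B$ does not by itself ``yield $I_H^{\eps_0}(X:B)$ pure qubits'': you must specify which register is coherently reset to $\ket{0}$ and by what unitary. In the paper this is done by partitioning the outcome set into $M$ bins of \emph{equal} size $N$ with $\log N\approx I_H(X:B)+2\log\eps$ (\cref{lem:slepianwolf}), sending both indices, and having Bob run a sequential decoding measurement built from the cq components of the optimal test, controlled on the bin index, to map the intra-bin register $N$ to $\ket{0}$ (\cref{corol:bobsprotocol}). Achieving equal-size bins in the one-shot regime is itself nontrivial and requires the random-permutation binning of \cref{claim:permutation_hash} rather than standard $2$-universal hashing. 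Without these two ingredients the stated rate and communication bounds are not established.
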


\begin{remark}
    As we will show later, in the case when a large number of  independent copies of the resources are available and in the limit of the error $\eps\to 0$, the bound in \cref{thm:purity_distillation} recovers the optimal bound for this problem, as originally shown by Devetak \cite{Devetak_purity}.
\end{remark}

\begin{remark}
	The maximization in the theorem above can be restricted to rank-one POVMs without loss of generality. This is because any POVM of rank larger than one can always be obtained as the post-processing of another rank-one POVM~\cite{clean-POVMs}. Hence, by the data-processing inequality, such post-processed POVM would lead to a smaller one-shot mutual information.
\end{remark}

The rest of the paper is organized as follows.
In \cref{sec:purity_concentration} we state and prove the one-shot version of the local single system protocol for purity distillation.
In \cref{sec:overview_techniques} we present a high level description of the main bipartite purity distillation protocol, highlighting in particular the issues that arise with a straightforward adaptation of Devetak's asymptotic protocol in the one-shot setting.
In \cref{sec:Technicaltools} we state the technical lemmas used in the purity distillation protocol, though we defer their proofs to \cref{sec:Main_proofs}.
In \cref{sec:MainProtocol} we state and prove the one-shot purity distillation protocol for the bipartite case.
In \cref{sec:Conclusion} we conclude the main text, by summarizing the main results and the open problems to be pursued in future. Two appendices follow: in \cref{sec:Preliminaries} we mention the notation and preliminary mathematical facts used throughout the paper. 
In \cref{sec:entropies} we mention the entropic quantities that we use, together with their properties and asymptotic i.i.d. behavior.

\section{Overview of the purity concentration protocol} \label{sec:purity_concentration}

In this section, as a preparatory example, we describe the one-shot purity concentration theorem. (The distributed one-shot purity distillation theorem is described later, in \cref{sec:overview_techniques}.)

Even though we will provide the formal definition later (see \cref{def:smooth_supp_maxentropy}), we need here a new one-shot smoothed entropic quantity, which we refer to as the \emph{smoothed-support maximum entropy} and denote it as $\widetilde{H}_{\max}^\epsilon(A)_{\rho^A}$. For any state $\rho^A$, the corresponding $\widetilde{H}_{\max}^\epsilon(A)_{\rho^A}$ is defined as the logarithm of the dimension of $\operatorname{supp} (\rho'^A)$, where $\rho'^A$ is the state obtained by zeroing out the smallest eigenvalues of $\rho^A$ which sum up to at most $\eps$, for any $\eps\ge0$. This quantity plays a central role in determining the rates of purity concentration and distillation.

Recall from \cref{def:e-purity-concentr} that purity concentration involves some unitary operation on the input state $\rho^A$ and then discarding a part of the output system, such that the register which is left, i.e. $A_p$, contains a state which is close to a pure state $\ket{0}^{A_p}$. To do this, the main idea is to discard the smallest eigenvalues of $\rho^A$, which add up to at most $\eps$. The eigenvectors which are left then span a space of dimension $2^{\widetilde{H}_{\max}^{\eps}(A)}$, but are embedded in the larger system $A$. Let us refer to these eigenvectors as `good'. This embedding necessarily requires that the eigenvectors be padded with $0$'s on the extra coordinates which are not required to specify them. Thus, we can relabel each of these good eigenvectors with a vector of dimension $2^{\widetilde{H}_{\max}^{\eps}(A)}$ tensored with the unit vector $\ket{0}$. These unit vectors then necessarily belongs to a space of dimension $\abs{A}/2^{\widetilde{H}_{\max}^{\eps}(A)}$. We make these ideas rigorous below.

We will first require the following fact, which can be found in \cite[Lemma~1]{Devetak_purity}, but we prove it here once more for completeness:

\begin{fact}\label{fact:local}
Consider a vector space $A\cong A_g\otimes A_p$, with $\dim A_g=d_1$ and $\dim A_p=d_2$, a state $\rho$ on $A$, and a projector $\Pi$ with rank equal to $d_1$. If $\Tr[\Pi\rho]\geq 1-\eps$, then there exists a unitary $U$ on $A$, a (normalized) state $\tilde\rho$ on $A_g$, and a pure state $\ket{0}\in A_p$ such that
\[
\norm{U\rho U^{\dagger} - \tilde\rho\otimes \ketbra{0}}_1 \leq 3\sqrt{\eps}\;.
\]
\end{fact}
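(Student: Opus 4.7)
The plan is to turn the projector $\Pi$ into a unitary rotation that maps its range onto $A_g \otimes \ket{0}$, then use the gentle-measurement lemma to control how far $\rho$ is from its projection onto that range.

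First, since $\Pi$ has rank $d_1 = \dim A_g$, the subspace $\operatorname{range}(\Pi) \subseteq A$ has the same dimension as $A_g \otimes \ket{0}^{A_p} \subseteq A_g \otimes A_p \cong A$. Choose any orthonormal basis $\{\ket{\phi_i}\}_{i=1}^{d_1}$ of $\operatorname{range}(\Pi)$ and any orthonormal basis $\{\ket{i}^{A_g}\}_{i=1}^{d_1}$ of $A_g$; extend both to full orthonormal bases of $A$ in an arbitrary way. Let $U$ be the unitary on $A$ defined by $U\ket{\phi_i} = \ket{i}^{A_g}\ket{0}^{A_p}$ together with any extension mapping the orthogonal complements onto each other. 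By construction,
\[
U\,\Pi\, U^{\dagger} = I^{A_g} \otimes \ketbra{0}^{A_p}.
\]

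Second, from $\Tr[\Pi\rho] \geq 1-\eps$ the gentle-measurement lemma gives $\norm{\rho - \Pi \rho \Pi}_1 \leq 2\sqrt{\eps}$. Conjugating by $U$ preserves the trace norm, so
\[
\norm{U\rho U^\dagger - U\,\Pi\rho\Pi\, U^\dagger}_1 \leq 2\sqrt{\eps}.
\]
The operator $U\,\Pi\rho\Pi\, U^\dagger$ is supported on $A_g\otimes \ket{0}^{A_p}$, hence has the form $\sigma \otimes \ketbra{0}^{A_p}$ for some positive semidefinite $\sigma$ on $A_g$ with $\Tr[\sigma] = \Tr[\Pi\rho] \geq 1-\eps$. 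Define the normalized state $\tilde\rho := \sigma/\Tr[\sigma]$ on $A_g$.

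Third, a direct estimate gives $\norm{\sigma \otimes \ketbra{0} - \tilde\rho \otimes \ketbra{0}}_1 = \abs{1-\Tr[\sigma]} \leq \eps$. Combining with the previous bound via the triangle inequality,
\[
\norm{U\rho U^\dagger - \tilde\rho \otimes \ketbra{0}^{A_p}}_1 \leq 2\sqrt{\eps} + \eps \leq 3\sqrt{\eps}
\]
for $\eps \in [0,1]$. The only nontrivial input is the gentle-measurement bound, which is standard; the rest is a careful bookkeeping of a unitary that realizes the embedding of $\operatorname{range}(\Pi)$ into $A_g \otimes \ket{0}^{A_p}$. I do not anticipate any substantive obstacle.
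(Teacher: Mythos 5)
Your proposal is correct and follows essentially the same route as the paper: gentle measurement gives $\norm{\rho-\Pi\rho\Pi}_1\le 2\sqrt{\eps}$, a unitary rotates $\operatorname{range}(\Pi)$ onto $A_g\otimes\ket{0}^{A_p}$, and normalization costs an extra $\eps$, yielding $2\sqrt{\eps}+\eps\le 3\sqrt{\eps}$. The only cosmetic difference is that you conjugate by $U$ before normalizing while the paper normalizes $\Pi\rho\Pi$ first; the estimates are identical.
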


\begin{proof}
From the condition $\Tr[\Pi\rho]\geq 1-\eps$, by virtue of the ``gentle measurement'' lemma \ref{fact:Gentle_operator_lemma}, we know that $\norm{\rho-\Pi\rho\Pi}_1\le 2\sqrt{\eps}$. In particular, this implies that $\norm{\rho-\frac{\Pi\rho\Pi}{\Tr[\Pi\rho]}}_1\le\norm{\rho-\Pi\rho\Pi}_1+\norm{\Pi\rho\Pi-\frac{\Pi\rho\Pi}{\Tr[\Pi\rho]}}_1 \le 2\sqrt{\eps}+\eps\le 3\sqrt{\eps}$.
Let now $\{ \ket{v_i}\}$ be an orthonormal basis for the support of $\Pi$. Since $\dim\mathrm{supp}\Pi=d_1$, there exists a unitary $U$ on $A$ such that
\[
U\ket{v_i}=\ket{\tilde v_i}\otimes\ket{0}\;,
\]
for a choice of orthonormal vectors $\ket{\tilde v_i}\in A_g$ and a pure state $\ket{0}\in A_p$. Then, since $[\Pi,\Pi\rho\Pi]=0$, we have that
\[
U\frac{\Pi\rho\Pi}{\Tr[\Pi\rho]} U^\dagger=\tilde\rho\otimes\ketbra{0}\;,
\]
for some normalized state $\tilde\rho$. Finally, as a consequence of the invariance of the trace-norm under unitary transformations,
\begin{align*}
    \norm{U\rho U^\dag - \tilde\rho\otimes\ketbra{0}}_1=\norm{\rho-\frac{\Pi\rho\Pi}{\Tr[\Pi\rho]}}_1\le 3\sqrt{\eps}\;,
\end{align*}
as claimed. \\
\end{proof}

Using \cref{fact:local}, it is easy now to prove \cref{thm:purityconc} mentioned in \cref{subsec:Results} as  follows:

\begin{proof}[\bf Proof of \cref{thm:purityconc}:]
	Given $\rho^A$ and $\eps\in[0,1]$, let us introduce an ancillary system $C$ such that $d_C=2^{\widetilde{H}_{\max}^{\eps}(A)_\rho}$. This step is necessary in order to make dimensions (which are integer numbers) factorize nicely, so that the protocol is deterministic (i.e., unitary). 
	
	Denote by $\Pi^A$ the projector onto the support of $\rho'^A$, namely, the sub-normalized state obtained by zeroing out the smallest eigenvalues of $\rho^A$ which add to less than or equal to $\eps$. Notice that $\Tr[\Pi^A\rho^A]\geq 1-\eps$ and $\Tr[\Pi^A]=d_C$. Let us now define the extended state
	\[
	\rho^{AC}:=\rho^A\otimes\ketbra{0}^C\;.
	\]
	It is clear that $\widetilde{H}_{\max}^{\eps}(AC)_\rho =\widetilde{H}_{\max}^{\eps}(A)_\rho$. Analogously, let us define the extended projector $\Pi^{AC}=\Pi^A\otimes\ketbra{0}^C$. Clearly, $\Tr[\Pi^{AC}\ \rho^{AC}]=\Tr[\Pi^A\ \rho^A]\ge 1-\eps$.
	
	Thus, by invoking \cref{fact:local}, we see that there exists a unitary operator from $A\otimes C$ to $A_g \otimes A_p\cong A\otimes C$, which satisfies
	\begin{align*}
	\norm{U(\rho^A\otimes\ketbra{0}^C) U^\dag - \tilde\rho^{A_g}\otimes\ketbra{0}^{A_p}}_1\le 3\sqrt{\eps}\;,
	\end{align*}
	implying that
	\begin{align*}
	\kappa_{3\sqrt{\eps}}(\rho^A\otimes\ketbra{0}^C)&\ge\log\dim A_p\\&=\log (d_Ad_C)-\widetilde{H}_{\max}^{\eps}(A)\;,
	\end{align*}
	which reduces to the statement of the  theorem once we discount the amount $\log d_C$ of purity that we borrowed at the beginning.
	
	\begin{remark}
		In order to provide the reader with an intuitive understanding of why we need to borrow pure ancilla qubits, let us write the trimmed state $\rho'^A$ in block form
		\[
		\rho'^A=\begin{pmatrix}
			C & 0\\
			0 & 0
		\end{pmatrix}\;,
		\] 
		where the dimension of the block $C$ is $d_C$. We then summon a pure ancilla state $|0\rangle^C$ of the same dimension, and act on the joint state with a unitary operator that \textit{swaps} the $R$ block with the ancilla. This can be done, since the two have the same dimension. In this way, at the end we are left with a $d_A$-dimensional pure state, giving $\log d_A$ bits of purity, and from this we discount the size of the ancilla that we borrowed, which is exactly $\log d_C$.
	\end{remark}
\end{proof}

\section{Overview of the purity distillation protocol} \label{sec:overview_techniques}

In this section we describe the main ideas behind our results, which we use to generalize Devetak's results to the one-shot setting. This section is meant to serve as a road map for the proofs, which appear later in \cref{sec:Main_proofs}. We make use of the purity concentration protocol described in \cref{sec:purity_concentration} as a subroutine. 

We being by recalling the setup: Alice and Bob share the $A$ and $B$ parts, respectively, of a bipartite state $\rho^{AB}$. They are allowed to use local unitaries and a dephasing (i.e., classical) channel to communicate. They can also borrow local pure ancilla, but these will be discounted from the final rate. An obvious  protocol, requiring no communication, is obtained if Alice and Bob simply enact the concentration protocol locally on $A$ and $B$ systems respectively. In this way, they can extract local pure states at the rate 
\begin{equation} \label{eq:Naiverate}
\log d_A-\widetilde{H}_{\max}^{\eps}(A)+\log d_B-\widetilde{H}_{\max}^{\eps}(B)\;.
\end{equation}
However, the above is not optimal, as shown in the following example.

\begin{example}
Consider the maximally correlated state
\[
\rho^{AB}\coloneqq\frac{1}{2}\ketbra{0}^A\otimes \ketbra{0}^B+\frac{1}{2}\ketbra{1}^A\otimes \ketbra{1}^B
\]
and set $\eps=\frac{1}{4}$. Clearly, we cannot discard any eigenvalues from the marginals $\rho^A$ and $\rho^B$, and hence the two concentration protocols on the $A$ and $B$ systems together produce no pure states. However, if Alice were to send the system $A$ to Bob via a dephasing channel with operational elements $\ketbra{0}$ and $\ketbra{1}$, then Bob could apply the following unitary:
\[
\ketbra{0}^A\otimes \I^{B}+\ketbra{1}^A\otimes X^B
\]
where $X$ is the quantum NOT (i.e., Pauli $X$) operator. Clearly, this allows Bob to extract one qubit pure state. Thus, this example demonstrates that introducing classical communication between the two parties can lead to strictly better rates. 
\end{example}

Notice that the key idea used in the above example is to leverage the \emph{classical} correlations between the systems $A$ and $B$. However in general the $A$ and $B$ systems shared by Alice and Bob will share quantum correlations. Then the idea is that Alice measures her system using a POVM to create a classical-quantum (cq) state, and then send the contents of the classical register created by this measurement to Bob. The hope is that by doing some measurement on his system, Bob should be able to distinguish among the contents of the classical register. If he is able to do this, then he can appropriately map the contents of the classical register to a pure state $\ket{0}$. However, there are several subtle issues that needs to be addressed.

\subsection{A bad protocol}\label{Protocol:dud}
To make things precise, we first consider a protocol which does not work. However, studying this bad protocol eventually leads us to the correct answer. To that end, consider the following `dud' protocol:
\begin{enumerate}
	
    \item Alice has some rank-one POVM $\brak{\Lambda_x}$ with elements labelled by some finite set $\mathcal{X}=\{x\}$. Let $\ket{\varphi_{\rho}}^{ARB}$ be a purification of the state $\rho^{AB}$. (In fact, we would not need $R$ at this point, but working with pure states makes equations more compact.) Now consider the isometry from $A$ to $XA$:
    \[
    V:=\sum_{x}\ket{x}^X\sqrt{\Lambda_x^A}\;.
    \]
    Alice can simulate the action of this isometry on the system $A$ by borrowing the pure state $\ket{0}^{X}$ and then acting an appropriate unitary on the $AX$ register of the state 
    \[
    \ket{0}^{X}\ket{\varphi_{\rho}}^{ARB}
    \]
    Since the POVM is rank-one, this action produces the state
    \[
     \sum_x \sqrt{P_X(x)}\ket{x}^X\ket{\psi_x}^A\ket{\phi_x}^{RB}\;,
    \]
    with $P_X(x)=\Tr[\Lambda_x^A\ \rho^A]$.
    
    \item Next, Alice can simply condition on each $x$ in the system $X$ and map each $\ket{\psi_x}^A$ to the state ${\ket{0}}^A$ by applying the following controlled unitary
    \[
    \sum_x\ketbra{x}^X\otimes U_x^{A\to A}\;,
    \]
    where for all $x\in\mathcal{X}$
    \[
    U_x\ket{\psi_x}^A= \ket{0}^A\;.
    \]
    This allows Alice to recover pure states at the rate $\log d_A$. However, recall that she borrowed $\log d_X$ amount of pure ancilla. We will account for this later by subtracting it from the overall rate.
    
    \item At this point of the protocol, the joint state is
    \[
    \ket{0}^A\otimes\sum_x \sqrt{P_X(x)}\ket{x}^X\ket{\phi_x}^{RB}\;,
    \]
    with system $X$ still with Alice.
    
    \item Next, Alice applies a local dephasing channel on the system $X$, so that the state becomes
    \[
    \ketbra{0}^A\otimes\sum_x P_X(x)\ketbra{x}^X\otimes\ketbra{\phi_x}^{RB}\;.
    \]
    
    \item Next, Alice can apply a second time the local purity concentration protocol on subsystem $X$, obtaining two subsystems, $X_g$ and $X_p$, and distilling further $\log\dim X_p$ bits of local purity. Note that, by construction,
    \[
    \log d_{X_p}= \log d_X-\widetilde{H}_{\max}^{\eps}(X)\;.
    \]
    She does this by considering the subset $\mathcal{S}\subseteq\mathcal{X}$ obtained by throwing away those $x$'s which correspond to the smallest probabilities $P_X(x)$ which add up to at most $\eps$. Thus, the net rate of pure states distilled so far is
    \[
    \log d_A+\log d_{X_p}-\log d_X = \log d_A-\widetilde{H}_{\max}^{\eps}(X)\;,
    \]
    where we discounted the amount $\log d_X$ that Alice borrowed in advance.
    
    \item At this point in the protocol, we note that the global state on $X_gRB$ is close to the state
    \[
  \frac{1}{\sum_{x'\in \mathcal{S}}P_X(x')}\sum_{x'\in \mathcal{S}}P_X(x')\ketbra{x'}^{X_g}\otimes\ketbra{\phi_{x'}}^{RB}\;.
    \]

\item Alice will now send to Bob the contents of the $X_g$ register through the dephasing channel. Note that we must fix the basis in which the dephasing channel acts to the computational basis of the system $X_g$. Again, for simplicity of the notation, let us rename all those $x'$'s which correspond to $x$'s in $\mathcal{S}$ by classical symbols $y$, and rename the distribution $P_X$ conditioned on $\mathcal{S}$ as $P_Y$. Thus, the state on $YB$ (ignoring $R$) may now be written as
\[
\sum_{y}P_Y(y)\ketbra{y}^Y\otimes \phi_{y}^{B}\;,
\]
where $\phi^B_y$ denotes the mixed state $\Tr_R[\ketbra{\phi}_y^{RB}]$.

\item At this stage, Bob would like to measure the $B$ system so as to produce a guess about the contents of the system $Y$, and apply a controlled unitary on $Y$ based on this guess to map it to a pure state. This strategy will not work however, since, in general, Bob is not able to distinguish among all $\abs{\mathcal{S}}$ states $\phi_y^B$. Instead, Bob hashes the $y$'s into bins randomly. The hope is that if the size of each bin is small enough, then conditioned on the bin index, Bob will be able to distinguish among the quantum states associated with that bin. For this to work, we must ensure that each bin contains the same number of symbols and that for each bin, there exists a POVM which can distinguish amongst the $\phi_y$ states corresponding to the symbols in that bin.

\item The above claim is technically involved, since usual binning strategies (using random binning or $2$-universal hash functions) do not ensure that each bin has the same number of elements. This is especially hard in the one-shot setting, since we cannot leverage concentration bounds. To overcome this issue, in \cref{claim:permutation_hash} we prove that such an appropriate binning strategy indeed exists, and that, corresponding to each bin, there exists an appropriate decoding POVM. Our new lemma uses \emph{random permutations} instead of $2$-universal hash functions for the binning, which turns out to be suitable for our purposes. Details can be found in \cref{sec:Technicaltools}. We show that each bin can be at most of size 
\[
2^{I_H^{\eps}(X:B)_{\rho^{XB}}}\;.
\]
(The formal definition of the above entropic quantity will be given in \cref{sec:entropies}; for the moment, suffice it to say that $I_H^{\eps}(X:B)_{\rho^{XB}}$ is a quantum mutual information--like quantity.) Note that this implies that Bob can only turn the \emph{intra-bin} indices into pure states, and not the bin indices themselves, since he can only condition on each bin index and then distinguish among the contents of that bin. The rate at which Bob produces pure states at this stage is then
\[
I_{H}^{\eps}(X:B)_{\rho^{XB}}\;.
\]

\item One can also show that the decoding strategy used above does not perturb the state on the system $B$ too much. Thus Bob can finally use the purity concentration protocol of what is left locally, that is, \cref{thm:purityconc} on his system $B$, and finish the protocol, thus producing pure states at a net rate of
\[
I_{H}^{\eps}(X:B)_{\rho^{XB}}+\log d_B-\widetilde{H}_{\max}^{\eps}(B)\;,
\]
and a total rate of
\[
\log d_A-\widetilde{H}_{\max}^{\eps}(X)+
I_{H}^{\eps}(X:B)_{\rho^{XB}}+\log d_B-\widetilde{H}_{\max}^{\eps}(B)\;.
\]
\end{enumerate}

\subsection{Issues with the bad protocol and a way around them}

An obvious issue with the above protocol is that we have no way to bound the term $\widetilde{H}_{\max}^{\eps}(X)$ in terms of entropic quantities computed with respect to the original state $\rho^{AB}$. To remedy this situation, we will use the measurement compression theorem.

The original measurement compression theorem, due to Winter \cite{Winter_meascomp}, takes as input $n$ tensor copies of the state $\rho^{AB}$ and the POVM $\Lambda^A=\brak{\Lambda^A_x}$ and produces as output an integer $K$ and a class of POVMs $\brak{\Gamma(k)=\brak{\Gamma_{\ell}(k)}~|~k\in [K]}$, where each POVM $\Gamma(k)$ is defined on the system $A^{\otimes n}$ and induces a distribution $P_{L|k}$ upon measuring $\rho^{\otimes n}$. When quantum side-information $B$ is available at the receiver, an extension of Winter's protocol~\cite{Wilde_etal_measurementcomp} states that, as long as 
\begin{align*}
    \frac{1}{n}\log L &> I(X:RB)\;, \\
    \frac{1}{n} \left(\log K + \log L \right)&> H(X)\;,
\end{align*}
the class of POVMs $\brak{\Gamma(k)}$ \emph{faithfully simulates} the action of $\Lambda^{\otimes n}$ on $\rho^{\otimes n}$. What this means is, the distribution produced when $\Lambda^{\otimes n}$ is used to measure $\rho^{\otimes n}$ is almost statistically indistinguishable from the distribution produced by the following experiment:

\begin{enumerate}
    \item pick $k\in [K]$ uniformly at random;
    \item measure $\rho^{\otimes n}$ with POVM $\Gamma(k)$ and obtain outcome $\ell$;
    \item post-process $k$ and $\ell$ into an $x$.
\end{enumerate}

\noindent The entropic quantities above are computed with respect to the following state:
\[
\sum\ketbra{x}^X\otimes \Tr_A\left[\Lambda_x^A\left(\ketbra{\varphi_{\rho}}^{ABR}\right)\right]\;,
\]
where $\ket{\varphi_{\rho}^{ABR}}$ is a purification of $\rho^{AB}$.
Devetak's idea \cite{Devetak_purity}, roughly speaking, is to use one of the smaller POVMs produced by the measurement compression theorem to measure $\rho^{\otimes n}$ instead of using $\Lambda^{\otimes n}$. This works since the set of `good' outcomes for any $\Gamma(k)$ is of size at most 
\[
2^{nI(X:RB)}\leq 2^{nH(A)}\;.
\]
This allows Devetak to bound the size of the classical register that Alice needs to send to Bob. However, there are further technical issues here.

A closer look at the measurement compression theorem shows us that each \emph{compressed} POVM $\Gamma(k)$ consists of a set of `good' outcomes, which is of size $2^{nI(X:RB)}$ and a `bad' outcome $\bot$, which occurs with probability at most $\eps$. Let us denote the POVM element which corresponds to the outcome $\bot$ as 
\[
\Gamma_{\bot}(k)\coloneqq \I^A-\sum_{\ell}\Gamma_{\ell}(k)
\]
Observe that $\Gamma_{\bot}(k)$ will in general not be a rank-one matrix, which in turn means that $\Gamma(k)$ will not be a rank-one POVM. Recall that we require our POVMs to be rank-one. Instead, one has to consider the rank-one elements in the eigendecomposition of $\Gamma(k)$ and add them as individual POVM elements. This defeats the purpose of using the measurement compression theorem to bound the number of outcomes in the first place!

To address this issue, Devetak heavily relies on tools which are only available when $n$ tensor copies of the system $A$ are given, i.e., the properties of typical subspaces that arise from the asymptotic i.i.d. assumptions. He shows that for sufficiently large $n$, the space $A^{\otimes n}$ can be decomposed in $A_1\otimes A_2$, in such a way that the entropy of Alice's state restricted to the subspace $A_1$ is at most $n \eps$, and the compressed POVM $\Gamma(k)$ is rank-one on $A_2$, while incurring in a vanishingly small error. Since one needs at most $d$-many elements to complete the description of the POVM, one can then bound the number of outcomes by $2^{nH(A)}$. This idea is not straightforward to implement in the one-shot setting.

A further issue is that even though the compressed POVM $\Gamma(k)$ has fewer good outcomes, it needs to preserve the classical correlations between the systems $A^{\otimes n}$ and $B^{\otimes n}$. What this means is that, supposing $\Gamma(k)$ produces the classical system $Y$ as output, the mutual information between $Y$ and $B^{\otimes n}$ should be at least the mutual information $I(X:B)$. Devetak used a simple derandomization argument to show that such a $\Gamma(k)$ indeed exists.

As mentioned earlier, the above arguments are not easy to emulate in the one-shot setting. Hence our approach is slightly different from Devetak's, in the sense that we do not rely on concentration arguments to show that there exists a good compressed POVM $\Gamma(k)$ with a small number of good outcomes, which also preserves the classical correlations. Firstly, we require a one-shot measurement compression theorem. This was recently proved by Chakraborty, Padakandla and Sen in \cite{ChakrabortyPadakandlaSen_22}. Next, we show in \cref{lem:povm} below that there exists at least one sub-normalized POVM, which preserves the classical correlations between the two systems, as measured in terms of the smoothed hypothesis testing mutual information. This step is hard since chain rules, readily available in the case of the Shannon mutual information, are not known for this quantity.

Next, we extend this sub-POVM to a full rank-one POVM $\Gamma(k)$ by extending the set of outcomes using the eigendecomposition of the POVM element $\Gamma_{\bot}(k)$. Note that this blows up the set of outcomes to a set which contains at least as many indices as the dimension of the underlying space. This is because of the additional outcomes which together correspond to the bad outcome $\bot$. However, we mitigate this issue by leveraging the fact that all these bad outcomes together have probability at most $\eps$. The key idea is that instead of using the set of indices with the lowest probabilities which add up to $\eps$ for the extracting the pure states locally at Alice's end (as in step~5 of the dud protocol in \cref{Protocol:dud}), we instead use the set of \emph{bad} outcomes of our POVM. This allows us to distill local purity at Alice's end at the rate 
\[
\log d_A -I_{\max}^{\eps}(X:RB)\;,
\]
where the quantity $I_{\max}^{\eps}(X:RB)$ can be bounded from above by $H_{\max}^{\eps}(A)$, and hence also by $\widetilde{H}_{\max}^{\eps}(A)$, as required. Details can be found in \cref{sec:MainProtocol}.

\begin{remark}
    It may seem to the reader that one need not have considered the original POVM $\Lambda$ at all, since in the final protocol we use only the compressed POVM provided by the one-shot measurement compression theorem. However, one should note that $\Lambda$ is the maximizer for the smooth max mutual information $I_{\max}^{\eps}(X:RB)$, which quantifies the amount of classical correlation between the two systems. However, one needs to introduce the compressed POVM since using $\Lambda$ off the shelf does not allow us to bound the classical communication from Alice to Bob with a meaningful quantity.
\end{remark}

\section{Technical lemmas}\label{sec:Technicaltools}

In this section we describe the main technical lemmas used in achieving one-shot rate for purity distillation protocol. Since the proofs are technically involved, we defer them to \cref{sec:Main_proofs}.

\begin{remark}
    Throughout this paper, we will use the notation $\eps_0$ to denote some constant time the fourth root of $\eps$, which we use to denote the error bounds in most of our theorems, i.e.,
    \[
    \eps_0\coloneqq O(\eps^{{1/4}})
    \]
    This means that we will often abuse notation, and denote two different quantities such as $2\eps^{1/4}$ and $100\eps^{1/4}$, by the same notation, namely $\eps_0$. This will allow us to present the proofs in a much cleaner manner. In a similar vein, we will sometimes use the notation $\eps'$ to denote $O(\eps)$. 
\end{remark}

\subsection{Choosing a POVM}

\begin{lemma}\label{lem:povm}
Given a bipartite state $\rho^{AB}$ and a rank-one POVM $\brak{\Lambda_x^A}$ with outcomes in the set $\mathcal{X}$,  consider the post measurement state
\[
\rho^{XRB}\coloneqq \sum_x \ketbra{x}^X\otimes \Tr_A\left[(\Lambda_x^A\otimes\I^{RB})\ \ketbra{\varphi_{\rho}}^{ARB}\right]
\]
where $\ket{\varphi}^{ARB}$  is a purification of $\rho^{AB}$. Then, there exists a rank-one POVM $\brak{\Tilde{\Lambda}_y^A}$ with outcomes in the set $\mathcal{Y}$ such that:
\begin{enumerate}
    \item for any $\eps>0$, there exists a subset $\mathcal{S}\subset \mathcal{Y}$ such that
    \begin{align*}
		\abs{\mathcal{S}}&\leq 2^{I_{\max}^{\eps}(X:RB)_{\rho^{XRB}}}\\
        \intertext{and}
        \Pr_{P_Y}[\mathcal{S}] &\geq 1-\eps_0\;, 
    \end{align*}
    where $P_Y$ is the distribution induced by $\Tilde{\Lambda}$ on $\mathcal{Y}$ upon measuring $\rho^{A}$;
    \item denoting by $\Pi_{\mathcal{S}}^{Y}$ the projector onto the space spanned by the vectors corresponding to the elements in $\mathcal{S}$ and defining the corresponding projected and renormalized state as
    \[
    \sigma^{YRB}\coloneqq \frac{1}{\Tr[\Pi_{\mathcal{S}}\ \Tilde{\Lambda}(\varphi_{\rho})]} \Pi_{\mathcal{S}}^{Y}\cdot \Tilde{\Lambda}^{A}(\varphi_{\rho}^{ARB})\;,
    \]
    then
    \[
    I_H^{\sqrt{\eps_0}}(Y:B)_{\sigma}\geq I_{H}^{\eps_0/2}(X:B)_{\rho}-O(1)+O(\log(1-\eps_0))\;.
    \]
\end{enumerate}
\end{lemma}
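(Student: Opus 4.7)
The plan is to apply the one-shot measurement compression theorem of Chakraborty, Padakandla and Sen~\cite{ChakrabortyPadakandlaSen_22} to the POVM $\{\Lambda_x\}$ acting on $\rho^A$, using $RB$ as quantum side information. This produces a family of sub-POVMs $\{\Gamma(k)=\{\Gamma_\ell(k)\}_{\ell\in \mathcal{L}}\}_{k\in[K]}$ on $A$ with rank-one good elements and $\log|\mathcal{L}|\le I_{\max}^{\eps}(X:RB)_{\rho}$, together with a deterministic post-processor $(k,\ell)\mapsto x$, such that the randomized measurement obtained by sampling $k\in[K]$ uniformly, applying $\Gamma(k)$, and post-processing the outcome simulates $\Lambda$ on $\rho^A$ faithfully in the cq-sense on $LRB$, up to trace-distance $O(\eps)$. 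The defect elements $\Gamma_\bot(k):=\I^A-\sum_{\ell}\Gamma_\ell(k)$ likewise satisfy $\Tr[\Gamma_\bot(k)\,\rho^A]\le O(\eps)$ on average over $k$.

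Next I would extract a single ``typical'' index $k^\ast\in[K]$ by a derandomization argument. By Markov's inequality applied to both the simulation error and the defect mass, a positive fraction of indices $k$ simultaneously satisfy that the induced cq-state $\sigma_k^{LB}$ is $O(\eps_0)$-close in trace distance to $\rho^{XB}$ (after the classical relabelling $\ell\mapsto x$), and that $\Tr[\Gamma_\bot(k)\rho^A]\le \eps_0$. Fix any such $k^\ast$. Using continuity of the hypothesis-testing relative entropy under small trace-distance perturbations, monotonicity of $I_H^\eps$ under classical post-processing of the $X$ register, and adjustment of the smoothing parameter, one obtains
\[
I_H^{\sqrt{\eps_0}}(L:B)_{\sigma_{k^\ast}}\ \ge\ I_H^{\eps_0/2}(X:B)_{\rho^{XB}}-O(1).
\]
This is the most delicate step, and accounts for the additive $O(1)$ loss in the statement.

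The sub-POVM $\{\Gamma_\ell(k^\ast)\}_{\ell\in\mathcal{L}}$ is rank-one by construction but incomplete. I would complete it to a full rank-one POVM by spectrally decomposing the defect $\Gamma_\bot(k^\ast)=\sum_j \mu_j\ketbra{e_j}$ and adjoining the rank-one elements $\tilde\Lambda_j:=\mu_j\ketbra{e_j}$, indexed by an auxiliary set $\mathcal{J}$ of size at most $d_A$. Set $\tilde\Lambda_\ell:=\Gamma_\ell(k^\ast)$ for $\ell\in\mathcal{L}$, $\mathcal{Y}:=\mathcal{L}\sqcup\mathcal{J}$, and $\mathcal{S}:=\mathcal{L}$. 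Item~1 of the lemma then follows immediately: $|\mathcal{S}|=|\mathcal{L}|\le 2^{I_{\max}^{\eps}(X:RB)_{\rho}}$ by the compression guarantee, while $\Pr_{P_Y}[\mathcal{S}]=1-\Tr[\Gamma_\bot(k^\ast)\rho^A]\ge 1-\eps_0$ by the choice of $k^\ast$.

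For Item~2, the projected and renormalized state $\sigma^{YRB}$ coincides (up to the deterministic relabelling $\ell\mapsto x$) with the conditioning of $\sigma_{k^\ast}^{LRB}$ on the event $\mathcal{S}$; the renormalization factor is at least $1-\eps_0$, producing the $O(\log(1-\eps_0))$ correction. Combining this with data-processing of $I_H^\eps$ under the relabelling and with the displayed inequality above yields the claimed bound. The main obstacle throughout is the absence of a usable chain rule for $I_H^\eps$: one cannot split the mutual information across the post-processor $(k,\ell)\mapsto x$ into ``key'' and ``outcome'' parts, so the preservation of classical correlation must be argued through trace-distance continuity and careful matching of smoothing parameters, which is what forces the additive $O(1)$ loss rather than a vanishing one.
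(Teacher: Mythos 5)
Your overall architecture matches the paper's: invoke the one-shot measurement compression theorem, single out one index $k^{\ast}$, and complete the resulting rank-one sub-POVM by spectrally decomposing the defect element $\Gamma_{\bot}(k^{\ast})$; Item~1 and the completion step are handled essentially as in the paper. The gap is in Item~2. You derandomize by claiming that, for a typical single $k$, the relabelled state $\sigma_{k}^{\widetilde{X}B}$ is $O(\eps_0)$-close in trace distance to $\rho^{XB}$, and then transfer $I_H$ by continuity. That closeness claim is false. The simulation guarantee of measurement compression controls $\norm{\rho^{XRB}-\E_k[\sigma_k^{\widetilde{X}RB}]}_1$, i.e.\ the distance to the state \emph{averaged} over $k$; it does not control $\E_k\norm{\rho^{XB}-\sigma_k^{\widetilde{X}B}}_1$, so Markov's inequality has nothing to bite on. Worse, no individual $k$ can have this property: a single compressed POVM has only $\abs{\mathcal{L}}\approx 2^{I_{\max}^{\eps}(X:RB)}$ outcomes, so the distribution it induces on $\mathcal{X}$ is supported on far fewer points than $\operatorname{supp}(P_X)$ in general, and its total-variation distance to $P_X$ is then close to maximal. (This support mismatch is precisely why shared randomness over $k$ is needed for faithful simulation in the first place.) Consequently the step yielding $I_H^{\sqrt{\eps_0}}(L:B)_{\sigma_{k^{\ast}}}\ge I_H^{\eps_0/2}(X:B)_{\rho}-O(1)$ has no foundation; note also that $I_H^{\eps}(A:B)$ compares the state against its \emph{own} marginals, so even genuine trace-distance closeness would not give the bound without separately controlling how the marginals move.

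The paper circumvents this by never asking a single $k$ to reproduce $\rho^{XB}$. It introduces a conditional hypothesis-testing mutual information $I_H^{\eps_0}(L:B|K)_{\tsigma}$ on the truncated averaged state, and lower-bounds it by $I_H^{\eps_0/2}(X:B)_{\rho}-O(1)+O(\log(1-\eps_0))$ using the operator inequalities supplied by the measurement compression theorem, namely $\frac{1}{L}\sum_{\ell\in\mathcal{L}'(k)}\sigma_{k\ell}^{B}\le(1+\eps_0)\rho^{B}$ and $\tsigma^{X}\le\frac{1+\eps_0}{1-\eps_0}\rho^{X}$, together with data processing; the closeness of the \emph{averaged} state to $\rho^{XB}$ is used only to shift the smoothing parameter of the optimal test from $\eps_0$ to $\eps_0/2$. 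A single good $k$ is then extracted not from trace distances but by averaging the inequality $2^{-I_H^{\eps_0}(L:B|K)_{\tsigma}}\ge(1-\sqrt{\eps_0})\sum_k\widetilde{Q}(k)\,2^{-I_H^{\sqrt{\eps_0}}(L:B|k)}$ over a high-probability set of indices on which the optimal test still accepts with probability at least $1-\sqrt{\eps_0}$. To repair your argument you would need to replace the continuity step with an argument of this kind.
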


\begin{remark}
Note that we place no restrictions the size of the set $\mathcal{Y}$.
\end{remark}

\begin{proof}
	See \cref{sub:proof_lem:POVM}.
\end{proof}


\subsection{Dividing the domain}

\begin{lemma} \label{lem:slepianwolf}
Given the control state
\[
\rho^{XB}= \sum_xP_X(x)\ketbra{x}^X\otimes \rho_x^B
\]
and a value $\eps\in(0,1)$, there exists a bijection $\sigma: \mathcal{X}\to [M]\times [N]$, such that:
\begin{enumerate}
    \item $M \times N = \abs{\mathcal{X}}$\;; 
    \item $\log N < I_{H}^{\eps}(X:B)+2\log \eps$\;;
    \item suppose the state after applying the bijection is given by
    \[
    \sigma^{MNB}:=\sum_{m,n} P_{MN}(m,n) \ketbra{m,n}^{MN}\otimes \rho_{mn}^B\;;
    \]
    then there exists, for all $m\in [M]$, a POVM $\brak{\Theta_{n}(m)}$ with outcomes labeled by $n\in[N]$,  such that
    \[
    \sum_{m,n}P_{MN}(m,n)\norm{\rho^{B}_{mn}-\sqrt{\Theta_n(m)}\rho^{B}_{mn}\sqrt{\Theta_n(m)}}_1 \leq \eps_0\;,
    \]
\end{enumerate}
\end{lemma}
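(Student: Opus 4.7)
My approach is a Slepian--Wolf--style binning argument with a position-based (Hayashi--Nagaoka) quantum decoder, using a \emph{random permutation} in place of random hashing or $2$-universal binning, precisely so that the resulting bins have exactly $N$ elements each, as required by the bijection $\sigma$ in the statement.

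First, I would unpack the operational content of $I_H^{\eps}(X:B)$: there exists $0\le\Lambda\le \I$ on $XB$ with $\Tr[\Lambda\rho^{XB}]\ge 1-\eps$ and $\Tr[\Lambda(\rho^X\otimes\rho^B)]\le 2^{-I_H^{\eps}(X:B)}$, and since $\rho^X$ is diagonal one may take $\Lambda=\sum_x \ketbra{x}^X\otimes \Lambda_x^B$, giving the pointwise constraints $\sum_x P_X(x)\Tr[\Lambda_x \rho_x^B]\ge 1-\eps$ and $\sum_x P_X(x)\Tr[\Lambda_x\rho^B]\le 2^{-I_H^{\eps}(X:B)}$. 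I would then apply a flattening preprocessing to $\rho^{XB}$, replacing $P_X$ by a nearby distribution that is approximately uniform on its effective support; this is what will later let me bound the \emph{unweighted} sum $\sum_{x'}\Tr[\Lambda_{x'}\rho^B]$ in terms of the $P_X$-weighted hypothesis-testing constraint.

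Next, I would draw a uniformly random permutation $\pi$ of $\mathcal{X}$ and set $\sigma(x):=(\lfloor\pi(x)/N\rfloor,\;\pi(x)\bmod N)$ with $M:=|\mathcal{X}|/N$. Within each bin $m\in[M]$ I would use the Hayashi--Nagaoka decoder
\[
\Theta_n(m):=S_m^{-1/2}\,\Lambda_{\sigma^{-1}(m,n)}\,S_m^{-1/2},\qquad S_m:=\sum_{n'\in[N]}\Lambda_{\sigma^{-1}(m,n')}.
\]
Applying the Hayashi--Nagaoka inequality for each $x$ with $(m,n)=\sigma(x)$, summing against $P_{MN}(m,n)=P_X(x)$, and taking expectation over $\pi$ (using that two distinct symbols lie in a common bin with probability $(N-1)/(|\mathcal{X}|-1)$) bounds the expected average error by $2\eps+\tfrac{4(N-1)}{|\mathcal{X}|-1}\sum_{x'}\Tr[\Lambda_{x'}\rho^B]$. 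Under flattening this is at most $2\eps+4N\cdot 2^{-I_H^{\eps}(X:B)}$, and the choice $\log N<I_H^{\eps}(X:B)+2\log\eps$ forces the second term to be $O(\eps^2)$. A standard derandomization then extracts a single $\pi$, yielding a concrete bijection $\sigma$ and POVMs $\{\Theta_n(m)\}$ with the desired properties.

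The principal obstacle I anticipate is precisely the flattening step: $\sigma$ must be a bijection on \emph{all} of $\mathcal{X}$, including any atoms of $P_X$ that badly violate uniformity, so one must either restrict the random permutation to a ``flat'' high-probability subset and assign the residual atoms arbitrarily to bins (absorbing their total mass into the error budget), or design a more careful probabilistic construction that simultaneously produces equal-sized bins and controls the weighted bound. This bookkeeping is what is responsible for degrading the clean $O(\eps)$ error from the core Hayashi--Nagaoka analysis above to the stated $\eps_0=O(\eps^{1/4})$ in the lemma.
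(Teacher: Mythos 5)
Your overall architecture matches the paper's: a uniformly random permutation of $\mathcal{X}$ into an $M\times N$ array (with exactly the collision probability $(N-1)/(\abs{\mathcal{X}}-1)$ that the paper isolates as \cref{claim:permutation_hash}), a per-bin quantum decoder built from the classical-quantum optimizer $\{\Lambda_x\}$ of $I_H^{\eps}(X:B)$, an expectation over permutations followed by derandomization, and a final gentle-measurement step converting the decoding-error bound into the trace-norm disturbance bound of item~3. You differ in two places. First, the decoder: you use the Hayashi--Nagaoka construction $\Theta_n(m)=S_m^{-1/2}\Lambda_{\sigma^{-1}(m,n)}S_m^{-1/2}$ and its operator inequality, whereas the paper uses sequential measurement with the operators $\Pi_{a_i^m}$ and Sen's non-commutative union bound (\cref{fact:noncommutative_union_bound}), extracting the POVM afterwards from the left polar decomposition of the product $\Pi_{a_\ell^m}(\I-\Pi_{a_{\ell-1}^m})\cdots(\I-\Pi_{a_1^m})$. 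Both are valid; yours has the minor advantage that $\{\Theta_n(m)\}$ is manifestly a sub-POVM and that the decoding error before the gentle-measurement step is $O(\eps)$ rather than $O(\sqrt{\eps})$, comfortably within the $\eps_0=O(\eps^{1/4})$ budget.

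Second, and this is the one soft spot: to control the unweighted interference sum $\sum_{x'}\Tr[\Lambda_{x'}\rho^B]$ you invoke a ``flattening'' of $P_X$, and you correctly flag that this conflicts with the requirement that $\sigma$ be a bijection on all of $\mathcal{X}$ (one-shot flattening generically requires splitting atoms, i.e.\ enlarging the alphabet). The paper avoids flattening entirely: it prunes the $\eps$-tail of $P_X$ to obtain $P'_X$, observes that every surviving atom satisfies $P'_X(x')\ge \eps/\abs{\mathcal{X}}$ (this is the content of the bound $(\hmax')^{\eps}(X)\le\log(\abs{\mathcal{X}}/\eps)$ from \cref{prop:Hmax_inequalities}), and then multiplies and divides by $P'_X(x')$ to convert the unweighted sum into the $P_X$-weighted one controlled by the hypothesis test, at the cost of a factor $\abs{\mathcal{X}}/\eps$. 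Combined with the collision probability $\approx N/\abs{\mathcal{X}}$ this yields $\frac{N}{\eps}\,2^{-I_H^{\eps}(X:B)}$, which is precisely why the rate condition in item~2 carries the $2\log\eps$ slack. If you replace your flattening step by this multiply-and-divide argument --- no flatness is needed, only the pointwise lower bound on the pruned support, with the discarded $\eps$-tail absorbed into the error budget exactly as you suggest --- your proof closes completely.
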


\begin{proof}
See \cref{subsec:proof_lem:slepianwolf}.	
\end{proof}

\begin{corollary}\label{corol:bobsprotocol}
Given the state 
\[
 \sigma^{MNB}= \sum_{m,n} P_{MN}(m,n) \ketbra{m,n}^{MN}\otimes \rho_{mn}^B
\]
as in \cref{lem:slepianwolf}, there exists a unitary $W^{MNB}$ such that both conditions, i.e.,
\[
\norm{\Tr_{MB}\left(W^{MNB}\cdot \sigma^{MNB}\right)-\ketbra{0}^N}_1 \leq \sqrt{\eps_0}\;,
\]
and
\[
\norm{\Tr_{MN}\left(W^{MNB}\cdot \sigma^{MNB}\right)-\sum_{m,n}P_{MN}(m,n)\rho^B_{mn}}_1 \leq \eps_0\;,
\]
both hold simultaneously.
\end{corollary}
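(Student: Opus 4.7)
The plan is to construct $W$ as the composition $W := U^\dagger\,S\,U$ of three unitary stages: coherent decoding, a controlled shift, and uncomputation. Concretely, for each $m$ use Naimark dilation to pick a unitary $U_m$ on $B\otimes \hat N$ (with $\hat N$ an ancilla of dimension $|N|$ initialized to $\ket 0^{\hat N}$ and absorbed into $B$) satisfying
\[
U_m\ket\psi^B\ket 0^{\hat N} = \sum_{n'}\sqrt{\Theta_{n'}(m)}\,\ket\psi^B\ket{n'}^{\hat N},
\]
and set $U:=\sum_m \ketbra m^M\otimes U_m$. Next, take the controlled cyclic shift $S:=\sum_{\hat n}X_N^{\hat n}\otimes\ketbra{\hat n}^{\hat N}$, with $X_N^{\hat n}\ket n = \ket{n\ominus\hat n}$, so that $S$ maps $\ket n^N\ket n^{\hat N}\mapsto \ket 0^N\ket n^{\hat N}$ whenever the decoded outcome matches the true label; finally $U^\dagger$ uncomputes the decoding.

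For the error analysis, the Slepian--Wolf hypothesis combined with $\|\rho-\sqrt{\Theta}\rho\sqrt{\Theta}\|_1\ge \Tr[\rho(I-\Theta)]$ yields
\[
\sum_{m,n}P_{MN}(m,n)\bigl(1-\Tr[\Theta_n(m)\rho^B_{mn}]\bigr)\le \eps_0.
\]
Working with the purification $\ket\sigma^{MNBR}=\sum_{m,n}\sqrt{P_{MN}(m,n)}\ket{m,n}^{MN}\ket{\psi_{mn}}^{BR}$, I would split $U\ket\sigma\ket 0^{\hat N}$ into the $\hat n=n$ ``correct branch'' $\ket{\phi_1}$ and an orthogonal ``incorrect branch'' $\ket{\phi_2}$ with $\|\phi_2\|^2\le \eps_0$. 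Applying $S$ and then $U^\dagger$ to $\ket{\phi_1}$ restores $N$ to $\ket 0$ and, using that $\sqrt{\Theta_n(m)}\ket{\psi_{mn}}^{BR}\ket n^{\hat N}$ differs from $U_m\ket{\psi_{mn}}^{BR}\ket 0^{\hat N}$ by a vector of norm at most $\sqrt{\eps_{mn}}$, yields a branch-by-branch error that combines (by orthogonality across $(m,n)$) to $\|W\ket\sigma\ket 0^{\hat N}-\ket\tau\|\le 2\sqrt{\eps_0}$ for the target $\ket\tau:=\ket 0^N\ket 0^{\hat N}\otimes\sum_{m,n}\sqrt{P_{MN}(m,n)}\ket m^M\ket{\psi_{mn}}^{BR}$. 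Tracing out $M,B,\hat N$ and using a Fuchs--van de Graaf estimate delivers condition 1 with error $O(\sqrt{\eps_0})$.

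For condition 2 the tighter $\eps_0$ bound does not follow from the pure-state overlap, which only gives $\sqrt{\eps_0}$; it requires a direct mixed-state computation of the $B$-marginal of $W(\sigma^{MNB}\otimes\ketbra 0^{\hat N})W^\dagger$. I would carry this out by first tracing $N$ after $S$, observing that only the $\hat n=\hat n'$ components survive, so that each $(m,n)$-block reduces to the measurement-instrument output $\sum_{\hat n}\sqrt{\Theta_{\hat n}(m)}\rho^B_{mn}\sqrt{\Theta_{\hat n}(m)}\otimes\ketbra{\hat n}^{\hat N}$; conjugating by $U_m^\dagger$ turns this into a pinched version of $\rho^B_{mn}\otimes\ketbra 0^{\hat N}$ with respect to $\{U_m^\dagger\ketbra{\hat n}^{\hat N}U_m\}_{\hat n}$, whose $B$-marginal can be compared against $\rho^B_{mn}$ using the averaged trace-distance estimate of \cref{lem:slepianwolf} directly, without any square-root loss. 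Averaging over $(m,n)$ with weights $P_{MN}(m,n)$ then yields condition 2 with error $\eps_0$. Threading the averaged trace-distance guarantee of \cref{lem:slepianwolf} through the coherent decoding, controlled shift, and uncomputation so that the $B$-marginal deviation scales linearly in $\eps_0$ rather than as $\sqrt{\eps_0}$ is the main technical obstacle.
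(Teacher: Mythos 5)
Your construction ($W=U^\dagger S U$ via Naimark dilation, controlled shift, and uncomputation) is genuinely different from the paper's, and it does deliver condition 1 with error $O(\sqrt{\eps_0})$. But the gap you flag at the end for condition 2 is real and is not a technicality you can thread through: it is caused by the uncomputation step itself. The only handle you have on the correct branch is the pure-state overlap $\bra{\psi_{mn}}\bra{0}U_m^\dagger\sqrt{\Theta_n(m)}\ket{\psi_{mn}}\ket{n}=\Tr[\Theta_n(m)\rho_{mn}]=1-\delta_{mn}$, so after $U_m^\dagger$ the correct branch differs from the target vector by norm $\Theta(\sqrt{\delta_{mn}})$, and the pinching/gentle-measurement estimate for the $B\hat N$-marginal is likewise $\Theta(\sqrt{\delta_{mn}})$. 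Averaging can only give $O(\sqrt{\eps_0})$. Dropping $U^\dagger$ would restore the linear bound on the $B$-marginal (the wrong-outcome part of the instrument output is psd with trace $\le\delta_{mn}$, and the right-outcome part is handled by the lemma's averaged trace-distance bound), but then the borrowed ancilla $\hat N$ of $\log N$ qubits is left dirty, which both changes the corollary's statement ($W$ is supposed to act on $MNB$ only) and would cost $\log N$ in the purity accounting of the main protocol --- exactly cancelling the gain $\log N$ that this step is meant to produce.

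The paper avoids this tension by not introducing an ancilla at all: for each $m$ it defines $W^{NB}(m)=\sum_{n',n}\ket{n'}\bra{n}^N\otimes\Gamma^B_{n,n'}(m)$ with $\Gamma_{n,0}(m)=\sqrt{\Theta_n(m)}$, completable to a unitary since $\sum_n\Theta_n(m)=\I$. The decoding outcome is written \emph{into} the $N$ register (success maps $\ket{n}^N\mapsto\ket{0}^N$), so no uncomputation is needed. Then $W\cdot\sigma$ splits into the $\ketbra{0}^N$ block, equal to $\sum_{m,n}P_{MN}(m,n)\ketbra{m}\otimes\ketbra{0}^N\otimes\sqrt{\Theta_n(m)}\rho^B_{mn}\sqrt{\Theta_n(m)}$, plus an error term $\tau_{\mathrm{error}}$ with $\bra{0}^N\tau_{\mathrm{error}}\ket{0}^N=0$. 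Condition 1 follows from $\Tr$ of the main block being $\ge 1-\eps_0$; condition 2 follows linearly because (i) the main block's $B$-marginal is within $\eps_0$ of $\sum P_{MN}\rho^B_{mn}$ by the lemma's averaged trace-distance guarantee, and (ii) $\Tr_{MN}[\tau_{\mathrm{error}}]$ is a sum of operators of the form $\Gamma_{n,n'}\rho^B_{mn}\Gamma_{n,n'}^\dagger$ with $n'\neq 0$, hence positive semidefinite with trace $\le\eps_0$, so its trace norm is $\le\eps_0$ with no square-root loss. If you want to salvage your dilation-based route, you would have to replace the final $U^\dagger$ by a construction of this block-unitary type; as written, your proof of condition 2 does not close.
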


\begin{proof}
See \cref{subsec:proof_corol:bobsprotocol}.	
\end{proof}

\section{The two-party purity distillation protocol}\label{sec:MainProtocol}

In this section, we describe the main purity distillation protocol and prove the achievable one-shot rate as stated in \cref{thm:purity_distillation}. For the sake of convenience, we recall the statement:

\begin{reptheorem}{thm:purity_distillation}[One-way local purity distillation] 
Given a bipartite state $\rho^{AB}$ and a value $\eps\in(0,1)$, 
\[
\kappa_{\eps'}^\to(\rho^{AB})\ge \log d_Ad_B-\widetilde{H}_{\max}^{O(\eps^2)}(A)_{\rho^A}-\widetilde{H}_{\max}^{\eps}(B)_{\rho^B}+D^{\to}_{\eps}(\rho^{AB})+O(\log\eps)-O(1)\;,
\]
where
\begin{itemize}
	\item $\eps'$ is a suitable rational power function of $\eps$;
	\item $D^\to_{\eps}(\rho^{AB}):= \max_{\Lambda} I_{H}^{\eps_0}(X:B)_{\I^B\otimes \Lambda^A(\rho^{AB})}$, where $\eps_0=O(\eps^{1/4})$, and the maximization is over all rank-one POVMs $\Lambda^{A\to X}$ on the system $A$.
\end{itemize}
The above bound can be achieved with an amount of classical communication of at most $H_{\max}^{O(\eps^2)}(A)_{\rho^A}+O(\log \frac{1}{\eps})$.
\end{reptheorem}

 \begin{proof}
The proof of the above theorem consists in the following protocol:
\begin{enumerate}
    \item Alice and Bob start with the $A$ and $B$ parts of the state $\rho^{AB}$ in their possession respectively. In the first step, Alice applies the rank-one POVM $\brak{\Tilde{\Lambda}_y^A}$ given by \cref{lem:povm} on her system $A$ coherently. What this means is that Alice borrows $\log \abs{\mathcal{Y}}$ amount of ancilla and applies the isometry
    \[
      V_1^{A \to YA}:=\sum_y \ket{y}^Y\sqrt{\Tilde{\Lambda}_y}^{A}
    \]
    on the system $A$. Considering a purification $\ket{\varphi_{\rho}}^{ARB}$ of $\rho^{AB}$, this produces the global state
    \begin{align*}
        (V_1 \otimes \I^{RB}) \ket{\varphi_{\rho}}^{ARB}= \sum_y \sqrt{P_Y(y)}\ket{y}^Y\ket{\psi_y}^A\ket{\phi_y}^{RB}\;.
    \end{align*}
    Note that this is only possible since the elements $\Tilde{\Lambda}_y$ are rank-one. Notice that $\brak{\Tilde{\Lambda}_y^A}$ is the compressed POVM as opposed to the POVM mentioned in step 1 of the dud protocol of \cref{Protocol:dud}. Alice then applies a controlled unitary
    \[
    U_1^{AY \to AY}:=\sum_y\ketbra{y}^{Y}\otimes U^{A\to A}_y
    \]
    where, for each $y\in \mathcal{Y}$, $U^{A\to A}_y$ is a unitary such that
    \[
    U^{A\to A}_y\ket{\psi_y}^A = \ket{0}^A\;.
    \]
    This step yields $\log d_A$ amount of purity while using $\log \abs{\mathcal{Y}}$ amount of purity.
    \item In the next step, Alice dephases the $Y$ system, i.e., she measures it in the computational basis to create the global state
    \[
    \tau^{YB}\coloneqq \sum_y P_Y(y)\ketbra{y}^Y\otimes \rho_y^B
    \]
    where we have ignored the system $R$. Note that in this case we define $\rho_y^B$ for every $y\in \mathcal{Y}$ as the reduced state on $B$ conditioned on $y$. Let $\mathcal{S}\subset \mathcal{Y}$ be the set of high probability given by \cref{lem:povm} and let $\Pi^Y_{\mathcal{S}}$ be the projector onto the span of the computational basis vectors corresponding to the elements in $\mathcal{S}$. Then, by \cref{fact:local} there exists a local purity concentration protocol with error at most $O(\eps^{1/8})$ with rate 
    \begin{align*}
        \log \abs{Y_2} &\geq \log \abs{\mathcal{Y}}-\log \abs{\mathcal{S}} \\
        & \ge \log\abs{\mathcal{Y}}-I_{\max}^{\eps}(X:RB)_{\rho^{XRB}}
    \end{align*}

The net purity at the end of this step is then
    \begin{align*}
        & \log d_A -\log \abs{\mathcal{Y}}+\log \abs{\mathcal{Y}}-I_{\max}^{\eps}(X:RB)_{\rho^{XRB}}\\
         = &\log d_A-I_{\max}^{\eps}(X:RB)_{\rho^{XRB}}\;.
    \end{align*}

    \item Alice and Bob are now left with the state
    \[
    \sigma^{Y_1B}\coloneqq\frac{1}{\Tr[\Pi_{\mathcal{S}}\tau]}\left(\I^B\otimes\Pi_{ \mathcal{S}}^{Y}\cdot\tau^{YB} \right)^{Y_1}=\frac{1}{\Tr[\Pi_{\mathcal{S}}\tau]}\sum_{y\in \mathcal{S}}P_Y(y)\ketbra{y}^{Y_1}\otimes \rho_y^B \;,
    \]
    Alice then applies the bijection given by \cref{lem:slepianwolf} to create the state
    \[
    \sigma^{MNB}\coloneqq \sum_{m,n} P_{MN}(m,n) \ketbra{m,n}^{MN}\otimes \rho_{mn}^B
    \]
    where
    \begin{align*}
        \log N &\leq I_H^{\sqrt{\eps_0}}(Y_1:B)_{\sigma^{Y_1B}}+\log \eps_0 \\
        \intertext{and}
        MN &= \abs{Y_1} \\
        &= \abs{\mathcal{S}}
    \end{align*}
    Alice sends the systems $MN$ to Bob through the dephasing channel, which requires at most $\log\abs{\mathcal{S}}\leq I_{\max}^{\eps}(X:RB)_{\rho^{XRB}}$ number of bits. This quantity can be further bounded by $H_{\max}^{O(\eps^2)}(A)$, see Lemma~\ref{lem:Imax_Hmaxtilde}.
    
    \item Finally, after receiving the system $MN$, Bob applies the unitary $W^{MNB}$ given by \cref{corol:bobsprotocol} such that
    \[
\norm{\Tr_{MB}\left(W^{MNB}\cdot \sigma^{MNB}\right)-\ketbra{0}^N}_1 \leq 2\eps_0^{1/8}
\]
to distill $\log N$ amount of purity. 
\item The proof of \cref{corol:bobsprotocol} also tells us that the state on system $B$ after Bob applies the unitary $W^{MNN}$ is $2\eps_0^{1/8}$ away from 
\[
\frac{1}{\Tr[\Pi_{\mathcal{S}}\tau]}\sum_{y\in \mathcal{S}}P_Y(y) \rho_y^B \;.
\]
However, since $\mathcal{S}$ is a set of high probability under $P_Y$, this implies that
\[
\norm{\sum_yP_Y(y)\rho_y^B-\frac{1}{\Tr[\Pi_{\mathcal{S}}\tau]}\sum_{y\in \mathcal{S}}P_Y(y) \rho_y^B }_1\leq O(\eps_0^{1/16})\;,
\]
that is, in the end, Bob has a state which is not far from what he had at the beginning of the protocol, namely, 
\[
\rho^B=\sum_yP_Y(y)\rho_y^B\;.
\]
Thus, Bob can apply the local protocol on the system $B$ and recover $\log d_B-\widetilde{H}_{\max}^{\eps}(B)$ amount of purity with error $O(\eps_0^{1/8})$.
\item Summarizing, the total amount of purity distilled is 
\begin{align*}
    & \log d_Ad_B -I_{\max}^{\eps}(X:RB)_{\rho^{XRB}} - \widetilde{H}_{\max}^{\eps}(B)+I_H^{\sqrt{\eps_0}}(Y_1:B)_{\sigma^{Y_1B}}+\log \eps_0\;.
\end{align*}
Recall from our choice of POVM that
\[
I_H^{\sqrt{\eps_0}}(Y_1:B)_{\sigma^{Y_1B}} \geq I_H^{\eps_0/2}(X:B)_{\rho^{XB}}-O(1)+O(\log (1-\eps_0))\;.
\]
Therefore, the total amount of purity recovered by the protocol is at least
\[
\log d_Ad_B -I_{\max}^{\eps}(X:RB)_{\rho^{XRB}} - \widetilde{H}_{\max}^{\eps}(B)+I_H^{\eps_0/2}(X:B)_{\rho^{XB}}+\log \eps_0(1-\eps_0)-O(1)
\]
which by \cref{fact:smooth_maxinfo_equivalence} and \cref{lem:Imax_Hmaxtilde} is further lower bounded by:
\[
\log d_Ad_B -\widetilde{H}_{\max}^{\eps^2/48}(A)_{\rho^A} - \widetilde{H}_{\max}^{\eps}(B)+I_H^{\eps_0}(X:B)_{\rho^{XB}}+O(\log \eps)-O(1)\;.
\]
\item At each step the protocol made an additive error of at most $O(\eps_0^{1/4})=O(\eps^{1/8})$. Therefore the total error of the protocol is given by $O(\eps^{1/8})$.
\end{enumerate}

This finishes the proof of \cref{thm:purity_distillation}.
\end{proof}

\begin{remark}
Note that in the asymptotic i.i.d. limit, $I_{\max}^{\eps}(X:RB)$ becomes
\begin{align*}
    I(X:RB)_{\rho^{XRB}} &= H(RB)-H(RB|X) \\
    & = H(A)-\sum_x P_X(x)H(RB|x) \\
    & = H(A)
\end{align*}
where the last equality uses the fact that conditioned on each $x$, the state $\left.\rho^{RB}\right|_x$ is pure, since the POVM $\Tilde{\Lambda}$ is rank $1$.
\end{remark}

\section{Proof of main lemmas and corollaries} \label{sec:Main_proofs}

In this section we prove the main technical lemmas and corollaries required mentioned in \cref{sec:Technicaltools} for the purity distillation protocol of \cref{sec:MainProtocol}.

\subsection{Proof of \cref{lem:povm}} \label{sub:proof_lem:POVM}

The proof is subdivided into several parts.

Recall from the measurement compression theorem \cite{Winter_meascomp,Wilde_etal_measurementcomp, ChakrabortyPadakandlaSen_22} that we create a new POVM which itself is a convex combination of several compressed POVMs with a smaller number of outcomes. By construction, each compressed POVM has an element corresponding to the $0$-th outcome which corresponds to failure, which has probability at most $\eps$. Therefore, averaged over all choices of POVM, the total mass on the outcome corresponding to failure is at most $\eps$. Conditioning on success, the simulating POVM creates a distribution $P_{KL}$, where $k\in [K]$ is the index of the compressed POVM, and $\ell\in [L]$ corresponds to the outcome of the measurement using the compressed POVM. The simulating measurement then creates the following state
\[
\sigma^{KLRB}\coloneqq \sum_{k,\ell\in [K]\times [L]} P_{KL}(k,\ell)\ketbra{k,\ell}^{KL}\otimes \sigma_{k\ell}^{RB}\;,
\]
where
\[
\sigma_{k\ell}^{RB}\coloneqq \frac{1}{\Tr \Gamma_{k,\ell}^A(\rho^{RAB})}\Tr_A \Gamma_{k,\ell}^A(\rho^{RAB})\;.
\]
Recall also that there exists a map
\begin{align*}
f  :&[K]\times [L]\to \mathcal{X} \\
 & (k,\ell)\mapsto \tilde{x}
\end{align*}
such that, when this map is applied to $\sigma^{KLRB}$, the measurement compression theorem ensures that
\begin{align*}
    \norm{\sum_x P_X(x) \ketbra{x}^X\otimes \rho_x^{RB}-\sum_{\tilde{x}}Q_{\widetilde{X}}(\tilde{x})\ketbra{\tilde{x}}^{\widetilde{X}}\otimes \sigma_{\tilde{x}}^{RB}}_1 \leq O(\eps)\coloneqq \eps''\;.
\end{align*}
Tracing out the system $RB$, this implies that
\[
\norm{P_X-Q_{\widetilde{X}}}_1 \leq \eps''\;.
\]
Standard arguments then imply that there exists a subset $\good_X\subset \mathcal{X}$ such that
\begin{align*}
    &\Pr_{P_X}[\good_X] \geq 1-\sqrt{\eps''}\;, \\
    &Q_{\widetilde{X}}(x) \leq (1+\sqrt{\eps''})P_X(x),\quad \forall x\in \good_X\;.
\end{align*}

\begin{claim}\label{claim:1}
Let $\good_{KL}$ be the set of pairs $(k,\ell)$ such that $f(k,\ell)\in \good_X$. Then 
\[
\Pr_{P_{KL}}[\good_{KL}] \geq 1-2\sqrt{\eps''}\;.
\]
\end{claim}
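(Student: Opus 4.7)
The claim is a purely classical probability statement, so the plan is to reduce it to a short triangle-inequality argument built on top of (i) the pushforward identity relating $P_{KL}$ and $Q_{\widetilde X}$, (ii) the total-variation bound $\|P_X - Q_{\widetilde X}\|_1 \le \eps''$ established just above, and (iii) the high-probability bound $\Pr_{P_X}[\good_X] \ge 1 - \sqrt{\eps''}$.

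First, I would unfold the definitions. By construction, the distribution on $\widetilde X$ arising from the simulating measurement is the pushforward of $P_{KL}$ through the deterministic map $f$, i.e.\ $Q_{\widetilde X}(x) = \sum_{(k,\ell)\in f^{-1}(x)} P_{KL}(k,\ell)$. Since $\good_{KL}$ is defined to be $f^{-1}(\good_X)$, summing over $x\in\good_X$ gives the identity
\[
\Pr_{P_{KL}}[\good_{KL}] = \sum_{x\in\good_X} Q_{\widetilde X}(x) = Q_{\widetilde X}(\good_X).
\]
So it suffices to lower bound $Q_{\widetilde X}(\good_X)$.

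Second, from $\|P_X - Q_{\widetilde X}\|_1 \le \eps''$ I would invoke the standard inequality $|P(S) - Q(S)| \le \|P-Q\|_1$ for any event $S\subseteq\mathcal{X}$ (valid in our trace-norm convention, where the classical $L^1$ distance equals $\sum_x|P(x)-Q(x)|$). Applied to $S=\good_X$, this yields $Q_{\widetilde X}(\good_X) \ge P_X(\good_X) - \eps''$. Combining with $P_X(\good_X)\ge 1-\sqrt{\eps''}$ gives $Q_{\widetilde X}(\good_X) \ge 1 - \sqrt{\eps''} - \eps''$, and since $\eps''\in[0,1]$ implies $\eps''\le \sqrt{\eps''}$, the right-hand side is bounded below by $1 - 2\sqrt{\eps''}$, as claimed.

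There is no real obstacle: the argument is a one-liner once the pushforward identity is spotted. The only thing requiring minor care is bookkeeping the trace-norm normalization, to make sure the constant in $1-2\sqrt{\eps''}$ comes out correctly rather than picking up stray factors of $1/2$ from a variational-distance convention.
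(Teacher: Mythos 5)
Your proposal is correct and follows essentially the same route as the paper: identify $\Pr_{P_{KL}}[\good_{KL}]$ with $Q_{\widetilde X}(\good_X)$ via the pushforward through $f$, then bound $Q_{\widetilde X}(\good_X)\ge P_X(\good_X)-\eps''\ge 1-\sqrt{\eps''}-\eps''\ge 1-2\sqrt{\eps''}$ using the $L^1$ closeness of $P_X$ and $Q_{\widetilde X}$. No further comment is needed.
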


\begin{proof}
Note that, for a fixed $x\in \good_X$,
\begin{align*}
    \Pr_{Q_{\widetilde{X}}}[x] &= \sum_{\substack{k,\ell\\ f(k,\ell)=x}} P_{KL}(k,\ell)
\end{align*}
Since $Q_{\widetilde{X}}$ and $P_X$ are close, this implies that
\begin{align*}
\Pr_{Q_{\widetilde{X}}}[\good_X]&=1-\sum_{x\notin\good_X}Q_{\widetilde{X}}(x)\\
&\geq 1-\eps''-\sum_{x\notin\good_X}P_{X}(x)\\
&\geq 1-\eps''-\sqrt{\eps''}\\
&\geq 1-2\sqrt{\eps''}\;.
\end{align*}
Therefore,
\begin{align*}
    \Pr_{P_{KL}}[\good_{KL}] & = \sum_{x\in \good_X}\sum_{\substack{k,\ell\\ f(k,\ell)=x}} P_{KL}(k,\ell)\\
    & = \sum_{x\in \good_X} \Pr_{Q_{\widetilde{X}}}[x] \\
    & = \Pr_{Q_{\widetilde{X}}}[\good_X] \\
    & \geq 1-2\sqrt{\eps''}
\end{align*}
\end{proof}
Now, for a fixed $k$, define the set $\bad_{L|k}$ to be the set of those $\ell$'s such that $(k,\ell)\notin \good_{KL}$ and $\good_{L|k}=L\setminus\bad_{L|k}$.
\begin{claim}\label{claim:2}
There exists a subset $\good_K\subset [K]$ such that
\[
\Pr_{P_K}[\good_K] \geq 1-2\sqrt[4]{\eps''}
\]
and, for all $k\in \good_K$,
\[
\Pr_{P_{L|k}}[\bad_{L|k}] \leq \sqrt[4]{\eps''}\;.
\]
\end{claim}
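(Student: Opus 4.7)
The plan is to derive Claim 2 directly from Claim 1 by a Markov-type argument on the conditional failure probabilities. From Claim 1 the total failure probability satisfies $\Pr_{P_{KL}}[\bad_{KL}] \leq 2\sqrt{\eps''}$, where $\bad_{KL}\coloneqq([K]\times[L])\setminus\good_{KL}$. Decomposing this as a convex combination over $k$ gives
\[
\sum_k P_K(k)\, \Pr_{P_{L|k}}[\bad_{L|k}] \;=\; \Pr_{P_{KL}}[\bad_{KL}] \;\leq\; 2\sqrt{\eps''}.
\]

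I would then define $\good_K \coloneqq \bigl\{k \in [K] : \Pr_{P_{L|k}}[\bad_{L|k}] \leq \sqrt[4]{\eps''}\bigr\}$, which builds in the second required inequality by construction. For the first inequality, a single application of Markov's inequality to the nonnegative random variable $k \mapsto \Pr_{P_{L|k}}[\bad_{L|k}]$ with threshold $\sqrt[4]{\eps''}$ yields
\[
\Pr_{P_K}\!\bigl[\good_K^c\bigr] \;\leq\; \frac{2\sqrt{\eps''}}{\sqrt[4]{\eps''}} \;=\; 2\sqrt[4]{\eps''},
\]
so that $\Pr_{P_K}[\good_K]\geq 1-2\sqrt[4]{\eps''}$, as required.

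No serious obstacle is expected here: the argument is elementary, and the exponents $1/2$ and $1/4$ of $\eps''$ line up precisely because the square root of the bound furnished by Claim 1 fixes the natural threshold. The only minor care needed is in the choice of the cutoff value $\sqrt[4]{\eps''}$; in principle one could take any threshold between $2\sqrt{\eps''}$ and $1$, trading off the size of $\good_K$ against the conditional bound inside it, but the geometric-mean choice $\sqrt[4]{\eps''}$ is what makes the two error exponents coincide and produces the symmetric form stated in the claim.
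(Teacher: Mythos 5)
Your proposal is correct and follows essentially the same route as the paper: the paper sets $\eta_k=\Pr_{P_{L|k}}[\good_{L|k}]$, notes from Claim 1 that $\sum_k P_K(k)(1-\eta_k)\leq 2\sqrt{\eps''}$, and applies Markov's inequality at threshold $\sqrt[4]{\eps''}$ to define $\good_K$, exactly as you do. The only cosmetic difference is that you phrase the argument in terms of $\Pr_{P_{L|k}}[\bad_{L|k}]$ directly rather than via the indicator-weighted average $\eta_k$.
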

\begin{proof}
Define the indicator $\one_{k,\ell}$ to be $1$ when the pair $(k,\ell)\in \good_{KL}$. Also define  
\[
\eta_k\coloneqq \sum_{\ell}P_{L|k}(\ell|k)\one_{k,\ell}
\]
Then, we know from \cref{claim:1} that
\begin{align*}
    \sum_{k}P_K(k)\ \eta_k & = \sum_{k,\ell}P_{KL}(k,\ell)\one_{k,\ell} \\
    &=\Pr_{P_{KL}}[\good_{KL}]\\
    & \geq 1-2\sqrt{\eps''}\;.
\end{align*}
From the above relation, we have that
\[
\sum_{k}P_K(k)\ (1-\eta_k)\le 2\sqrt{\eps''}\;. 
\]
An application of Markov's inequality implies that
\[
\Pr_{P_K}\left[\brak{k~|~1-\eta_k\geq \sqrt[4]{\eps''}}\right] \leq 2\sqrt[4]{\eps''}\;,
\]
that is,
\[
\Pr_{P_K}\left[\brak{k~|~\eta_k\geq 1-\sqrt[4]{\eps''}}\right] \geq 1-2\sqrt[4]{\eps''}\;.
\]
Let us now define
\[
\good_K\coloneqq \brak{k~|~\eta_k\geq 1-\sqrt[4]{\eps''}}\;,
\]
and let us fix $k\in \good_K$. Then
\begin{align*}
    \Pr_{P_{L|k}}[\bad_{L|k}] &=\Pr_{P_{L|k}}\left[(k,\ell)\notin \good_{KL}\right] \\
    &= \sum_{\ell}P_{L|k}(\ell|k)\cdot (1-\one_{k,\ell}) \\
    & \leq \sqrt[4]{\eps''}\;.
\end{align*}
This concludes the proof.
\end{proof}
We will now define the new state
\[
\tsigma^{KLRB} \coloneqq \sum_{(k,\ell)\in \good_{KL}} \widetilde{P}_{KL}(k,\ell)\ketbra{k,\ell}^{KL}\otimes \sigma_{k\ell}^{RB}\;,
\]
where
\begin{align*}
    \widetilde{P}_{KL}(k,\ell):=\begin{cases}
    	\frac{1}{\Pr[\good_{KL}]}P_{KL}(k,\ell)& \text{if } (k,\ell)\in \good_{KL}\\
    	0&\text{otherwise}\;.
    \end{cases}
\end{align*}
By the previous discussion and standard manipulations, this implies that
\[
\norm{\sigma^{KLRB}-\tsigma^{KLRB}}\leq 4\sqrt{\eps''}\;.
\]
 We will now precisely quantify the state $\tsigma^{KLRB}$. Firstly, recall from the measurement compression theorem that, by choice of the simulating measurement, 
\[
\sigma^{KLRB}= c_0 \sum_{\substack{k\in [K']\\ \ell\in \mathcal{L}(k)}} \frac{1}{KL} \ketbra{k,\ell}^{KL}\otimes \sigma_{k\ell}^{RB}
\]
where $\mathcal{L}(k)$ is the subset of those indices $\ell\in [L]$ for the fixed  value of $k$ which correspond to valid outcomes of the compressed POVM. We also know that
\begin{align*}
    K' &= (1-\eps_0) K\;, \\
    \abs{\mathcal{L}(k)} &= (1-\eps_0) L\;, \quad \forall k\in [K'] \\
    c_0 &= \frac{1}{(1-\eps_0)^2}\;,
\end{align*}
where, recall that
\[
\eps_0\coloneqq O(\eps^{1/4})
\]
We further have the property that for all $k\in [K']$ The original POVM has outcomes in $\cX$ however; what is meant is that the bijection has been applied
\[
\frac{1}{L}\sum_{\ell\in \mathcal{L}(k)}\sigma_{k\ell}^{RB}\leq (1+\eps_0) \Tr_A[(\I^{RB}\otimes \Lambda^A)({\varphi_\rho}^{RAB})]
\]
where $\ket{\varphi}^{RAB}$ is some purification of $\rho^{AB}$ and $\Lambda^A$ is the original POVM. Next, we only retain those $k\in [K']$ that are in the set $\good_K$. Since we know that
\[
\Pr_{\textup{Unif}[K']}[\good_K] \geq 1-\eps_0
\]
we need only throw away a further $\eps_0$ fraction of the set $[K]$. Again, for every $k\in \good_K$, we throw away the set $\bad_{L|k}$.

Recall that from \cref{claim:2} we have that, for all $k\in \good_k$,
\[
\Pr_{\mathcal{L}(k)}[\bad_{L|k}]\leq \eps_0
\]
Therefore, we need only throw a further $\eps_0$ fraction of every set $\mathcal{L}(k)$. Thus
\begin{align*}
    \tsigma^{KLRB}= c_1 \sum_{\substack{k\in [(1-\eps_0)\cdot K')]\\ \ell\in \mathcal{L}(k)\setminus \bad_{L|k}}} \frac{1}{K\cdot L} \ketbra{k,\ell}^{KL}\otimes \sigma_{k\ell}^{RB}
\end{align*}
where 
\[
c_1= \frac{1}{(1-\eps_0)^4}
\]
To ease notation we define
\begin{align*}
    K'' & \coloneqq (1-\eps_0)\cdot K' \\
    \mathcal{L}'(k) & \coloneqq \mathcal{L}(k)\setminus \bad_{L|k}~~ \forall k\in [K'']
\end{align*}
We then have the properties that
\begin{align*}
    \frac{1}{L}\sum_{\ell\in \mathcal{L}'(k)}\sigma_{k\ell}^{RB} & \leq (1+\eps_0) \Tr_A[\I^{RB}\otimes \Lambda^A({\varphi_\rho}^{RAB})] \\
    \implies \frac{1}{L}\sum_{\ell\in \mathcal{L}'(k)}\sigma_{k\ell}^{B} & \leq \rho^B \\
    \intertext{and }
    \tsigma^X & \leq \frac{1+\eps_0}{1-\eps_0}\rho^X
\end{align*}
Thus, in keeping with our notation, we define
\begin{align*}
	\widetilde{P}_{KL}(k,\ell) &\coloneqq
	\begin{dcases}
		 \frac{c_1}{K\times L}, &\forall k\in [K''] \wedge \ell \in \mathcal{L}'(k) \\
		 0 &\textup{ otherwise}\;.
	\end{dcases}
\end{align*}
Before our next claim we need a couple of new definitions in order to define the conditional hypothesis testing mutual information and use it in the subsequent proofs.
\begin{definition}
Given the state $\tsigma^{KLB}$, for $k\in [K'']$ define the state $\tsigma^{LB}|k$ as
\[
\tsigma^{LB}|k \coloneqq \sum \limits_{\ell\in \mathcal{L}'(k)}\widetilde{P}_{L|k}(\ell|k)\ketbra{\ell}^L\otimes \sigma_{k,\ell}^B
\]
and the states $\tsigma^{L}|k\; \tsigma^{B}|k$ and $\left( \tsigma^{L} \otimes \tsigma^{B} \right)|K $ respectively as:
\begin{align*}
   & \tsigma^{L}|k \coloneqq \Tr_B(\tsigma^{LB}|k) = \sum \limits_{\ell\in \mathcal{L}'(k)}\widetilde{P}_{L|k}(\ell|k)\ketbra{\ell}^L \\
   & \tsigma^{B}|k \coloneqq \Tr_L(\tsigma^{LB}|k) = \sum \limits_{\ell\in \mathcal{L}'(k)}\widetilde{P}_{L|k} \sigma_{k,\ell}^B\\
   & \left(\tsigma^{L} \otimes \tsigma^{B} \right) |K \coloneqq \sum \limits_{k \in K''} \widetilde{P}_K(k) \ketbra{k}^K  \otimes \left\{ \left( \sum \limits_{\ell\in \mathcal{L}'(k)}\widetilde{P}_{L|k}(\ell|k)\ketbra{\ell}^L \right) \otimes \left( \sum \limits_{\ell\in \mathcal{L}'(k)}\widetilde{P}_{L|k}(\ell|k) \sigma_{k,\ell}^B \right) \right\}
\end{align*}
\end{definition}
Now using the above definitions, we define the two variants of conditional hypothesis testing mutual information as follows:
\begin{definition}
Given the state $\tsigma^{LB}|k$, we define
\begin{align*}
& I_H^{\eps_0}(L:B|k)_{\tsigma^{LB}|k}\coloneqq D_H^{\eps_0}(\tsigma^{LB}|k~||~\tsigma^L|k\otimes \tsigma^B|k)\\
& I_H^{\eps_0}(L:B|K)_{\tsigma^{KLB}} \coloneqq D_H^{\eps_0}(\tsigma^{KLB}~||~\left(\tsigma^L \otimes \tsigma^B \right)|K)
\end{align*}
\end{definition} 
\begin{remark}
We need the above definitions of conditional hypothesis testing mutual to come up with the derandomization or the expurgation argument to show that there exixts an index $k \in K''$ for which there exists a compressed POVM that preserves the classical correlations of the state $\rho^{XB}$. An exposition to the definition of conditional hypothesis testing mutual information can be found in \cite[Corollary~4]{sen2020unions}.
\end{remark}

 \begin{claim}\label{clm:previousClaim}Consider the quantity $D_H^{\eps_0}(\tsigma^{KLB}~||~\tsigma^{KL}\otimes \rho^B)$. Let $\Pi_{\textsc{OPT}}^{KLB}$ be the optimising operator for this quantity. It holds that
\[I_H^{\eps_0}(L:B|K)_{\tsigma} \geq I_H^{\eps_0/2}(X:B)_{\rho^{XB}}-O(1)+O(\log \eps)\;.\]
\end{claim}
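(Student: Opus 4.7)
The plan is to lift the optimizer for $I_H^{\eps_0/2}(X:B)_{\rho^{XB}}$ through the bijection $f:(k,\ell)\mapsto x$ of the measurement-compression theorem, producing a test on $\tsigma^{KLB}$ that certifies a large hypothesis-testing divergence, and then to convert that divergence into the stated conditional one. Since $X$ is classical, an optimizing test for $I_H^{\eps_0/2}(X:B)_{\rho^{XB}}$ can be taken in the form $\Pi^{XB}=\sum_x \ketbra{x}^X\otimes \Pi_x^B$ with $\sum_x P_X(x)\Tr[\Pi_x^B\rho_x^B]\ge 1-\eps_0/2$ and $\sum_x P_X(x)\Tr[\Pi_x^B\rho^B]\le 2^{-I_H^{\eps_0/2}(X:B)_{\rho^{XB}}}$. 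Define the lift $\tilde\Pi^{KLB}\coloneqq\sum_{k,\ell}\ketbra{k,\ell}^{KL}\otimes\Pi_{f(k,\ell)}^B$.

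First I would check that $\tilde\Pi^{KLB}$ is a valid $\eps_0$-test on $\tsigma^{KLB}$. By construction, $\Tr[\tilde\Pi^{KLB}\tsigma^{KLB}]=\Tr[\Pi^{XB}\tsigma^{XB}]$, where $\tsigma^{XB}$ denotes the pushforward of $\tsigma^{KLB}$ through $f$ on the $KL$ registers. Combining the measurement-compression guarantee $\|\rho^{XB}-\sigma^{XB}\|_1\le\eps''$ with the bound $\|\sigma^{KLRB}-\tsigma^{KLRB}\|_1\le 4\sqrt{\eps''}$ established above yields $\|\rho^{XB}-\tsigma^{XB}\|_1=O(\sqrt{\eps''})$, which is $O(\eps_0)$ since $\eps''=O(\eps)$ and $\sqrt{\eps}\le\eps^{1/4}=\eps_0$ in the relevant regime. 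Hence the acceptance probability is at least $1-\eps_0/2-O(\eps_0)\ge 1-\eps_0$.

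Second, I would estimate the rejection weight $\Tr[\tilde\Pi^{KLB}(\tsigma^{KL}\otimes\rho^B)]=\sum_x\tsigma^X(x)\Tr[\Pi_x^B\rho^B]$. Invoking the already-recorded operator inequality $\tsigma^X\le\tfrac{1+\eps_0}{1-\eps_0}\rho^X$ bounds this by $\tfrac{1+\eps_0}{1-\eps_0}\cdot 2^{-I_H^{\eps_0/2}(X:B)_{\rho^{XB}}}$, and taking negative logarithms gives $D_H^{\eps_0}(\tsigma^{KLB}\,\|\,\tsigma^{KL}\otimes\rho^B)\ge I_H^{\eps_0/2}(X:B)_{\rho^{XB}}-O(1)+O(\log\eps)$, where the slack absorbs the logarithm of $\tfrac{1\pm\eps_0}{1}$ via $\eps_0=O(\eps^{1/4})$.

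Lastly, to reach $I_H^{\eps_0}(L:B|K)_\tsigma=D_H^{\eps_0}(\tsigma^{KLB}\,\|\,(\tsigma^L\otimes\tsigma^B)|K)$, I would compare the two second arguments using the pointwise estimate $\tsigma^B|k\le\tfrac{1+\eps_0}{(1-\eps_0)^2}\rho^B$ valid for every $k\in[K'']$. This follows from $\tsigma^B|k=|\mathcal{L}'(k)|^{-1}\sum_{\ell\in\mathcal{L}'(k)}\sigma_{k,\ell}^B$, the retention bound $|\mathcal{L}'(k)|\ge(1-\eps_0)^2 L$, and the compression bound $L^{-1}\sum_{\ell\in\mathcal{L}'(k)}\sigma_{k,\ell}^B\le(1+\eps_0)\rho^B$. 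Tensoring with $\tsigma^{KL}$ gives the global operator inequality $(\tsigma^L\otimes\tsigma^B)|K\le\tfrac{1+\eps_0}{(1-\eps_0)^2}\,\tsigma^{KL}\otimes\rho^B$, so swapping the second argument in the hypothesis-testing divergence costs at most an additive $O(1)$; combining with the previous step proves the claim. The main technical care lies in bookkeeping the errors $\eps''$, $\sqrt{\eps''}$ and $\eps_0$ so that the smoothing parameter on the left remains $\eps_0$, which is automatic from $\sqrt{\eps''}\le\eps_0$, and no derandomization over $K$ is needed here because the key inequalities are uniform in $k$.
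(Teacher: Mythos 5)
Your proof is correct and is essentially the paper's argument run in the opposite direction: where the paper takes the optimizing test for $D_H^{\eps_0}(\tsigma^{KLB}\,\|\,\tsigma^{KL}\otimes\rho^B)$ and pushes it down to the $X$ register via data processing, you lift the optimizing test for $I_H^{\eps_0/2}(X:B)_{\rho}$ up through $f$ — these are dual formulations of the same data-processing step for the classical post-processing $(k,\ell)\mapsto x$. The remaining ingredients — the trace-distance swap of $\rho^{XB}$ for $\tsigma^{XB}$ in the first argument (which accounts for the halving of the smoothing parameter), and the operator inequalities $\tsigma^X\le\tfrac{1+\eps_0}{1-\eps_0}\rho^X$ and $\tsigma^B|k\le\tfrac{1+\eps_0}{(1-\eps_0)^2}\rho^B$ that absorb the $O(1)$ — coincide exactly with those in the paper's proof.
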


\begin{proof}
Without loss of generality we can assume that
\[
\Pi_{\opt}= \sum_{(k,\ell)\in \textup{supp}(\widetilde{P}_{KL})}\ketbra{k}^K\otimes \ketbra{\ell}^L\otimes \Pi^B_{k,\ell}
\]
Then,
\begin{align*}
    2^{-I_H^{\eps_0}(L:B|K)_{\tsigma}}\leq & c_1^2 \cdot \Tr\Pi_{\opt}^{KLB}\left(\sum_{k\in [K'']}\frac{1}{K}\ketbra{k}^K\otimes \left(\sum_{\ell\in \mathcal{L}'(k)}\frac{1}{L}\ketbra{\ell}^L\right)\otimes \left(\sum_{\ell\in \mathcal{L}'(k)}\frac{1}{L}\sigma_{k,\ell}^{B}\right)\right) \\ 
    \leq & c_1^2 \cdot \Tr\Pi_{\opt}^{KLB}\left(\sum_{k\in [K'']}\frac{1}{K}\ketbra{k}^K\otimes \left(\sum_{\ell\in \mathcal{L}'(k)}\frac{1}{L}\ketbra{\ell}^L\right)\otimes (1+\eps_0)\rho^B \right) \\
    = & c_1^2 \cdot (1+\eps_0) \Tr \Pi_{\opt}^{KLB} \left(\sum\limits_{(k,\ell)\in \textup{supp}(\widetilde{P}_{KL})} \frac{1}{K\cdot L} \ketbra{k,\ell}^{KL}\otimes \rho^B\right) \\
    = & c_1^2 \cdot (1+\eps_0) 2^{-D_H^{\eps_0}(\tsigma^{KLB}~||~\tsigma^{KL}\otimes \rho^B)} \\
    \leq& c_1^2\cdot (1+\eps_0) 2^{-D_H^{\eps_0}(\tsigma^{XB}~||~\tsigma^X\otimes \rho^B)}
\end{align*}
where the last inequality is via the data processing inequality. \\
Next, suppose that $\lambda^{XB}$ is the optimising operator for 
\[
D_H^{\eps_0}(\rho^{XB}~||~\tsigma^X\otimes \rho^B)
\]
Further,
\[
\begin{aligned}
\Tr[\lambda^{XB}\tsigma^{XB}] \geq & \Tr[\lambda^{XB}\rho^{XB}]-\norm{\tsigma^{XB}-\rho^{XB}}_1 \\
\geq & 1-\eps_0
\end{aligned}
\]
This further implies that
\[
\begin{aligned}
2^{-I_H^{\eps_0}(L:B|K)_{\tsigma}} \leq  c_1^2(1+\eps_0)  2^{-D_H^{\eps_0/2}(\rho^{XB}~||~\tsigma^X\otimes \rho^B)}
\end{aligned}
\]
Finally, since we are also given that 
\[
\tsigma^X \leq \frac{1+\eps_0}{1-\eps_0}\rho^X
\]
we get
\[
\begin{aligned}
2^{-I_{H}^{\eps_0}(L:B|K)_{\tsigma}} \leq& c_1^2 \cdot (1+\eps_0)^2 \cdot \frac{1}{1-\eps_0} \cdot 2^{-D_H^{\eps_0/2}(\rho^{XB}~||~\rho^X\otimes \rho^B)} \\
= & c_1^2 \cdot (1+\eps_0)^2\cdot \frac{1}{1-\eps_0} \cdot 2^{-I_H^{\eps_0/2}(X:B)_{\rho^{XB}}}
\end{aligned}
\]
Therefore, it now follows from the definition of $c_1^2$ that
\[
I_H^{\eps_0}(L:B|K)_{\tsigma}\geq I^{\eps_0/2}_H(X:B)_{\rho^{XB}}-O(1)+O(\log (1-\eps_0)) 
\]
This concludes the proof.
\end{proof}

\begin{claim}
There exists a $k\in [K'']$ such that
\[
I_H^{\sqrt{\eps_0}}(L:B|k)\geq I_H^{\eps_0/2}(X:B)_{\rho^{XB}}-O(1)+O(\log (1-\eps_0))
\]
\end{claim}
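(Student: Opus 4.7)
The plan is a standard derandomization (expurgation) on top of \cref{clm:previousClaim}. That claim gives a lower bound on the conditional quantity $I_H^{\eps_0}(L:B|K)_{\tsigma}$, which is morally a $\wP_K$-average of per-$k$ testing values; the task is to extract a single index $k^*\in[K'']$ for which the \emph{un}conditional $I_H^{\sqrt{\eps_0}}(L:B|k^*)$ is almost as large, paying only an additive $O(1)$.

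First, I would fix an optimal test operator $\Lambda^{KLB}$ realizing $I_H^{\eps_0}(L:B|K)_{\tsigma}$. Since both $\tsigma^{KLB}$ and $(\tsigma^{L}\otimes\tsigma^{B})|K$ are classical on the $K$ register, dephasing $\Lambda$ in the $K$-basis neither reduces its acceptance on $\tsigma^{KLB}$ nor alters its value on the reference state, so one may assume
\[
\Lambda^{KLB}=\sum_{k\in [K'']}\ketbra{k}^{K}\otimes \Lambda_k^{LB},\qquad 0\le \Lambda_k^{LB}\le \I^{LB}.
\]
For each $k$ set $\alpha_k := \Tr[\Lambda_k\,(\tsigma^{LB}|k)]$ and $\beta_k := \Tr[\Lambda_k\,(\tsigma^{L}|k\otimes\tsigma^{B}|k)]$, so that feasibility and optimality read $\sum_k \wP_K(k)\alpha_k \ge 1-\eps_0$ and $\sum_k \wP_K(k)\beta_k = 2^{-I_H^{\eps_0}(L:B|K)_{\tsigma}}$.

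Next, I would apply Markov's inequality twice against $\wP_K$: the mass of $\{k:1-\alpha_k\ge \sqrt{\eps_0}\}$ is at most $\sqrt{\eps_0}$, and the mass of $\{k:\beta_k\ge 2\cdot 2^{-I_H^{\eps_0}(L:B|K)_{\tsigma}}\}$ is at most $1/2$. For $\eps_0$ small enough these two bad sets together have $\wP_K$-mass strictly below $1$, so some $k^*\in [K'']$ satisfies both $\alpha_{k^*}\ge 1-\sqrt{\eps_0}$ and $\beta_{k^*}\le 2\cdot 2^{-I_H^{\eps_0}(L:B|K)_{\tsigma}}$. Then $\Lambda_{k^*}^{LB}$ is a feasible test for $D_H^{\sqrt{\eps_0}}(\tsigma^{LB}|k^* \,\Vert\, \tsigma^{L}|k^*\otimes \tsigma^{B}|k^*)$, yielding
\[
I_H^{\sqrt{\eps_0}}(L:B|k^*)\ge -\log \beta_{k^*} \ge I_H^{\eps_0}(L:B|K)_{\tsigma}-1,
\]
and chaining with \cref{clm:previousClaim} delivers the claimed lower bound, with the $-1$ absorbed into the $O(1)$ term.

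The only subtlety I expect is justifying the block-diagonal normalization of $\Lambda$: one must check that conjugating with the $K$-dephasing preserves feasibility ($\Tr[\Lambda \tsigma^{KLB}]\ge 1-\eps_0$) and leaves $\Tr[\Lambda\,(\tsigma^{L}\otimes\tsigma^{B})|K]$ unchanged. Both follow immediately from the classical-on-$K$ structure of the two states, after which the Markov-plus-union-bound step is routine.
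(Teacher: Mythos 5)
Your proposal is correct and follows essentially the same route as the paper: both restrict the block-diagonal (classical-on-$K$) optimal test to a single block $\Lambda_k^{LB}$, use Markov's inequality on the acceptance probabilities to guarantee feasibility at level $\sqrt{\eps_0}$, and then argue that for at least one good $k$ the type-II error $\beta_k$ is within a constant factor of $2^{-I_H^{\eps_0}(L:B|K)_{\tsigma}}$, before chaining with \cref{clm:previousClaim}. The only (immaterial) difference is that the paper selects the good $k$ by averaging $\beta_k$ over the high-acceptance set, losing a factor $\tfrac{1}{1-\sqrt{\eps_0}}$, whereas you use a second Markov bound plus a union bound, losing a factor $2$ --- both losses are absorbed into the $O(1)$ term.
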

\begin{proof}
First, let $\Gamma_{\opt}$ be the optimizer for $I_H^{\eps_0}(L:B|K)$. Since 
\[
\Tr[\Gamma_{\opt}\tsigma^{KLB}]\geq 1-\eps_0
\]
this implies that
\begin{align*}
    \sum_{k\in [K'']}\wP(k)\Tr[\Gamma(k)^{LB}\tsigma^{LB}|k]\geq 1-\eps_0
\end{align*}
where $\Gamma(k)^{LB}$ is simply the operator $\Gamma_{\opt}$ for a fixed $k$. This is well defined since $\Gamma_{\opt}$ is cq. Then, Markov's inequality implies that, there exists a subset $\mathcal{S}\subset [K'']$ of probability at least $1-\sqrt{\eps_0}$ such that for all $k\in \mathcal{S}$, 
\[
\Tr[\Gamma(k)^{LB}\tsigma^{LB}|k] \geq 1-\sqrt{\eps_0}
\]
Then,
\begin{align*}
    2^{-I_{H}^{\eps_0}(L:B|K)_{\tsigma}} & = \Tr \left[ \Gamma_{\opt}^{KLB}\left(\sum_{k\in [K'']}\wP(k)\ketbra{k}^K\otimes \left(\sum_{\ell\in \mathcal{L}'(k)}\wP(\ell|k)\ketbra{\ell}^L\right)\otimes \left(\sum_{\ell\in \mathcal{L}'(k)}\wP(\ell|k)\tsigma_{k,\ell}^{B}\right)\right) \right]\\
    & = \sum \limits_{k\in [K'']} \wP(k) \Tr \left[ \Gamma_{k}^{LB} \tsigma^L|k \otimes \tsigma^B|k \right]\\
    & \geq \sum\limits_{k\in \mathcal{S}} \wP(k) \Tr \left[ \Gamma_{k}^{LB} \tsigma^L|k \otimes \tsigma^B|k \right]\\
    \intertext{then by definition of $I_H^{\sqrt{\eps_0}}(L:B|k)_{\tsigma^{LB}|k}$ we can lower bound the above expression by}
    & \geq \sum\limits_{k\in \mathcal{S}} \wP(k) 2^{-I_H^{\sqrt{\eps_0}}(L:B|k)_{\tsigma^{LB}|k}} \\
    & = \left(\sum_{k\in \mathcal{S}}\wP(k)\right)\sum\limits_{k\in \mathcal{S}} \frac{\wP(k)}{\sum_{k\in \mathcal{S}}\wP(k)} 2^{-I_H^{\sqrt{\eps_0}}(L:B|k)_{\tsigma^{LB}|k}} \\
   & \geq (1-\sqrt{\eps_0})\sum\limits_{k\in\mathcal{S}} \widetilde{Q}(k)  2^{-I_H^{\sqrt{\eps_0}}(L:B|k)_{\tsigma^{LB}|k}}
\end{align*}
  where we define
  \begin{align*}
  \widetilde{Q}(k):=\begin{cases}
  \frac{\wP(k)}{\sum_{k\in \mathcal{S}}\wP(k)} &\forall k\in \mathcal{S} \\
  0 & \text{otherwise}\;,	
  \end{cases}
  \end{align*}
  and the last line follows by the fact that $\mathcal{S}$ has probability at least $1-\sqrt{\eps_0}$ with respect to the distribution $\wP(k)$. This implies that there exists a $k\in [K'']$ such that
\[
I_H^{\eps_0}(L:B|K) \leq I_H^{\sqrt{\eps_0}}(L:B|k)_{\tsigma^{LB}|k}+\log \frac{1}{1-\sqrt{\eps_0}}
\]
Finally by \cref{clm:previousClaim}, we see that
\[
I_H^{\eps_0/2}(X:B)_{\rho^{XB}}-O(1)+O(\log (1-\eps_0)) \leq I_H^{\sqrt{\eps_0}}(L:B|k)_{\tsigma^{LB}|k}+\log\frac{1}{1-\sqrt{\eps_0}}
\]
This concludes the proof.
\end{proof}

Note that the compressed sub POVM which corresponds to this $k$ was made by conditioning on success and then removing the outcomes in the set $\bad_{L|k}$. To create a rank-one POVM from this sub POVM we will first use the spectral decomposition of the POVM element corresponding to failure and append these weighted rank-one projectors to the original sub POVM. We will also append the projectors corresponding to those outcomes in the set $\bad_{L|k}$ to the sub POVM. Note that all the rank-one elements that we appended to the sub POVM to complete it, account for at most $2\eps_0$ amount of probability mass. This gives the POVM $\widetilde{\Lambda}$ that we will use in the protocol.

\begin{remark}
    Note that in the protocol, the quantity $I_H^{\sqrt{\eps_0}}(L:B|k)_{\tsigma^{LB}|k}$ is referred to as $I^{\sqrt{\eps_0}}_H(Y_1:B)_{\sigma^{Y_1B}}$. See \cref{sec:MainProtocol} for details.
\end{remark}

\subsection{Proof of \cref{lem:slepianwolf}}\label{subsec:proof_lem:slepianwolf}

In this subsection we prove \cref{lem:slepianwolf}. It contains an important claim for a deterministic binning strategy in \cref{claim:permutation_hash}, as opposed to the random binning strategy used to obtain inner bound for classical message compression with quantum side information \cite{Devetak_Winter_Slepianwolf, Renes_Renner,Wilde_etal_measurementcomp}. We further show that, even in the case of deterministic binning, the encoding is similar to choosing a $2$-universal hash function randomly. In our case such a property is true by choosing a random permutation over the classical indices.

We will think of $\mathcal{X}$ as a $2$ dimensional array stored in the memory of a computer, where each cell in the array is addressed by a tuple $(m,n)$ where we call $m$ the \emph{block index} or simply index, and $n$ the \emph{intra block} index. Let $\sigma$ be permutation picked randomly from the set of all permutations on the elements of the set $\mathcal{X}$. The action of $\sigma$ is to reassign the elements in the set $\mathcal{X}$ to a possibly different location in the array. Let $f_{\sigma}$ be the function that maps each $x\in\mathcal{X}$ to the corresponding block number $m$, with respect to a fixed permutation $\sigma$ :
\begin{align*}
f_{\sigma}:&\ \mathcal{X}\to [M] \\
          &\ x \mapsto m\;,\textup{ such that } \sigma(x)=(m,n)\;.
\end{align*}

\begin{claim}\label{claim:permutation_hash}
It holds over the choice of the random permutation $\sigma$ that, for any $x\neq x'$,
\[
\Pr[f_{\sigma}(x)=f_{\sigma}(x')]=\frac{N-1}{\abs{\mathcal{X}}-1}\;.
\]
\end{claim}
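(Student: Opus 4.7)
The plan is a direct symmetry and counting argument. View $\sigma$ as a uniformly random bijection from $\mathcal{X}$ to $[M]\times[N]$ (which exists because $|\mathcal{X}|=MN$). The event $\{f_\sigma(x)=f_\sigma(x')\}$ is the event that the images $\sigma(x)$ and $\sigma(x')$ agree in their first coordinate, so I need to compute the probability of this event for fixed distinct $x,x'\in\mathcal{X}$.

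First I would establish the symmetry fact: the joint distribution of $(\sigma(x),\sigma(x'))$ is uniform over the set of all \emph{ordered} pairs of distinct cells in $[M]\times[N]$. This follows because for any two distinct cells $c_1,c_2$, the number of bijections $\sigma$ with $\sigma(x)=c_1$ and $\sigma(x')=c_2$ equals $(|\mathcal{X}|-2)!$, independent of $(c_1,c_2)$. Hence
\[
\Pr[(\sigma(x),\sigma(x'))=(c_1,c_2)]=\frac{(|\mathcal{X}|-2)!}{|\mathcal{X}|!}=\frac{1}{|\mathcal{X}|(|\mathcal{X}|-1)}.
\]

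Next I would count the favorable ordered pairs. The total number of ordered pairs of distinct cells is $|\mathcal{X}|(|\mathcal{X}|-1)=MN(MN-1)$. A favorable pair $(c_1,c_2)=((m,n),(m,n'))$ with matching first coordinate and $c_1\neq c_2$ forces $n\neq n'$, giving $M$ choices for $m$, $N$ choices for $n$, and $N-1$ choices for $n'$, for a total of $MN(N-1)$ such ordered pairs. Dividing,
\[
\Pr[f_\sigma(x)=f_\sigma(x')]=\frac{MN(N-1)}{MN(MN-1)}=\frac{N-1}{|\mathcal{X}|-1},
\]
as claimed.

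There is no real obstacle here; the only point requiring a little care is justifying the uniformity of the marginal on ordered pairs, which I would do by the orbit count above rather than invoking a black-box symmetry principle. This pairwise-collision identity is exactly what will later play the role of the $2$-universal hashing property in the deterministic binning construction, which is presumably why the authors isolate it as \cref{claim:permutation_hash}.
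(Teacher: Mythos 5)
Your proposal is correct and is essentially the same argument as the paper's: both rest on the fact that for a uniformly random permutation the pair $(\sigma(x),\sigma(x'))$ is uniform over ordered pairs of distinct cells, so the collision probability is the fraction $MN(N-1)/\bigl(MN(MN-1)\bigr)$ of such pairs sharing a block index. The paper phrases this as a conditioning computation, with $\Pr[\sigma(x)=(m,n)]=1/\abs{\mathcal{X}}$ and $\Pr[\sigma(x')=(m,n')\mid\sigma(x)=(m,n)]=1/(\abs{\mathcal{X}}-1)$, while you justify the uniformity by the $(\abs{\mathcal{X}}-2)!$ orbit count and then count favorable pairs; the difference is purely presentational.
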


\begin{proof}
We evaluate the collision probability as follows :
\begin{align*}
    \Pr[f_{\sigma}(x)=f_{\sigma}(x')] & = \sum_{\substack{m\in [M]\\ n,n'\in [N]\\ n\neq n'}}\Pr[\sigma(x')=(m,n')~|~\sigma(x)=(m,n)]\cdot \Pr[\sigma(x)=(m,n)] \\
    & = \sum_{\substack{m\in [M]\\  n\in [N]}}\sum_{\substack{n'\neq n\\ n'\in [N]}}\Pr[\sigma(x')=(m,n')~|~\sigma(x)=(m,n)]\cdot \frac{1}{\abs{\mathcal{X}}} \\
    & = \sum_{\substack{m\in [M]\\ n\in [N]}} \frac{N-1}{\abs{\mathcal{X}}-1}\cdot \frac{1}{\abs{\mathcal{X}}} \\
    & = \frac{N-1}{\abs{\mathcal{X}}-1}
\end{align*}
\end{proof}

\subsubsection{The protocol}

For the purposes of the proof we will imagine a communication protocol between Alice and Bob, which attempts to accomplish the following task.

Alice receives a symbol $x$ from the distribution $P_X$. She sends to Bob some encoding of this symbol over a classical noiseless channel. Bob has access to the state $\rho_x^{B}$ as side information. Conditioned on Alice's message, Bob will perform a measurement on the system $B$ to produce a guess for the symbol $x$. We require that Alice send as few bits as possible while ensuring that the symbol $x$ and Bob's guess are equal with high probability on average. This task is formally known as classical data compression with quantum side information. It is well known \cite{Wilde_etal_measurementcomp} that a protocol for this task implies the existence of a map $\mathcal{D}$ which satisfies condition $3$ in the statement of \cref{lem:slepianwolf}. The protocol will proceed as follows:
\begin{enumerate}
    \item Let $\sigma$ be a randomly chosen permutation on the set $\mathcal{X}$; $\sigma$ is stored as a public coin, and is accessible to both parties Alice and Bob.
    \item Upon receiving the symbol $x$, Alice sends Bob the index $f_{\sigma}(x)$.
    \item Bob then performs a sequential measurement, based on the received index. Our choice of measurement operators will ensure that Bob's decoding will succeed with high probability, as long as the number of indices to distinguish satisfies
    \[
    \log N \leq I_{H}^{\eps}(X:B)_{\rho^{XB}}+\log \eps\;.
    \]
\end{enumerate}
Details follow.

\subsubsection{Bob's decoding}

We wish to analyze the probability of a decoding error. We first define the set $\mathcal{A}(f_{\sigma},m)$ as the intersection between the pre-image of $m$ under $f_{\sigma}$ and the support of $P'_X$, where $P'_X$ is the sub-distribution that is obtained by removing those points from the support of $P_X$ that have the smallest probabilities that add to at most $\eps$:
\[
\mathcal{A}(f_{\sigma},m)\coloneqq \brak{x~|~f_{\sigma}(x)=m, x\in \textup{supp}(P')}
\]
Let us denote the elements in $\mathcal{A}(f_{\sigma},m)$ as $\brak{a_1^{m},a_2^{m},\ldots, a_{\abs{\mathcal{A}}}^{m}}$. Each $a_i^{m} \in \cX$.

Now consider the operator $\Pi_{\textsc{opt}}$ from the definition of $I_{H}^{\eps}(X:B)_{\rho}$ and the associated operators $\Pi_x$, which are derived from the classical-quantum form of $\Pi_{\textsc{opt}}$. Upon receiving the index $m$, Bob sequentially measures his system $B$ with the operators $\Pi_{a_i^{m}}$. The probability of incurring in a decoding error can be evaluated as follows.
 
First of all, we will not work with the distribution $P_X$ but with the sub-distribution $P'_X$. To see that this only incurs an extra $\eps$ error, note that
\begin{align*}
    &\Pr[\textup{decoding error}] \\
    &= \sum_x P_X(x)\cdot \Pr[\textup{decoding error}~|~x] \\
    & = \sum_{x\in \textup{supp}(P')}P_X(x)\cdot \Pr[\textup{decoding error}~|~x] + \sum_{x\notin \textup{supp}(P')}P_X(X)(x)\cdot \Pr[\textup{decoding error}~|~x] \\
    & \leq \sum_{x}P'_X(x)\cdot \Pr[\textup{decoding error}~|~x] + \eps\;.
\end{align*}
Suppose that when the encoded $x$ is sent, the corresponding representation of this symbol in the set $\mathcal{A}(f_{\sigma},m)$ is $a_{\ell}^{m}$, where $\ell\in [N]$.  Then, conditioned on Alice having received $x$, the probability of incorrect decoding is given by
\begin{align*}
\Pr[\textup{decoding error}~|~x] &= 1-\Tr\left[\Pi_{a_{\ell}^{m}}(\I-\Pi_{a_{\ell-1}^{m}})\ldots (\I-\Pi_{a_1^{m}})\cdot \rho_{a_{\ell}^{m}}\right]\\
&= \Tr\left[\rho_{a_{\ell}^{m}}\right]-\Tr\left[\Pi_{a_{\ell}^{m}}(\I-\Pi_{a_{\ell-1}^{m}})\ldots (\I-\Pi_{a_1^{m}})\cdot \rho_{a_{\ell}^{m}}\right]\;. \\
\intertext{Using Sen's non-commutative union bound \ref{fact:noncommutative_union_bound}, the above expression can by bounded by}
\leq &\sqrt{\Tr\left[(\I-\Pi_{a_{\ell}^{m}}) \rho_{a_{\ell}^{m}}\right]+\sum\limits_{i=1}^{\ell-1}\Tr\left[\Pi_{a_i^{m}} \rho_{a_{\ell}^{m}}\right]}\;.
\end{align*}

Now notice that, the sets $\mathcal{A}(f_{\sigma},m)$ form a disjoint cover of the set $\textsc{supp}\left(P'_X\right)$ over the indices $m$. Thus, taking an average over the elements of the set $\bigcup\limits_{m}\mathcal{A}(f_{\sigma},m)$ is the same as taking an average over the set $\textsc{supp}\left(P'_X\right)$. Using this observation along with the concavity of the square root, we see that the average error probability over choices of $x$ is at most
\begin{align*}
    &\sqrt{\sum_{m\in [M]}\sum_{a_{\ell}^{m}\in \mathcal{A}(f_{\sigma},m)}P'_X(a_{\ell}^{m})\left(\Tr\left[(\I-\Pi_{a_{\ell}^{m}}) \rho_{a_{\ell}^{m}}\right]+\sum\limits_{i=1}^{\ell-1}\Tr\left[\Pi_{a_i^{m}} \rho_{a_{\ell}^{m}}\right]\right)} \\
    = &\sqrt{\sum_xP'_X(x)\Tr\left[(\I-\Pi_{x}) \rho_{x}\right]+\sum_{m\in [M]}\sum_{a_{\ell}^{m}\in \mathcal{A}(f_{\sigma},m)}P'_X(a_{\ell}^{m})\sum\limits_{i=1}^{\ell-1}\Tr\left[\Pi_{a_i^{m}} \rho_{a_{\ell}^{m}}\right]}\;.
\end{align*}
The first term inside the square root is at most $\eps$, by the property of $\Pi_{\textsc{opt}}$ that
\begin{align*}
\Tr\left[ \Pi_{\textsc{opt}}\sum_xP_X(x)\ketbra{x}^X\otimes\rho_x\right] &\geq 1-\eps\;, \\
\intertext{which implies that}
\Tr\left[\sum_x P_X(x)\ketbra{x}^X\otimes \Pi_x\rho_x\right]&\geq 1-\eps\;,
\end{align*}
and the same holds for the pruned distribution $P'_X$ of course.

To analyze the second term inside the square root, consider the following :
\begin{align*}
    &\sum_{m\in [M]}\sum_{a_{\ell}^{m}\in \mathcal{A}(f_{\sigma},m)}P'_X(a_{\ell}^{m})\sum\limits_{i=1}^{\ell-1}\Tr\left[\Pi_{a_i^{m}} \rho_{a_{\ell}^{m}}\right] \\
    \leq &\sum_{m\in [M]}\sum_{a_{\ell}^{m}\in \mathcal{A}(f_{\sigma},m)}P'_X(a_{\ell}^{m})\sum\limits_{i\neq \ell}\Tr\left[\Pi_{a_i^{m}} \rho_{a_{\ell}^{m}}\right] \\
    =& \sum_x P'_X(x) \sum_{\substack{x'\neq x \\ x'\in \textsc{supp}(P')}}\one_{\brak{f_{\sigma}(x')= f_{\sigma}(x)}} \Tr\left[\Pi_{x'}\rho_x\right]
\end{align*}
where $\one_{\brak{f_{\sigma}(x')= f_{\sigma}(x)}}$ is the indicator for when $f_{\sigma}(x')= f_{\sigma}(x)$. We will now take an expectation over the choices of the random permutation $\sigma$. Note that the above term is inside a square root, so to do this we use the concavity of square root:
\begin{align*}
   &\E_{\sigma}\left[ \sum_x P'_X(x) \sum_{x'\neq x}\one_{\brak{f_{\sigma}(x')= f_{\sigma}(x)}} \Tr\left[\Pi_{x'}\rho_x\right]\right] \\
   = & \sum_x P'_X(x) \sum_{\substack{x'\neq x \\ x'\in \textsc{supp}(P')}}\E_{\sigma}\left[\one_{\brak{f_{\sigma}(x')= f_{\sigma}(x)}}\right] \Tr\left[\Pi_{x'}\rho_x\right] \\
   =& \sum_x P'_X(x) \sum_{\substack{x'\neq x\\ x'\in \textsc{supp}(P')}}\Pr[f_{\sigma}(x')= f_{\sigma}(x)] \Tr\left[\Pi_{x'}\rho_x\right] \\
   \leq & \frac{N-1}{\abs{\mathcal{X}}-1}\sum_x P'_X(x) \sum_{\substack{x'\neq x\\ x'\in \textsc{supp}(P')}}\Tr\left[\Pi_{x'}\rho_x\right] \\
   \leq & \frac{N}{\abs{\mathcal{X}}}\sum_x P'_X(x) \sum_{x'\in \textsc{supp}(P')}\Tr\left[\Pi_{x'}\rho_x\right] \\
   \intertext{To bound this term, we multiply and divide by $P'_X(x')$ inside the second summation. This shows us that the above expression is equal to}
   =& \frac{N}{\abs{\mathcal{X}}}\sum_x P'_X(x) \sum_{x'\in \textsc{supp}(P')}\frac{P'_X(x')}{P'_X(x')}\Tr\left[\Pi_{x'}\rho_x\right] \\
   \leq & \frac{N}{\abs{\mathcal{X}}}\cdot 2^{(\hmax')^{\eps}(X)} \sum_xP'_X(x)\Tr\left[\sum_{x'\in \textsc{supp}(P')}\left(P'_X(x')\Pi_{x'}\right)\rho_x\right]
\end{align*}
where we have used \cref{def:Hmax} to upper bound each $\frac{1}{P'_X(x')}$ term by $2^{(\hmax')^{\eps}(X)}$, which in turn is upper bounded by $\frac{\abs{\mathcal{X}}}{\eps}$ using \cref{prop:Hmax_inequalities}. We will now switch back to the distribution $P_X$ by adding the terms corresponding to the $x$'s which not in the support of $P'_X$. This implies that the above expression can be upper bounded by 
\begin{align*}
    \leq & \frac{N}{\abs{\mathcal{X}}}\cdot \frac{\abs{\mathcal{X}}}{\eps} \sum_x P_X(x)\Tr\left[\sum_{x'}\left(P_X(x')\Pi_{x'}\right)\rho_x\right] \\
    =& \frac{N}{\eps}\Tr\left[\left(\sum_{x'}P_X(x')\Pi_{x'}\right)\left(\sum_x P_X(x)\rho_x\right)\right] \\
    =& \frac{N}{\eps}\Tr\left[\left(\sum_{x'}P_X(x')\ketbra{x'}^X\otimes\Pi_{x'}^B\right) \left\{ \I^X \otimes\left(\sum_x P_X(x) \rho_x^B\right) \right\} \right] \\
    =& \frac{N}{\eps}\Tr\left[\left(\sum_{x'}\ketbra{x'}^X\otimes\Pi_{x'}^B\right) \left\{ \left( \sum_{x''}P_X(x'')\ketbra{x''}^X\right)\otimes\left(\sum_x P_X(x) \rho_x^B\right) \right\} \right] \\
    =& \frac{N}{\eps}\Tr\left[ \Pi_{\textsc{opt}}~ (\rho^X\otimes \rho^B) \right] \\
    = & 2^{\log N-I_{H}^{\eps}(X:B)_{\rho}+\log \frac{1}{\eps}}\;.
\end{align*}
Thus, this shows that as long as
\[
\log N \leq I_{H}^{\eps}(X:B)_{\rho}+2\log \eps\;,
\]
the average decoding error over choices of $x$ and the permutation $\sigma$ is at most $\sqrt{2\eps}+\eps$.

To finish the proof, consider the left polar decomposition of the operator
\[
\Pi_{a_{\ell}^{m}}(\I-\Pi_{a_{\ell-1}^{m}})\ldots (\I-\Pi_{a_1^{m}}) = U_{a_{\ell}^{m}}\sqrt{\Theta_{a_{\ell}^{m}}}
\]
where $\Theta_{a_{\ell}^{m}}$ is some positive operator. It is not hard to see that, for each $m\in [M]$, the operators $\Theta_{a^m_{\ell}}$ obey the operator inequality
\[
\sum_{\ell} \Theta_{a_{\ell}^m}^B\leq \I^B.
\]
Thus we extend these operators to a POVM by simply assigning appropriate operators to those $x$'s in $f_{\sigma}^{-1}(m)$ which are not in $\textup{supp}(P'_X)$. If no such $x$'s exist, we simply assign one operator which completes the POVM to the abort outcome, denoted by $\bot$. Now suppose that $a_{\ell}^m$ corresponds to some symbol $x$. Then, by standard manipulations we have that (see also \cref{fact:Gentle_operator_lemma})
\[
\norm{\rho_x-\sqrt{\Theta_{x}}\rho_{x}\sqrt{\Theta_{x}}}_1 \leq 2\sqrt{\Tr\left[(\I-\Theta_{x})\rho_x\right]}\;.
\]
It is now easy to see that the following bounds hold
\begin{align*}
    &\sum_xP_X(x)\norm{\rho_x-\sqrt{\Theta_{x}}\rho_{x}\sqrt{\Theta_{x}}}_1 \\
    \leq & \sum_xP'_X(x)\norm{\rho_x-\sqrt{\Theta_{x}}\rho_{x}\sqrt{\Theta_{x}}}_1+2\eps \\
    \leq & 2\sum_x P'_X(x)\sqrt{\Tr\left[(\I-\Theta_{x})\rho_x\right]}+2\eps \\
    \leq & 2 \sqrt{\sum_xP'_X(x)\Tr\left[(\I-\Theta_{x})\rho_x\right]}+2\eps\;, \\
    \intertext{which, by our previous computations, implies that the above expression can be upper bounded by}
    \leq & 2\sqrt{\sqrt{2\eps}+\eps}+2\eps = O(\eps^{1/4})= \eps_0\;.
\end{align*}
One can now derandomise the argument to conclude the existence of a permutation for which the above conditions hold. Relabelling the symbols $x$ and the POVM elements appropriately shows us that the third condition holds. This concludes the proof of \cref{lem:slepianwolf}.

\subsection{Proof of \cref{corol:bobsprotocol}} \label{subsec:proof_corol:bobsprotocol}

Given the state 
\[
 \sum_{m,n} P_{MN}(m,n) \ketbra{m,n}^{MN}\otimes \rho_{mn}^B\;,
\]
consider, for some fixed $m$, the state
\[
\sigma_{N|m}^{NB}\coloneqq \sum_{n}P_{N|m}(n|m)\ketbra{n}^N\otimes \rho^{B}_{mn}\;.
\]
For each such conditioned state, we can add one more outcome, with corresponding null probability, to account for the outcome $\bot$. Since this is weighted with zero probability, it does not change the average state. Now for each $m \in M$ consider the following unitary operators
\[
W^{NB}(m)\coloneqq \sum_{n', n\in [N]\cup\brak{\bot}} \ket{n'}\bra{n}^N\otimes \Gamma_{n,n'}^{B}(m)\;,
\]
where
\[
\Gamma_{n,0}^B(m)\coloneqq \sqrt{\Theta_n(m)}\;,
\]
and the other operators can always be chosen appropriately to satisfy unitarity. Next, define the unitary
\[
W^{MNB}\coloneqq \sum_{m}\ketbra{m}^M\otimes W^{NB}(m)\;,
\]
and consider the following
\begin{align*}
    &W^{MNB}\cdot \sum_{m}P_M(m)\ketbra{m}^M\otimes \sigma_{N|m}^{NB} \\=& \sum_{m,n}P_{MN}(m,n)\ketbra{m}^M\otimes \ketbra{0}^N\otimes \sqrt{\Theta_n(m)}\rho_{mn}^B\sqrt{\Theta_n(m)} +\tau^{MNB}_{\textup{error}}\;,
\end{align*}
where $\tau_{\textup{error}}$ is some Hermitian matrix that results from the terms with $n'\neq 0$ of $W^{NB}(m)$. Therefore,
\[
\bra{0}^N \tau_{\textup{error}}\ket{0}^N=0
\]
by definition. Now, from the result of \cref{lem:slepianwolf} we know that
\begin{align*}
    &\left\lVert\sum_{m,n}P_{MN}(m,n)\ketbra{m}^M\otimes \ketbra{0}^N\otimes  \rho_{nm}^{B}\right.\\ &\left. -\sum_{m,n}P_{MN}(m,n)\ketbra{m}^M\otimes \ketbra{0}^N\otimes \sqrt{\Theta_n(m)}\rho_{mn}^B\sqrt{\Theta_n(m)} \right\rVert_1 \\
    \leq & \sum_{m,n}P_{MN}(m,n)\norm{\rho_{mn}^B-\sqrt{\Theta_n(m)}\rho_{mn}^B\sqrt{\Theta_n(m)}}_1 \\
    \leq &\;\eps'.
\end{align*}
In particular, these observations imply that
\[
\Tr\left[\sum_{m,n}P_{MN}(m,n)\ketbra{m}^M\otimes \ketbra{0}^N\otimes \sqrt{\Theta_n(m)}\rho_{mn}^B\sqrt{\Theta_n(m)} \right]\ge 1-\eps'\;,
\]
that is,
\[
\begin{aligned}
\bra{0}\Tr_{MB}\left[W^{MNB}\cdot \sigma^{MNB}\right]\ket{0} &= \sum_{m,n}P_{MN}(m,n)\Tr\left[ \Theta_n(m)\ \rho_{mn}^B\right] \\
& \geq 1-\eps'\;.
\end{aligned}
\]
Thus, we now see that
\begin{align*}
    &\norm{\Tr_{BM}\left(W^{BMN}\cdot \sigma^{BMN}\right)-\ketbra{0}^N}_1 \\
     & \leq2\sqrt{1-\bra{0}\Tr_{BM}\left(W^{BMN}\cdot \sigma^{BMN}\right)\ket{0}} \\
    & \leq 2\sqrt{\eps'}\;.
\end{align*}
Next note that the matrix $\Tr_{MN}[\tau_{\textup{error}}^{MNB}]$, being by construction a diagonal block of a positive semi-definite matrix, is positive semi-definite. To see this more explicitly, notice that for a fixed $m$
\begin{align*}
    W^{NB}(m)\cdot \sigma^{NB}_{N|m} &= \sum_n P_{N|m}(n|m)\ketbra{0}^N\otimes \sqrt{\Theta_n(m)}\cdot \rho_{mn}^B \\
    +& \sum_n\sum_{\substack{n'',n' \\ \neg \brak{n'=0\wedge n''=0}}} P_{N|m}(n,m)\ket{n''}\bra{n'}^N\otimes \Gamma_{n,n'}^B(m)\rho_{mn}^B\Gamma_{n,n''}^B(m)
\end{align*} this is true for all $m$, tracing out the $M$ and $N$ systems with respect to the matrix $\tau_{\textup{error}}^{MNB}$ leaves the matrix
\[
\sum_{\substack{m,n, n' \\ n'\neq 0}}P_{MN}(n,m)\Gamma^B_{n,n'}(m)\cdot \rho_{mn}^B
\]
which is positive semi-definite. This implies that
\[
\norm{\tau_{\textup{error}}^B}_1=\Tr[\tau_{\textup{error}}^B] \leq \eps'\;.
\]
Therefore,
\begin{align*}
    \norm{\Tr_{NM}\left(W^{BMN}\cdot \sigma^{BMN}\right)-\sum_{m,n}P_{MN}(m,n)\sqrt{\Theta_{n}(m)}\cdot \rho_{mn}^B}_1 &= \norm{\tau^{B}}_1 \\
    & \leq \eps'\;.
\end{align*}
Finally, since
\[
\begin{aligned}
&\norm{\sum_{m,n}P_{MN}(m,n)\rho^B_{m,n}-\sum_{m,n}P_{MN}(m,n)\sqrt{\Theta_{n}(m)}\cdot \rho_{mn}^B}_1 \leq \eps'\;,
\end{aligned}
\]
we conclude that
\[
\norm{\Tr_{NM}\left(W^{BMN}\cdot \sigma^{BMN}\right)-\sum_{m,n}P_{MN}(m,n) \rho_{mn}^B}_1 \leq 2\eps'\;.
\]
This concludes the proof.

\section{Conclusion} \label{sec:Conclusion}
In this paper, we studied the achievable rate for the distillation of pure qubit states from a given mixed state of a single and bipartite system, when only a single copy of the state is available. For a single-party system, we proved that an achievable one-shot rate is given by
\begin{align*}
	\approx \log d_A - \widetilde{H}^\eps_{\max}(A)\;,
\end{align*}
whereas in the bipartite scenario, a one-shot achievable rate is
\begin{align*}
	\approx \log d_Ad_B-\widetilde{H}_{\max}^{\eps^2/48}(A)+\log d_B-\widetilde{H}_{\max}^\eps(B) + \max I_H^{\sqrt{\eps_0}}(X;B)\;,
\end{align*}
where the quantum mutual information-like term represents the increase with respect to the purely local protocol enabled by the use of one-way classical communication. Both these rates approach the given rates in \cite{Devetak_purity} in the asymptotic i.i.d. limit. We leave the generalization to a multi-party setting and the connection to the one-shot distillable common randomness \cite{Devetak_Winter} for future work.

\section*{Acknowledgments}

SC and AN would like to thank Pranab Sen and Rahul Jain for several helpful discussions, suggestions and several bits of advice which helped guide this project to fruition.

SC would like to acknowledge support from the National Research Foundation, including under NRF RF Award No. NRF-NRFF2013-13 and NRF2021-QEP2-02-P05 and the Prime Minister’s Office, Singapore and the Ministry of Education, Singapore, under the Research Centres of Excellence program. SC would also like to acknowledge support from the Google Late PhD Fellowship grant. AN and FB acknowledge support from MEXT Quantum Leap Flagship Program (MEXT QLEAP) Grant No. JPMXS0120319794. FB acknowledge support also from MEXT-JSPS Grant-in-Aid for Transformative Research Areas (A) ``Extreme Universe'', No. 21H05183, and from JSPS KAKENHI Grants No. 19H04066 and No. 20K03746.

\bibliographystyle{alpha}
\bibliography{references}

\appendix

\section{Preliminaries} \label{sec:Preliminaries}

\subsection{Notation}\label{sec:Notation}

In what follows, all sets ($\cX,\dots$) are finite and Hilbert spaces ($\cH,\dots$) are finite-dimensional. Besides the standard notation and nomenclature widely used in quantum information theory (see, e.g., Ref.~\cite{wilde_2013}), we use the term `substate' to mean a positive semidefinite matrixe with trace less than unity. The set of all substates on a Hilbert space $\cH$ is denoted as $\cS_{\leq}$ and the set of all states (i.e., density matrices) is denoted as $\cS_{=}$. We denote by $\I$ the identity operator.

For brevity, given two matrices $A$ and $B$ we use the notation $A \cdot B$ to denote $A B A^\dagger$. The symbol $\circ$ denotes sequential composition.
For any operator $M$ , we use $\lVert M \rVert_p := [\Tr{(M^\dagger M)}^{p/2}]^{1/p}$ to denote the Schatten $p$-norm of $M$: in particular, $\norm{M}_\infty$ is called as the operator norm and given by its maximal singular value. The trace can then be defined as $\norm{M}_1 := \Tr[\sqrt{M^\dagger M}]$. 
A positive operator-valued measure (POVM) $\Lambda$ on Hilbert space $\cH$ is a family of positive semidefinite matrices (called as the POVM elements) $ \{ \Lambda_x \}_{ \{ x \in \cX \} }$, such that and $\sum_{x \in \cX} \Lambda_x = \I$. 

We use the notation $\Pr$ to denote the probability of an event. We use plain capital letters to denote a random variable and $P_X(\cdot)$ to denote the probability mass function (pmf) of the random variable $X$ over an alphabet or the sample space $\cX$. The notation $\textup{supp}(P_X)$ is used to denote the support of the pmf $P_X$, i.e., the subset of $x\in\cX$ for which $P_X(x) > 0$.  

Concerning information-theoretic quantities, we adopt the following conventions:

\begin{itemize}
	\item  $H(\cdot)$ to denotes the entropy, either Shannon's or von Neumann's, depending on the context; the subscript (if any) denotes the state with respect to which it is calculated. The quantity $I( \cdot \; : \; \cdot)$ will be used to denote the mutual information. All the logarithms are taken in base $2$;
	
	\item We use the generalized fidelity~\cite{Tomamichel_thesis} as a distance measure for smoothing purpose, unless sated otherwise. This is defined as
	\[
	F(\rho, \sigma) := \lVert \sqrt{\rho}\sqrt{\sigma} \rVert_1 + \sqrt{(1-\Tr[\rho])(1-\Tr[\sigma])}
	\]
	for any pair of positive semidefinite operators with $\Tr \leq 1$, $\rho, \; \sigma$. Observe that when at least one of the above operators $\rho, \sigma $ is normalized to have unit trace, then
	\[
	F(\rho, \sigma) = \norm{\sqrt{\rho}\sqrt{\sigma}}_1\;,
	\]
	which corresponds to the standard definition for fidelity.
\end{itemize}

\subsection{Useful mathematical facts}

We recall two important facts that will be used in the protocols:

\begin{fact}[{Gentle Operator Lemma \cite{Winter_gentle_measurement,GentleMeasurement}}] \label{fact:Gentle_operator_lemma} 
	For a given state $\rho\in \cS_{=}(\cH^A)$ and a positive
	operator $\Lambda^A \leq I^A$ such that
	$\Tr (\Lambda \rho) \geq 1 - \eps$ for a given $\eps >0 $, it holds that:
	\[
	\norm{\rho^A - \sqrt{\Lambda} \rho \sqrt{\Lambda}}_1 \leq 2\sqrt{ \eps} 
	\]
\end{fact}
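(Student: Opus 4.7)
The plan is to lift the inequality to a statement about pure states by purification and then reduce everything to a small explicit computation in a two-dimensional subspace.

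First, pick any purification $\ket{\psi}^{AR}$ of $\rho^A$ on an auxiliary system $R$ and define the subnormalized vector
\[
\ket{\phi}^{AR}:=(\sqrt{\Lambda}^A\otimes \I^R)\ket{\psi}^{AR}\;.
\]
By direct computation $\Tr_R[\ketbra{\phi}^{AR}]=\sqrt{\Lambda}\rho\sqrt{\Lambda}$, so monotonicity of the trace norm under the partial trace gives
\[
\|\rho-\sqrt{\Lambda}\rho\sqrt{\Lambda}\|_1\leq \|\ketbra{\psi}-\ketbra{\phi}\|_1\;,
\]
and it suffices to control the right-hand side.

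Next, record the two scalar quantities that control everything: the squared norm $\|\phi\|^2=\langle\psi|\Lambda\otimes\I|\psi\rangle=\Tr(\Lambda\rho)$ and the overlap $a:=\langle\psi|\phi\rangle=\Tr(\sqrt{\Lambda}\rho)$. The hypothesis $\Tr(\Lambda\rho)\geq 1-\eps$ reads $\|\phi\|^2\geq 1-\eps$, while the operator inequality $\sqrt{\Lambda}\geq \Lambda$ (valid whenever $0\leq \Lambda\leq \I$, by operator monotonicity of the square root on $[0,1]$) gives the key bound $a\geq \Tr(\Lambda\rho)=\|\phi\|^2$.

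The third step is a one-line diagonalization. Since $\ket{\psi}$ and $\ket{\phi}$ span at most a two-dimensional subspace, the difference $\ketbra{\psi}-\ketbra{\phi}$ is a rank-$\leq 2$ Hermitian operator whose trace equals $1-\|\phi\|^2$ and whose Frobenius norm squared equals $1-2|a|^2+\|\phi\|^4$, so its two eigenvalues $\lambda_\pm$ satisfy $\lambda_++\lambda_-=1-\|\phi\|^2$ and $\lambda_+\lambda_-=-(\|\phi\|^2-|a|^2)$. A short manipulation then yields the compact identity
\[
\|\ketbra{\psi}-\ketbra{\phi}\|_1^{\,2}=(1+\|\phi\|^2)^2-4|a|^2=\bigl(1+\|\phi\|^2-2|a|\bigr)\bigl(1+\|\phi\|^2+2|a|\bigr)\;.
\]
Plugging $|a|\geq \|\phi\|^2$ into the first factor gives $1+\|\phi\|^2-2|a|\leq 1-\|\phi\|^2\leq \eps$, while the second factor is trivially at most $4$ since $|a|,\|\phi\|^2\leq 1$. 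The product is therefore $\leq 4\eps$, giving $\|\ketbra{\psi}-\ketbra{\phi}\|_1\leq 2\sqrt{\eps}$, and combining with the partial-trace monotonicity completes the proof.

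The only delicate ingredient is the estimate $a\geq \|\phi\|^2$, which rests on the operator-monotonicity fact $\sqrt{\Lambda}\geq \Lambda$; this is exactly what lets the first factor above collapse to $1-\|\phi\|^2$ and produce the tight constant $2\sqrt{\eps}$ rather than the looser $2\sqrt{2\eps}$ that a naive Cauchy–Schwarz splitting of $\rho-\sqrt{\Lambda}\rho\sqrt{\Lambda}$ would yield.
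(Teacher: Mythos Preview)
Your proof is correct. The purification step, the identity $\|\ketbra{\psi}-\ketbra{\phi}\|_1^2=(1+\|\phi\|^2)^2-4|a|^2$, and the crucial estimate $a=\Tr(\sqrt{\Lambda}\rho)\geq\Tr(\Lambda\rho)=\|\phi\|^2$ from $\sqrt{\Lambda}\geq\Lambda$ all check out, and the factorization cleanly yields the sharp constant $2\sqrt{\eps}$.

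The paper itself does not supply a proof of this statement; it is quoted as a known fact from the cited references. Your argument via purification and the explicit two-dimensional eigenvalue computation is one of the standard routes to the lemma and is essentially the argument found in the literature (e.g.\ Wilde's textbook). An alternative classical proof works directly with $\rho-\sqrt{\Lambda}\rho\sqrt{\Lambda}$ by writing it as $(\I-\sqrt{\Lambda})\rho+\sqrt{\Lambda}\rho(\I-\sqrt{\Lambda})$ and applying Cauchy--Schwarz/H\"older, but as you note that approach naturally gives only $2\sqrt{2\eps}$; your purification method is what extracts the tight constant.
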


\begin{fact}[{\cite[Lemma~2]{Sen_noncommutative}}]\label{fact:noncommutative_union_bound} 
	Let $\rho \in \cS_{\leq}$. Let $\Pi_1, \Pi_2, \ldots, \Pi_k$
	be projectors. Let $\Pi_i' := I - \Pi_i$ be the projector onto the
	subspace orthogonal to the support of $\Pi_i$. Then,
	\[
	\Tr[\Pi_k \Pi_{k-1} \ldots \Pi_1 \rho ] \geq \Tr[\rho]- 2 \sqrt{  \sum \limits_{i=1}^k \Tr[\Pi_i' \rho]}
	\]
\end{fact}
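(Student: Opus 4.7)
The plan is to reduce the bound to the case of a pure state $|\psi\rangle$ and then extend by convexity. Writing $|\psi_j\rangle \coloneqq \Pi_j \Pi_{j-1} \cdots \Pi_1 |\psi\rangle$ with $|\psi_0\rangle = |\psi\rangle$, the central identity is the decomposition
\[
|\psi\rangle - |\psi_i\rangle \;=\; \Pi_i'|\psi\rangle \;+\; \Pi_i\bigl(|\psi\rangle - |\psi_{i-1}\rangle\bigr),
\]
obtained by adding and subtracting $\Pi_i|\psi\rangle$. The two summands lie in the ranges of the orthogonal projectors $\Pi_i'$ and $\Pi_i$, so Pythagoras yields
\[
\bigl\||\psi\rangle - |\psi_i\rangle\bigr\|^2 \;=\; \|\Pi_i'|\psi\rangle\|^2 + \|\Pi_i(|\psi\rangle - |\psi_{i-1}\rangle)\|^2 \;\leq\; \|\Pi_i'|\psi\rangle\|^2 + \bigl\||\psi\rangle - |\psi_{i-1}\rangle\bigr\|^2,
\]
where the last step uses that $\Pi_i$ is a contraction. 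Iterating this induction on $i$ gives the pure-state estimate
\[
\bigl\||\psi\rangle - |\psi_k\rangle\bigr\|^2 \;\leq\; \sum_{j=1}^k \langle\psi|\Pi_j'|\psi\rangle.
\]

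From here, Cauchy--Schwarz yields $\bigl|\langle\psi|\psi_k\rangle - \|\psi\|^2\bigr| \leq \|\psi\|\cdot \sqrt{\sum_j \langle\psi|\Pi_j'|\psi\rangle}$, which is a lower bound on the real part of $\langle\psi|\Pi_k\cdots\Pi_1|\psi\rangle = \Tr[\Pi_k\cdots\Pi_1 |\psi\rangle\langle\psi|]$. To promote this to a mixed sub-state $\rho\in\mathcal{S}_{\leq}$, I would decompose $\rho = \sum_i p_i |\phi_i\rangle\langle\phi_i|$ with unit vectors $|\phi_i\rangle$ and $\sum_i p_i = \Tr[\rho]\leq 1$, apply the pure-state bound term-by-term, and then use Cauchy--Schwarz on the sum $\sum_i p_i \sqrt{\cdot}$ to pass the averaging inside the square root. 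That collapses $\sum_i p_i \sum_j \langle\phi_i|\Pi_j'|\phi_i\rangle$ to $\sum_j \Tr[\Pi_j'\rho]$, and the factor $\sqrt{\Tr[\rho]}\leq 1$ is absorbed, producing precisely the stated inequality (in fact with the slightly tighter leading constant $1$ in place of $2$, so the stated bound is merely conservative).

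The hard part is identifying the right orthogonal decomposition in the inductive step: a naive telescoping triangle inequality $\bigl\||\psi\rangle - |\psi_k\rangle\bigr\| \leq \sum_j \|\Pi_j'|\psi_{j-1}\rangle\|$ followed by Cauchy--Schwarz would introduce a parasitic factor $\sqrt{k}$ which must be avoided. The key insight is that the additive term contributed at step $i$ can be taken to be $\Pi_i'|\psi\rangle$ evaluated against the \emph{original} state $|\psi\rangle$ rather than the intermediate $|\psi_{i-1}\rangle$: the orthogonality $\Pi_i'\Pi_i = 0$ then makes the Pythagorean induction close cleanly, producing the square-root (rather than linear-in-$k$) scaling that is the whole point of the non-commutative union bound.
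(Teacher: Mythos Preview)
The paper does not prove this statement at all: it is quoted verbatim as a preliminary fact from Sen's paper \cite{Sen_noncommutative}, with no argument supplied. Your proposal is therefore not comparable to anything in the present paper, but it is essentially the standard proof of the non-commutative union bound (and indeed matches the argument in the cited reference): reduce to a pure vector, use the orthogonal decomposition $|\psi\rangle-|\psi_i\rangle=\Pi_i'|\psi\rangle+\Pi_i(|\psi\rangle-|\psi_{i-1}\rangle)$, apply Pythagoras and iterate, then pass to mixed $\rho$ by concavity of the square root.

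One small remark on your closing comment about the constant. Your derivation bounds $\operatorname{Re}\langle\psi|\psi_k\rangle$ and indeed yields the sharper constant $1$ for the quantity $\Tr[\Pi_k\cdots\Pi_1\rho]$ read literally as a (real part of a) linear expectation. However, in the paper the fact is actually invoked in the sandwiched form $\Tr[\Pi_k\cdots\Pi_1\cdot\rho]=\|\psi_k\|^2$ (see the decoding analysis in \cref{subsec:proof_lem:slepianwolf}, where the $\cdot$ notation means $ABA^\dagger$). For that quantity your same estimate $\||\psi\rangle-|\psi_k\rangle\|^2\le S$ gives $(\|\psi\|-\|\psi_k\|)^2\le S$ and hence $\|\psi\|^2-\|\psi_k\|^2\le 2\|\psi\|\sqrt{S}$, which is exactly the stated constant $2$. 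So the ``conservative'' factor is not slack but reflects the sandwiched interpretation actually used downstream.
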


We now state a recent one-shot measurement compression with quantum side information result. This will be pivotal for deriving a one-shot achievable rate for the purity distillation protocol to be discussed in \cref{sec:MainProtocol}. We state it as the following fact:

\begin{fact}[{One-shot measurement compression with quantum side information \cite[Proposition~4.7]{ChakrabortyPadakandlaSen_22}}] \label{fact:meas_comp_QSI}
	Given the shared quantum state $\rho^{AB}$, where the receiver Bob possesses the $B$ system, one-shot measurement compression with quantum side information can be achieved with a rate of classical communication
	\begin{align*}
		R_X &> {}^1 I_{\max}^{\eps}(X:RB) - I^{\epsilon_0/2}_H(X:B) 
		+ O(\log \epsilon^{-1})+1 
	\end{align*}
	where $\epsilon_0\coloneqq \eps^{1/10}$. All the entropic quantities in the above equation are computed with 
	respect to 
	$
	\sum\limits_{x\in \cX}
	\ketbra{x}^{X} \otimes \Tr_A[(\Lambda_{x}^A \otimes \I^{BR})\ \rho^{A B R}]\;.
	$
\end{fact}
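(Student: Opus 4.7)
The plan is to construct a two-layer random code, with an outer ``covering'' index $k \in [K]$ that Alice communicates and an inner ``packing'' index $\ell \in [L]$ that Bob recovers using his quantum side information on $B$. Drawing pairs $(k,\ell)$ uniformly and choosing random POVM elements $\tilde{\Lambda}_{k,\ell}$ proportional to $\Lambda_x$ (where $x$ is sampled from the induced distribution $P_X$) defines an ensemble of sub-POVMs $\{\Gamma(k)\}_{k\in[K]}$. Alice picks $k$ and uses $\Gamma(k)$ to measure $\rho^A$; she transmits $k$ over a noiseless classical channel; Bob, upon receiving $k$ and using $B$, decodes $\hat\ell$, and both parties map $(k,\hat\ell)$ to the classical label $x$ via an agreed-upon table. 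The required communication rate is $\log K$, obtained by subtracting the packing gain $\log L$ from the total code size $\log(KL)$.

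For the covering layer I would invoke a one-shot convex-split type bound to show that, in expectation over the random construction, the simulated cq state on $XRB$ is close in trace distance to the target state $\sum_x |x\rangle\langle x|^X \otimes \Tr_A[\Lambda_x^A \varphi_\rho^{ARB}]$, provided that
\[
\log(KL) \ge I_{\max}^{\eps}(X:RB) + O(\log \tfrac{1}{\eps}).
\]
The key ingredient is the operator inequality $\rho^{XRB} \preceq 2^{I_{\max}^{\eps}(X:RB)}\, \rho^X \otimes \rho^{RB}$ that holds up to an $\eps$-smoothing, together with a second-moment estimate on the random average $\tfrac{1}{KL}\sum_{k,\ell} \tilde{\Lambda}_{k,\ell}\otimes |k,\ell\rangle\langle k,\ell|$ acting on a purification $\varphi_\rho^{ARB}$. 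This controls the faithfulness of the simulation after smoothing and pinching.

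For the packing layer I would use a position-based decoder built from the optimizer $\Pi_{\mathrm{opt}}$ of $I_H^{\epsilon_0/2}(X:B)$. Conditioned on $k$, Bob applies $\Pi_{\mathrm{opt}}$ in each of the $L$ ``positions'' sequentially; Sen's noncommutative union bound (\cref{fact:noncommutative_union_bound}), together with the packing inequality $\Tr[\Pi_{\mathrm{opt}}(\rho^X \otimes \rho^B)] \le 2^{-I_H^{\epsilon_0/2}(X:B)}$, shows that Bob recovers $\ell$ correctly (on average over the random code) with vanishing error as long as
\[
\log L \le I_H^{\epsilon_0/2}(X:B) - O(\log \tfrac{1}{\eps}).
\]
Subtracting the two constraints gives the claimed rate $R_X = \log K > I_{\max}^{\eps}(X:RB) - I_H^{\epsilon_0/2}(X:B) + O(\log \tfrac{1}{\eps}) + 1$.

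The main obstacle is coupling the two layers: the random POVM that works for covering must simultaneously admit a packing decoder that exploits $B$. In the asymptotic i.i.d.\ proof of Winter/Wilde et al., this is handled by operator Chernoff and typicality, neither of which is available here. My strategy would be to combine the two analyses by a derandomization argument showing that, with positive probability over the random construction, both the covering simulation error and the packing decoding error are small simultaneously; one then extracts a single deterministic $\Gamma(k)$. Propagating the errors through Markov's inequality, gentle measurement (\cref{fact:Gentle_operator_lemma}), and the conversions between the fidelity-based smoothing of $I_{\max}^{\eps}$ and the trace-distance smoothing of $I_H^{\epsilon_0/2}$ is where the explicit penalty $\epsilon_0 = \eps^{1/10}$ emerges, and cleanly tracking these four or so successive sub-root losses is the most delicate bookkeeping step.
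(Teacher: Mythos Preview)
The paper does not prove this statement: it is recorded in the Preliminaries appendix as a \emph{fact} imported verbatim from \cite[Proposition~4.7]{ChakrabortyPadakandlaSen_22}, with no accompanying argument. So there is no ``paper's own proof'' to compare your proposal against.

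That said, your outline is broadly in line with how the cited result is established: a two-layer random code, with the product $KL$ sized by a one-shot covering/convex-split bound governed by a smoothed max-information $I_{\max}^{\eps}(X:RB)$, and the inner block $L$ sized by a one-shot packing bound governed by $I_H^{\epsilon_0/2}(X:B)$, decoded via a sequential/position-based decoder and Sen's noncommutative union bound. The derandomization step you describe (union-bound over covering and packing errors, then fix a good code) is also the right idea. Two small points to watch: the statement is phrased in terms of ${}^1 I_{\max}^{\eps}$ (the variant minimized over the second marginal) rather than $I_{\max}^{\eps}$, so your covering analysis should target that quantity directly or invoke the equivalence in \cref{fact:smooth_maxinfo_equivalence}; and the specific loss schedule $\epsilon_0=\eps^{1/10}$ is an artifact of the particular chain of smoothing conversions in \cite{ChakrabortyPadakandlaSen_22}, so if you carry out the bookkeeping yourself you may land on a different (but equally acceptable) polynomial exponent.
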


The quantity ${}^1 I_{\max}^{\eps}(X:RB)$ is the smoothed max-information defined in \cref{def:Imax1}.


\section{Useful entropic quantities used in the protocols and their properties} \label{sec:entropies}

As we make extensive use of several entropic quantities, we devote this section to introduce the relevant quantities for our protocols. We also discuss the relationship among them, and show that in the asymptotic i.i.d. limit they indeed converge to their analogous Shannon (or von Neumann, in the quantum case) versions.


\subsection{Standard entropic quantities} \label{subsec:usual_entropies}

\begin{definition}[Shannon and von Neumann entropy]
	For an $\cX$-valued random variable $X$ with $X \sim P_X$, its Shannon entropy is defined as:
	\[
	H(X):=-\sum_{x \in \cX} P_X(x) \log P_X(x)\;.
	\]
	Analogously, for a density matrix $\rho^A$, its von Neumann entropy is defined as:
	\[
	H(A)_\rho:= -\Tr[\rho \log \rho]\;,
	\]
	where, if $\rho^A = \sum_{i \in A} \lambda_i \ketbra{i}^A$ then $\log \rho^A:=  \sum \limits_{i \in \textup{ support }(A)} \log (\lambda_i) \ketbra{i}^A$. Note that the von Neumann entropy coincides with the Shannon entropy of the eigenvalues of $\rho^A$. When no confusion arises, we shall use the term entropy to denote the Shannon entropy or the von Neumann entropy, depending on the case at hand. 
\end{definition}

\begin{definition}[quantum relative entropy]
	For any two positive semidefinite operators $\rho, \; \sigma$, the quantum relative entropy between is defined as:
	\[
	D(\rho||\sigma):= \begin{cases}
		\Tr[\rho \log \rho- \rho \log \sigma]&\text{if }	\operatorname{supp}\rho\subseteq \operatorname{supp}\sigma\;,\\
		+\infty&\text{otherwise.}
	\end{cases}
	\]
\end{definition}

\begin{definition}[quantum mutual information]
	For a given bipartite state $\rho^{AB} \in \cS_{=}(\cH^A \otimes \cH^B)$ the quantum mutual information between systems $A$ and $B$ is defined as:
	\[
	I(A:B)_\rho:=D(\rho^{AB}||\rho^A \otimes \rho^B) = H(A)_\rho+ H(B)_\rho - H(AB)_\rho\;.
	\]
\end{definition}

Various one-shot analogues of $D(\rho||\sigma)$ can be given, but in this work we will mostly focus on $D_{max}$ and $D_H^\eps$ relative entropies~\cite{Datta,Buscemi_Datta}, each with its own operational meaning. In the same vein, one-shot analogues of the quantum mutual information can be formulated. We refer the reader to \cite{QAEP, Tomamichel_thesis, TomamichelHayashi, TomamichelTan} for a detailed exposition to this topic.

\begin{definition}[max-relative entropy~\cite{Datta}]
	For a pair of quantum states $\rho,\; \sigma$ the max-relative entropy is defined as:
	\[
	D_{\max}(\rho||\sigma) := \begin{cases}
		\min{ \{ \lambda: \rho \leq 2^\lambda \sigma \} }
		\equiv \log \norm{\sigma^{-1/2} \rho \sigma^{-1/2}}_\infty\;,&\text{if }	\operatorname{supp}\rho\subseteq \operatorname{supp}\sigma\\
		+\infty\;,&\text{otherwise}\;.
	\end{cases}
	\]
\end{definition}

\begin{definition}[max-entropy~\cite{Tomamichel_thesis}]\label{def:max_entropy}
	Given a distribution $P_X$, the max-entropy of that distribution denoted by $\hmax(X)$, is defined as
	\[
	\hmax(X)\coloneqq 2 \log \sum_x \sqrt{P_X(x)}\;.
	\]
	Similarly, for a quantum state $\rho^A \in \cS_{=}(\cH^A)$, the max-entropy and the $\eps$-smooth max-entropy of $\rho^A$ for any $\eps \in [0,1]$ is defined as:
	\[
	\begin{aligned}
		&\hmax(A)_\rho := 2 \log \norm{\sqrt{\rho}}_1\\
		&\hmax^\eps(A)_\rho := \inf_{\tilde{\rho}^A \in \cB^\eps(\rho^A)} 2 \log \norm{\sqrt{\tilde{\rho}}}_1\;.
	\end{aligned}
	\]
	The max-entropy as defined above coincides with the R\'{e}nyi entropy of order $1/2$ \cite{Renner_minmax}.
\end{definition}

\begin{definition}[conditional min-entropy \cite{Renner_thesis, Datta}]\label{def:Hmin}
	The quantum conditional min-entropy for a bipartite state $\rho^{AB}$ is defined as:
	\begin{align*}
		H_{\min}(A|B)_\rho :&= -\min_{\sigma^B \in \cS_{=}(\cH^B)}D_{\max}(\rho^{AB}\| I^A\otimes \sigma^B)\\
		&= - \log \min_{\sigma^B \in \cS_{\le}(\cH^B)} \{
		\Tr[\sigma^B] : \rho^{AB} \leq I^A \otimes \sigma^B \}\;.
	\end{align*}
	The $\eps$-smoothed quantum conditional $\min$-entropy for a state $\rho^{AB}$ and a value $\eps \in [0,1]$ is defined as:
	\[
	H_{\min}^\eps (A|B)_\rho := \sup_{\tilde{\rho}^{AB} \in \cB^\eps (\rho^{AB})}H_{\min}(A|B)_{\tilde{\rho}}\;.
	\]	
\end{definition}

\begin{definition}[hypothesis-testing relative entropy and mutual information \cite{Buscemi_Datta}]\label{def:I_H}
	The hypothesis testing relative entropy between states $\rho$ and $\sigma$, for a given $\eps\in[0,1]$ is defined as:
	\[
	D_H^\eps (\rho||\sigma):= \sup_{\substack{{0 \leq \Lambda \leq I:}\\{\Tr[\Lambda \rho] \geq 1-\eps}}} -\log \Tr[\Lambda \sigma]\;.
	\]
	Further, the hypothesis testing mutual information is defined as:
	\[
	I_H^\eps(A:B)_\rho := D_H^{\eps}(\rho^{AB} || \rho^A \otimes \rho^B)\;.
	\]
\end{definition}
\begin{remark}
	Without loss of generality we can assume that the optimizing operator $\Lambda$ for any smooth hypothesis testing mutual information quantity which is defined with respect to a classical-quantum state is also classical-quantum in the sense that it is classical on the classical system.
\end{remark}

\begin{definition}[max-mutual information] \label{def:Imax1}
	For a given a bipartite state $\rho^{AB}$ and a value $\eps\in[0,1]$, the max-information and the $\eps$-smoothed max-information are respectively defined as:
	\[
	\begin{aligned}
		&I_{\max}(A : B)_{\rho} := D_{\max} (\rho^{AB} || \rho^A \otimes \rho^B),\\
		&I_{\max}^{\eps}(A : B)_{\rho} := \min_{
			\tilde\rho^{AB} \in \cB^\eps(\rho^{AB})}
		{I_{\max} (A : B)_{\tilde\rho}}\;. 
	\end{aligned}
	\]
\end{definition}


\subsection{New modified one-shot entropic quantities} 

In this subsection we introduce alternative versions of the one-shot entropic quantities mentioned above in \cref{subsec:usual_entropies}. We shall use two new versions of the smoothed max-entropy defined as follows:

\begin{definition}[\cite{SenNotes21}, smoothed support max-entropy] \label{def:smooth_supp_maxentropy}
	Let $\eps \in [0,1]$. Given a distribution $P_X$, let $\mathcal{S}_{\textsc{bad}} \subset \mathcal{X}$ consist of those symbols in the support of $P_X$, which correspond to the smallest probabilities under $P_X$ that add up to $\epsilon$, and define the sub-distribution $P'_X$ as
	\[
	P'_X(x) = \begin{cases}
		P_X(x)\;,&	\forall x\in \mathcal{X}\setminus \mathcal{S}_{\textsc{bad}}\;, \\
		0\;,& \text{otherwise.}
	\end{cases}
	\]
	Then, the $\eps$-smoothed support max-entropy of $X$ is defined as
	\[
	\widetilde{H}_{\max}^{\eps}(X)\coloneqq \log \abs{\operatorname{supp}(P'_X)}\;.
	\]
	The quantum smoothed support max-entropy is defined analogously. Let $\rho^A \in \cS_{=}(\cH^A)$ and ${\rho'}^A \leq \rho^A$ be a substate obtained by zeroing out the smallest eigenvalues of $\rho^A$ that add up to at most $\eps$. Then:
	\[
	\widetilde{H}_{\max}^\eps(A)_\rho:= \log \abs{\operatorname{supp}(\rho')}\;.
	\]
\end{definition}

\begin{definition}[\cite{NemaSen_decoupling}, smoothed norm max-entropy]\label{def:Hmax}
	Under the notations of \ref{def:smooth_supp_maxentropy}, we define another variant of the max-entropy, which we shall refer to as the smoothed norm max-entropy, as:
	\[
	(\hmax')^\eps(X):= \log \norm{(P'_X)^{-1}}_\infty
	\]
	We also define the quantum smoothed norm max-entropy analogously. Then:
	\[
	(\hmax')^\eps(A)_\rho:= \log \norm{(\rho')^{-1}}_\infty
	\]
	where inverses are understood as pseudo-inverses, that is, inverses defined only on the support and zero elsewhere.
\end{definition}

We now define a mutual information-like quantity and its $\eps$-smoothed version as follows:

\begin{definition}[modified max-information~\cite{Ciganovic_2014}]\label{def:Imax2}
	For $\rho^{AB} \in \cS_{\leq}(\cH^{AB})$ and $\eps \geq 0$, the modified max-information and its $\eps$-smoothed modified max-information, for $\eps>0$, are defined as:
	\[
	\begin{aligned}
		&{}^{1}{I_{\max}(A : B)_\rho }:= \min_{\sigma^B \in \cS_{=}(\cH^B)}D_{\max} (\rho^{AB}||\rho^A \otimes \sigma^B),\\
		&{}^{1}{I_{\max}^\eps (A : B)_\rho} := \min_{
			\rho' \in \cB^\eps(\rho)}
		{}^{1}{I_{\max} (A : B)_{\rho'}} 
	\end{aligned}
	\]
\end{definition}


\subsection{Useful relations between various entropic quantities}

In this subsection we mention the useful relations between aforementioned entropic quantities, some of which are already available in the literature, and others which are new to this work. These relations serve the following two goals for us:
\begin{enumerate}
	\item Relate ${}^1 I_{\max}^\eps(X;B)_{\rho^{XB}}$ with $\widetilde{H}_{\max}^\eps(A)_{\rho^A}$ circumventing one of the shortcomings of the dud protocol in \cref{Protocol:dud}, which gives a term equal to $\widetilde{H}_{\max}^\eps(X)_{P_X}$ in the purity distillation rate for Alice. Hence, this leads to the application of a compressed POVM obtained from the one-shot measurement compression theorem of \cite{ChakrabortyPadakandlaSen_22}, resulting in the term ${}^1 I_{\max}^\eps(X;B)_{\rho^{XB}}$.
	\item Relate smoothed support and smoothed norm max-entropies in \cref{def:smooth_supp_maxentropy} and \cref{def:Hmax}, respectively, with each other, and with the standard smoothed max-entropy, which enables us to ensure that both these modified entropies converge to their Shannon analogue in the asymptotic i.i.d. limit. 
\end{enumerate}

\begin{fact}[{ \cite[Lemma~B.15]{Berta_RevShannon}}]\label{fact:Berta}
	Given a bipartite quantum state $\rho^{AB}$, 
	\[
	{}^{1}I_{\max}^{\eps}(A:B)_{\rho}\leq H_{\max}^{\eps^2/48}(A)_{\rho}-H_{\min}^{\eps^2/48}(A|B)_{\rho}-2\log \frac{\eps^2}{24}\;.
	\]
\end{fact}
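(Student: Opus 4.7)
The plan is to build a single smoothing of $\rho^{AB}$ that simultaneously inherits two properties: a domination bound by $I^A \otimes \sigma^B$ coming from the min-entropy definition, and a marginal on $A$ whose support projector is dominated by a small multiple of the marginal itself, coming from the max-entropy (Rényi-$1/2$) definition. Set $\delta := \eps^2/48$ throughout.

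First, I would unpack $H_{\min}^\delta(A|B)_\rho$: pick a state $\rho_1^{AB}$ at purified distance at most $\delta$ from $\rho^{AB}$ and a state $\sigma^B$ so that
\[
\rho_1^{AB} \leq 2^{-H_{\min}^\delta(A|B)_\rho}\,I^A \otimes \sigma^B.
\]
Next, since the partial trace is contractive under purified distance, $\rho_1^A$ lies at distance $\leq \delta$ from $\rho^A$, so $H_{\max}^{\delta}(A)_{\rho_1} \leq H_{\max}^{2\delta}(A)_\rho$ by a triangle inequality on smoothing balls. Let $\omega^A$ achieve $H_{\max}^\delta(A)_{\rho_1}$, so $\Tr\sqrt{\omega^A} \leq 2^{H_{\max}^\delta(A)_{\rho_1}/2}$.

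The heart of the argument is to convert this Rényi-$1/2$ bound into a genuine support bound. Retain only the eigenvalues of $\omega^A$ above a threshold $t$, producing a subnormalized $\omega'^A$ and its support projector $\Pi^A$. By the Markov-type estimate $\sum_{p_i < t} p_i \leq \sqrt{t}\,\sum_i \sqrt{p_i}$, one can choose $t$ such that the removed mass is at most $\delta$ and $\log t^{-1} \leq H_{\max}^\delta(A)_{\rho_1} + 2\log(1/\delta)$. On the support of $\Pi^A$ we get $\Pi^A \leq t^{-1}\omega^A$; using closeness of $\omega^A$ to $\rho_1^A$ and a perturbation argument (gentle measurement plus replacement of $\omega^A$ by $\rho_1^A$), this upgrades to a projector $\hat\Pi^A$ with $\hat\Pi^A \leq t^{-1}\rho_1^A$ and $\Tr[\hat\Pi^A\rho_1^A] \geq 1-O(\delta)$.

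Finally, define $\bar\rho^{AB} := \hat\Pi^A \rho_1^{AB}\hat\Pi^A / \Tr[\hat\Pi^A \rho_1^A]$. The gentle-measurement lemma \ref{fact:Gentle_operator_lemma} places $\bar\rho^{AB}$ within $O(\sqrt{\delta})$ of $\rho_1^{AB}$, hence within $\eps$ of $\rho^{AB}$, justifying the choice $\delta = \eps^2/48$. Sandwiching the two dominations yields
\[
\bar\rho^{AB} \leq 2^{-H_{\min}^\delta(A|B)_\rho}\,\hat\Pi^A \otimes \sigma^B \leq 2^{H_{\max}^\delta(A)_\rho - H_{\min}^\delta(A|B)_\rho + O(\log 1/\eps)}\,\bar\rho^A \otimes \sigma^B,
\]
where the last step uses $\hat\Pi^A \leq t^{-1}\rho_1^A$ together with $\bar\rho^A \propto \hat\Pi^A \rho_1^A \hat\Pi^A$, giving the required bound on ${}^1 I_{\max}^\eps(A:B)_\rho$. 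The main obstacle is Step~3, the \emph{compatibility} of the two smoothings: the min-entropy smoothing lives on the joint system $AB$, while the max-entropy smoothing is naturally produced on the marginal $A$, and one must argue that the projection obtained from the $A$-side smoothing, when lifted to $AB$, does not spoil the min-entropy inequality. The accumulated losses from purified-distance triangle inequalities, gentle measurement, and the $\sqrt{t}$-style Markov bound are precisely what produce the constant $48$ appearing in $\eps^2/48$.
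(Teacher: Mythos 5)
First, a point of reference: the paper does not prove this statement at all — it is imported verbatim as Fact~\ref{fact:Berta} from Lemma~B.15 of the cited reference — so your proposal is being measured against the literature rather than against an in-paper argument. The overall shape of your argument (smooth once for the min-entropy, build a threshold projector on $A$ from the max-entropy, then sandwich) is the right kind of strategy, and your final sandwiching step is sound: from $\hat\Pi^A \leq t^{-1}\rho_1^A$ one does get, by conjugation, $\hat\Pi^A \leq t^{-1}\hat\Pi^A\rho_1^A\hat\Pi^A \leq t^{-1}\bar\rho^A$, which is exactly what the definition of ${}^{1}I_{\max}$ (first marginal fixed to that of the smoothed state) requires.

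The genuine gap is precisely the step you call the heart of the argument: upgrading $\Pi^A \leq t^{-1}\omega^A$ to $\hat\Pi^A \leq t^{-1}\rho_1^A$ by a ``perturbation argument.'' Operator dominations of this kind do not survive trace-norm perturbations. Quantitatively, your Markov-type estimate forces the threshold to satisfy $t \approx \delta^2\, 2^{-H_{\max}}$, which is exponentially small in the entropy, whereas the distance between $\omega^A$ and $\rho_1^A$ is only $O(\delta)$. For a unit vector $\ket{v}$ in the support of $\Pi^A$ one only knows $\abs{\bra{v}\omega^A\ket{v}-\bra{v}\rho_1^A\ket{v}} \leq O(\delta) \gg t$, so $\bra{v}\rho_1^A\ket{v}$ can vanish even though $\bra{v}\omega^A\ket{v}\geq t$; the projector need not even be supported inside $\operatorname{supp}(\rho_1^A)$, and no gentle-measurement argument (which controls trace overlaps, not operator inequalities) can restore $\hat\Pi^A \lesssim t^{-1}\rho_1^A$. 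Thresholding $\rho_1^A$ directly avoids this but replaces $H_{\max}^{\delta}(A)$ by the \emph{unsmoothed} max-entropy of $\rho_1^A$, which can be much larger — this tension is exactly what makes the lemma nontrivial, and it is resolved in the literature by a different (duality-based) construction rather than a perturbative patch. A secondary, fixable slip: $H_{\max}^{\delta}(A)_{\rho_1} \leq H_{\max}^{2\delta}(A)_{\rho}$ goes the wrong way, since $\cB^{\delta}(\rho_1)\subseteq\cB^{2\delta}(\rho)$ makes the infimum over the former ball the \emph{larger} one; you must smooth $\rho_1$ at the larger radius to inherit the optimizer for $\rho$.
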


The smoothed-max mutual information in \cref{def:Imax1} and \ref{def:Imax2} above are equivalent up to smoothing parameters:

\begin{fact}[{\cite[Theorem~3]{Ciganovic_2014}}]  \label{fact:smooth_maxinfo_equivalence}
	
	Let $\rho^{AB}\in \cS_{=}(\cH^{A} \otimes \cH^B)$, $\eps'>0$ and $\eps\in(0,1/4)$. Then there exists a real valued function $g$ of $\eps$, such that $g(\eps)=O(\log \frac{12}{\eps^2})$
	and the following equivalence holds:
	\begin{align*}
		{}^{1}I_{\max}^{\eps + 2 \sqrt{\eps}+\eps'}(A:B)_\rho & \leq I_{\max}^{\eps + 2 \sqrt{\eps}+\eps'}(A:B)_\rho \\
		& \leq {}^{1}I_{\max}^{\eps'}(A:B)_\rho+g(\eps)\;. 
	\end{align*}
\end{fact}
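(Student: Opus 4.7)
The first inequality is immediate from the definitions: for every $\tilde\rho \in \cB^{\eps + 2\sqrt\eps + \eps'}(\rho)$, plugging $\sigma^B = \tilde\rho^B$ into the minimisation defining ${}^1I_{\max}(A:B)_{\tilde\rho}$ shows $I_{\max}(A:B)_{\tilde\rho} \geq {}^1I_{\max}(A:B)_{\tilde\rho}$, and taking the infimum over $\tilde\rho$ on both sides yields the left-hand bound. The nontrivial content is the right-hand inequality, which I now sketch.

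The plan is to use a ``Markov-type'' spectral truncation on the $B$ system, which replaces the auxiliary state $\sigma^B$ in the definition of ${}^1I_{\max}$ by the marginal of a slightly further-smoothed state, at the additive cost of a $O(\log 1/\eps)$ term. Let $\lambda := {}^1I_{\max}^{\eps'}(A:B)_\rho$, realised by some $\rho' \in \cB^{\eps'}(\rho)$ and an auxiliary state $\sigma^B$ with $\rho'^{AB} \leq 2^\lambda\, \rho'^A \otimes \sigma^B$. First, I would form the positive operator $M^B := (\rho'^B)^{-1/2}\, \sigma^B\, (\rho'^B)^{-1/2}$ (pseudo-inverses on $\operatorname{supp} \rho'^B$), and let $\Pi^B$ be the spectral projector of $M^B$ onto eigenvalues $\leq 1/\eps$. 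Since $\Tr[M^B\, \rho'^B] = \Tr[\sigma^B] = 1$, Markov's inequality in the $\rho'^B$-state gives $\Tr[\Pi^B\, \rho'^B] \geq 1 - \eps$, and by construction $\Pi^B \sigma^B \Pi^B \leq (1/\eps)\, \Pi^B \rho'^B \Pi^B$ in the basis diagonalising $M^B$. Define $\tilde\rho := N^{-1}(I^A \otimes \Pi^B)\, \rho'^{AB}\, (I^A \otimes \Pi^B)$, where $N$ is the normalising trace; the gentle measurement lemma (\cref{fact:Gentle_operator_lemma}) together with the renormalisation step yields $\|\tilde\rho - \rho'\|_1 \leq 2\sqrt\eps + \eps$, which combined via the triangle inequality with $\rho' \in \cB^{\eps'}(\rho)$ places $\tilde\rho$ in $\cB^{\eps + 2\sqrt\eps + \eps'}(\rho)$. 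Sandwiching $\rho'^{AB} \leq 2^\lambda \rho'^A \otimes \sigma^B$ by $I^A \otimes \Pi^B$ and applying the spectral bound on $\Pi^B \sigma^B \Pi^B$ yields $\tilde\rho^{AB} \leq 2^{\lambda + g(\eps)}\, \rho'^A \otimes \tilde\rho^B$; absorbing the small discrepancy between $\rho'^A$ and $\tilde\rho^A$ then produces the desired $\tilde\rho^{AB} \leq 2^{\lambda + g(\eps)}\, \tilde\rho^A \otimes \tilde\rho^B$ with $g(\eps) = \log(1/\eps) + O(1) = O(\log 1/\eps^2)$, establishing the right-hand inequality.

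Two obstacles deserve attention. Conceptually, the non-commutativity of $\sigma^B$ and $\rho'^B$ forces the projector $\Pi^B$ to be chosen from the spectrum of $M^B$ rather than from any individual spectrum, and the spectral inequality $\Pi^B \sigma^B \Pi^B \leq (1/\eps) \Pi^B \rho'^B \Pi^B$ must be derived in the basis diagonalising $M^B$ and conjugated back through $(\rho'^B)^{1/2}$ with some care. Technically, replacing the $A$-marginal $\rho'^A$ by $\tilde\rho^A$ on the right-hand side of the sandwiched inequality is not automatic, since $\rho'^A - N\tilde\rho^A$ has small trace but need not have small operator norm; a natural fix is a second, analogous gentle-measurement smoothing on the $A$ side, whose $O(\eps)$ contribution can be folded into the existing $\eps + 2\sqrt\eps + \eps'$ smoothing budget. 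The main difficulty is thus in the bookkeeping, tracking the parameters across both the $A$- and $B$-side truncations and through the conversions between generalised fidelity (used for smoothing in \cref{sec:Notation}) and trace distance (used in the gentle measurement lemma), so as to land exactly on the stated ball radius.
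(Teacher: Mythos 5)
This statement is not proved in the paper at all --- it is imported verbatim from \cite{Ciganovic_2014} --- so your attempt can only be measured against the known proof of that theorem. Your treatment of the first inequality is correct and complete. For the second inequality your overall plan (perturb $\rho'$ so that its $B$-marginal dominates $\sigma^B$ up to a factor $1/\eps$) is reasonable, but the execution has two gaps, one fixable and one fatal to the argument as sketched. The fixable one: the spectral projector $\Pi^B$ of $M^B=(\rho'^B)^{-1/2}\sigma^B(\rho'^B)^{-1/2}$ does \emph{not} satisfy $\Pi^B\sigma^B\Pi^B\le\eps^{-1}\Pi^B\rho'^B\Pi^B$; conjugating $\Pi M\Pi\le\eps^{-1}\Pi$ by $(\rho'^B)^{1/2}$ yields $T\sigma^B T^\dagger\le\eps^{-1}(\rho'^B)^{1/2}\Pi(\rho'^B)^{1/2}$ with $T=(\rho'^B)^{1/2}\Pi(\rho'^B)^{-1/2}$, which is not the claimed inequality when $\Pi$ and $\rho'^B$ fail to commute. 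The standard repair is to take $\Pi^B$ to be the projector onto the nonnegative eigenspace of $\eps^{-1}\rho'^B-\sigma^B$, which gives both $\Pi^B\sigma^B\Pi^B\le\eps^{-1}\Pi^B\rho'^B\Pi^B$ and, by evaluating the negative part against $\sigma^B$, $\Tr[\Pi^B\rho'^B]\ge1-\eps$.

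The fatal gap is the step you defer to ``a second, analogous gentle-measurement smoothing on the $A$ side.'' After the $B$-truncation you hold $\tilde\rho^{AB}\le 2^{\lambda}\eps^{-1}\rho'^A\otimes\tilde\rho^B$ and must replace $\rho'^A$ by $\tilde\rho^A$; but $\tilde\rho^A$ can lose support that $\rho'^A$ retains (condition on a low-probability $B$-event that is annihilated by $\Pi^B$ and steers $A$ into an otherwise unpopulated direction), so no operator bound $\rho'^A\le c\,\tilde\rho^A$ holds. A second truncation $Q^A$ does restore $Q\rho'^AQ\le\eps^{-1}Q\tilde\rho^AQ$, but it perturbs the state again, and you then need $\tilde\rho^B\le c\,\hat\rho^B$ for the newly truncated state $\hat\rho$ --- once more an inequality in the unobtainable direction. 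The procedure cascades and does not terminate. The proof in \cite{Ciganovic_2014} avoids projections altogether: take $\tilde\rho:=(1-\eps)\rho'+\eps\,\rho'^A\otimes\sigma^B$. Then $\tilde\rho^A=\rho'^A$ exactly, $\tilde\rho^B\ge\eps\,\sigma^B$ by construction, and $\tilde\rho\le\bigl((1-\eps)2^{\lambda}+\eps\bigr)\rho'^A\otimes\sigma^B\le 2^{\lambda}\tfrac{1+\eps}{\eps}\,\tilde\rho^A\otimes\tilde\rho^B$, while $\norm{\tilde\rho-\rho'}_1\le2\eps$ converts to a purified distance of at most $2\sqrt{\eps}$, landing exactly on the ball radius $\eps+2\sqrt{\eps}+\eps'$ and on $g(\eps)=O(\log\tfrac{1}{\eps^2})$. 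Mixing, not projecting, is the missing idea.
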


We now relate these different definitions of the smoothed max-entropies using the following proposition: 

\begin{proposition}
	\label{prop:Hmax_inequalities}
	From \cref{def:max_entropy} it is straightforward to see that: 
	\[
	\hmax^\eps(A)_\rho \leq \widetilde{H}_{\max}^\eps(A)_\rho \leq (\hmax')^\eps(A)_{\rho} \leq \frac{d_A}{\eps}.
	\]
\end{proposition}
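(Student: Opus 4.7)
My plan is to prove the three inequalities separately, as each follows from a short eigenvalue-level calculation once the right witness state is chosen.

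For the first inequality $\hmax^\eps(A)_\rho \leq \widetilde{H}_{\max}^\eps(A)_\rho$, the idea is to use the sub-normalized trimmed state $\rho'$ itself as a feasible point in the infimum defining $\hmax^\eps$. Since $\rho'$ is diagonal in the eigenbasis of $\rho$ with $\Tr[\rho']\ge 1-\eps$, the generalized fidelity evaluates to $F(\rho,\rho')=\Tr[\rho']\ge 1-\eps$, placing $\rho'$ inside $\cB^\eps(\rho)$ (up to the standard conversion between trace distance / fidelity / purified distance smoothing). Letting $\{\mu_i\}_{i\in\operatorname{supp}(\rho')}$ denote the nonzero eigenvalues of $\rho'$, Cauchy--Schwarz gives
\[
\|\sqrt{\rho'}\|_1 = \sum_i \sqrt{\mu_i} \le \sqrt{|\operatorname{supp}(\rho')|\cdot \textstyle\sum_i\mu_i}\le \sqrt{|\operatorname{supp}(\rho')|},
\]
so $2\log\|\sqrt{\rho'}\|_1\le \log|\operatorname{supp}(\rho')|=\widetilde H^\eps_{\max}(A)_\rho$, as required.

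For the middle inequality $\widetilde H^\eps_{\max}(A)_\rho \le (\hmax')^\eps(A)_\rho$, I would use a one-line pigeonhole. Let $\mu_{\min}$ be the smallest nonzero eigenvalue of $\rho'$; then $\|(\rho')^{-1}\|_\infty=1/\mu_{\min}$. Since $\sum_i \mu_i\le 1$ and every nonzero $\mu_i\ge\mu_{\min}$, we have $|\operatorname{supp}(\rho')|\cdot \mu_{\min}\le 1$, i.e.\ $|\operatorname{supp}(\rho')|\le 1/\mu_{\min}$, and taking logarithms yields the claim.

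For the final inequality, which should read $(\hmax')^\eps(A)_\rho\le \log(d_A/\eps)$ (the statement as written omits the log, consistent with how the bound is invoked in the proof of \cref{lem:slepianwolf}), I rely on the maximality of the trimming. Let $\mu_{\min}$ be as above. By maximality, removing one more eigenvalue (namely $\mu_{\min}$) would push the discarded mass strictly above $\eps$. Since every discarded eigenvalue is $\le \mu_{\min}$ and there are at most $d_A$ eigenvalues in total,
\[
\eps \;<\; \mu_{\min} + \sum_{\text{removed}}\lambda_i \;\le\; d_A\,\mu_{\min},
\]
so $\mu_{\min}>\eps/d_A$, hence $(\hmax')^\eps(A)_\rho=\log(1/\mu_{\min})<\log(d_A/\eps)$.

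The main conceptual obstacle is the first inequality, because the precise relation between $\Tr[\rho']\ge 1-\eps$ and membership in $\cB^\eps(\rho)$ depends on whether the smoothing ball is taken with respect to purified distance or generalized fidelity; the argument may introduce a constant factor in $\eps$ but does not affect the bound qualitatively. The other two inequalities are elementary and essentially just count eigenvalues.
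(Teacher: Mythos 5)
Your proof is correct and follows essentially the same route as the paper's: the trimmed state $\rho'$ as the witness for the smoothing ball plus Cauchy--Schwarz for the first inequality, the pigeonhole bound $|\operatorname{supp}(\rho')|\,\mu_{\min}\le 1$ for the second, and the maximality-of-trimming argument $\eps < (k+1)\mu_{\min}\le d_A\,\mu_{\min}$ for the third. You are also right that the final bound should read $\log(d_A/\eps)$ — the paper's proof in fact derives exactly that — and your caveat about the metric defining $\cB^\eps(\rho)$ is a legitimate point that the paper itself glosses over.
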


\begin{proof}
We work with the notation in \cref{def:Hmax}. We begin by arranging the eigenvalues, say $\{\lambda_i\}_{i=1}^{d_A}$ of $\rho^A$ in ascending order, that is, $\lambda_1 \leq \lambda_2 \leq \ldots \leq \lambda_{d_A}$ and corresponding eigenvectors as $\ket{v_1}, \ket{v_2}, \dots, \ket{v_{d_A}}$. We subsume zero eigenvalues in the above ordering, if $\rho$ has rank less than $d_A$. Define the subset $\mathcal{S}_{\textsc{bad}}$ as the set of the first $k$ eigenvectors $\brak{\ket{v_1}, \ket{v_2} \ldots, \ket{v_k}}$ of $\rho$  which correspond to the eigenvalues $\brak{\lambda_1, \lambda_2,\ldots, \lambda_k}$ such that
\[
\sum_{1}^{k}\lambda_k \leq \eps
\]
and 
\[
 \sum_{i=1}^k \lambda_i + \lambda_{k+1}> \eps.
\]
Also define the projector $\Pi_{\mathcal{S}}$ as the projector onto the span of the vectors in $\mathcal{S}_{\textsc{bad}}$, i.e,:
\[
\Tr[\Pi_{\mathcal{S}} \rho]=\sum_{i=1}^k \lambda_i \leq \eps
\]
Thus, $\rho':= \sum \limits_{i=k+1}^{d_A} \lambda_i \ketbra{v_i}$.
By Definitions~\ref{def:max_entropy} and~\ref{def:smooth_supp_maxentropy} we have:
\begin{align*}
\hmax^\eps(A)_\rho &\leq   2 \log (\Tr[\sqrt{\rho'}])\\
&= 2 \log (\sum_{i \geq k+1} \sqrt{\lambda_i})\\
&\le 2 \log \left( \sqrt{|\textup{supp}(\rho')|} \times \sqrt{\sum_{i \geq k+1} \lambda_i} \right)\\
&\leq 2 \log (\sqrt{|\textup{supp}(\rho')|})\\
&= \widetilde{H}_{\max}^\eps(A)_\rho\;,
\end{align*}
where the second inequality comes from an application of the Cauchy--Schwarz inequality.
Also by the above ordering on the eigenvalues of $\rho^A$, we have $(\hmax')^\eps(A)_\rho=\log \frac{1}{\lambda_{k+1}}$. Then, since $1 \geq \sum_{i \geq k+1} \lambda_i\ge \abs{\textup{supp}(\rho')} \lambda_{k+1}$, we have that
\begin{align*}
\frac{1}{\lambda_{k+1}} \geq \abs{\textup{supp}(\rho')}\;, 
\end{align*}
or equivalently, $ (\hmax')^\eps (A)_\rho \geq \widetilde{H}_{\max}^\eps(A)_\rho$.
We also have $\lambda_{k+1} \geq \lambda_i, \; \forall\; i \leq k+1$.
This implies:
\begin{align*}
\sum \limits_{i=1}^{k+1} \lambda_{k+1} \geq \sum \limits _{i=1}^{k+1}\lambda_i \geq \eps\;,
\end{align*}
and therefore $(k+1)\lambda_{k+1} \geq \eps$, that is, $\lambda_{k+1} \geq \frac{\eps}{k+1} \geq \frac{\eps}{d_A}$.
Hence, \[\widetilde{H}_{\max}^\eps(A)_{\rho} \leq (\hmax')^\eps(A)_\rho \leq \frac{|d_A|}{\eps}\;.\]
This finishes the proof.
\end{proof}


\subsubsection{Relation between ${}^{1}I_{\max}^\eps(X:B)$ and $\widetilde{H}_{\max}^\eps(A)$} \label{subsec:Imax_upperbound}

We now prove the following important lemma that help us relate the smoothed max-information \cref{def:Imax1} obtained from the measurement compression theorem reported in~\cite{ChakrabortyPadakandlaSen_22} with the smoothed support max-entropy \cref{def:smooth_supp_maxentropy}. We begin by proving the following claim, which, even though very simple, is essential in proving the main lemma of this subsection, \cref{lem:Imax_Hmaxtilde}: 

\begin{claim}\label{claim:positivity}
	For a given classical quantum quantum state 
	\[
	\rho^{XR}= \sum_x P_X(x)\ketbra{x}^X\otimes \rho_x^R\;,
	\]
	it holds that
	\[
	H_{\min}^{\eps}(R|X)\geq 0\;.
	\]
\end{claim}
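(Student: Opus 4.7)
The plan is to exploit the classical-quantum structure of $\rho^{XR}$ to exhibit an explicit feasible choice of $\sigma^X$ in the variational definition of $H_{\min}(R|X)$ that certifies non-negativity even without smoothing, and then use the monotonicity of smoothing to conclude.

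Concretely, I would unpack \cref{def:Hmin} for the pair $(R|X)$: we need to find a state $\sigma^X\in\cS_{=}(\cH^X)$ such that $\rho^{XR}\leq I^R\otimes \sigma^X$, since any such witness gives $D_{\max}(\rho^{XR}\|I^R\otimes \sigma^X)\leq 0$ and hence $H_{\min}(R|X)\geq 0$. The natural candidate is the marginal $\sigma^X:=\rho^X=\sum_x P_X(x)\ketbra{x}^X$. Since each conditional state $\rho_x^R$ is a density matrix it satisfies $\rho_x^R\leq I^R$, and because the operator inequality is preserved under tensoring with the positive operator $\ketbra{x}^X$ and under summation with non-negative weights, we get
\[
\rho^{XR}=\sum_x P_X(x)\ketbra{x}^X\otimes \rho_x^R\ \leq\ \sum_x P_X(x)\ketbra{x}^X\otimes I^R\ =\ I^R\otimes \rho^X\;.
\]
This immediately gives $D_{\max}(\rho^{XR}\|I^R\otimes \rho^X)\leq 0$, and the minimization in \cref{def:Hmin} is at most this value, so $H_{\min}(R|X)_\rho\geq 0$.

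For the smoothed statement, I would invoke the fact that $H_{\min}^{\eps}(R|X)$ is the supremum of $H_{\min}(R|X)$ over the $\eps$-ball $\cB^{\eps}(\rho^{XR})$. Since $\rho^{XR}$ itself lies in $\cB^{\eps}(\rho^{XR})$ for every $\eps\geq 0$, we conclude $H_{\min}^{\eps}(R|X)_\rho\geq H_{\min}(R|X)_\rho\geq 0$, as claimed. There is no real obstacle here; the key observation is simply that the $I^R\otimes \sigma^X$ normalization allows us to absorb each $\rho_x^R\leq I^R$ block-wise, which is exactly the structural feature of cq states.
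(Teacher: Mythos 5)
Your proof is correct and follows essentially the same route as the paper's (one-line) argument: the paper invokes $H_{\min}(R)_{\rho_x}\geq 0$, i.e.\ $\rho_x^R\leq I^R$ for each $x$, which is exactly the block-wise observation you make explicit by exhibiting the feasible witness $\sigma^X=\rho^X$ and then passing to the smoothed quantity via $\rho^{XR}\in\cB^{\eps}(\rho^{XR})$. Your write-up is simply a fully spelled-out version of the same idea.
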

\begin{proof}
	The claim can be proved simply by using the definitions of conditional and smoothed conditional min-entropies for a classical-quantum state mentioned in \cref{def:Hmin} and the property that $H_{\min}(R)_{\rho_x} \geq 0$, for every $x$.\\
\end{proof}

\begin{lemma} \label{lem:Imax_Hmaxtilde}
	Given a state $\rho^A$ on system $A$, let $R$ be a reference purifying $A$, so that $\Tr_R[\ketbra{\psi}^{RA}]=\rho^A$. Given a POVM $\brak{\Lambda_x^A}$ on $A$, consider the resulting cq-state 
	\[
	\sigma^{XR} \coloneqq \sum_x P_X(x) \ketbra{x}^X\otimes \rho_x^R\;,
	\]
	where $\rho_x^R$ is the state of the reference conditional on the outcome $x$, that is,
	\[
	\rho_x^R \coloneqq \frac{1}{\Tr[\Lambda_x \rho]}\left(\sqrt{\rho}\Lambda_x^T \sqrt{\rho} \right)^R\;.
	\]
	Then, for any 
	$0 < \eps \leq \frac{1}{4}$ and $0 < \eps^{\prime} \leq \min\{ \frac{\eps}{3}, \frac{1}{4} \}$, it holds that
	\begin{align}
		&{}^{2}I_{\max}^{\eps}(X:R)_{\sigma^{XR}} \leq  \widetilde{H}_{\max}^{O(\eps^2)}(A)_{\rho}-O(\log \eps)\; \label{eq:smooth_Imax2_Hmax}  \\ 
		&I_{\max}^{2 \eps}(X:R)_{\sigma^{XR}}  \leq \widetilde{H}_{\max}^{O(\eps^2)}(A)_{\rho}-O ( \log \eps ) + O (\log \frac{\eps'^2}{12}) \label{eq:smooth_Imax1_Hmax}
	\end{align}
\end{lemma}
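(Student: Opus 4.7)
The plan is to reduce both inequalities to a single master estimate obtained by invoking Berta's inequality (\cref{fact:Berta}) in the orientation that puts the \emph{purifying} system $R$ in the first slot rather than the classical system $X$. The reason for this orientation is a structural one: because $R$ purifies $\rho^A$ in $\ket{\psi}^{RA}$, the reduced states $\rho^R$ and $\rho^A$ share their non-zero spectrum, so every spectrum-based quantity, in particular $H_{\max}^{\eta}$ and $\widetilde{H}_{\max}^{\eta}$, takes the same value on $R$ and on $A$. This is what allows the max-entropy of the classical POVM outcome to be replaced by the much more tractable max-entropy of the original state $\rho^A$.

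Concretely, for the first inequality I would apply \cref{fact:Berta} to $\sigma^{XR}$ with $A\mapsto R$ and $B\mapsto X$, obtaining
\[
{}^{1}I_{\max}^{\eps}(R:X)_{\sigma} \le H_{\max}^{\eps^2/48}(R)_{\sigma} - H_{\min}^{\eps^2/48}(R|X)_{\sigma} - 2\log\tfrac{\eps^2}{24}\;.
\]
The conditional min-entropy term is killed by \cref{claim:positivity}, which gives $H_{\min}^{\eps^2/48}(R|X)\ge 0$ precisely because $\sigma^{XR}$ is a cq-state with $X$ classical. The spectrum identity then rewrites $H_{\max}^{\eps^2/48}(R)=H_{\max}^{\eps^2/48}(A)$, after which \cref{prop:Hmax_inequalities} upgrades this to $\widetilde{H}_{\max}^{\eps^2/48}(A)\le\widetilde{H}_{\max}^{O(\eps^2)}(A)$. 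Interpreting the notation ${}^{2}I_{\max}^{\eps}(X:R)$ as the variant of ${}^{1}I_{\max}$ with the free reference state placed on the $X$ side (equivalently, ${}^{1}I_{\max}^{\eps}(R:X)$), this chain delivers \eqref{eq:smooth_Imax2_Hmax}. For \eqref{eq:smooth_Imax1_Hmax} I would exploit the symmetry $I_{\max}^{2\eps}(X:R)=I_{\max}^{2\eps}(R:X)$ of the standard smoothed max-information and then use \cref{fact:smooth_maxinfo_equivalence} as a bridge:
\[
I_{\max}^{\delta+2\sqrt{\delta}+\eps'}(R:X)\le {}^{1}I_{\max}^{\eps'}(R:X)+g(\delta)\;,\qquad g(\delta)=O\!\left(\log\tfrac{12}{\delta^{2}}\right)\;,
\]
choosing $\delta$ small enough that $\delta+2\sqrt{\delta}+\eps'\le 2\eps$ (the hypothesis $\eps'\le\eps/3$ makes this possible, for example by taking $\delta$ of order $\eps^{2}$). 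Feeding the master bound applied with smoothing $\eps'$ into the right-hand side then yields the claim, with the additive $g(\delta)$ absorbed into the stated $O(\log(\eps'^{2}/12))$ term.

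The main obstacle is not conceptual but bookkeeping: one must thread the smoothing parameter through Berta's $\eps\mapsto\eps^{2}/48$, through \cref{prop:Hmax_inequalities} (which preserves the parameter but loses only constants), and through the triangle-type transformation $\eps\mapsto\eps+2\sqrt{\eps}+\eps'$ from \cref{fact:smooth_maxinfo_equivalence}, all while keeping the final expressions in the advertised forms $\widetilde{H}_{\max}^{O(\eps^{2})}(A)$, $-O(\log\eps)$, and $O(\log(\eps'^{2}/12))$. A secondary nuisance is fixing the meaning of the symbol ${}^{2}I_{\max}$, which is not explicitly introduced in the earlier definitions; the argument above forces the interpretation that it is the variant of ${}^{1}I_{\max}$ obtained by minimising the reference state over the first (classical) system, which is the only reading consistent with the right-hand side not containing an uncontrollable $H_{\max}(X)$ term.
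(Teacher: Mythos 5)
Your proposal follows essentially the same route as the paper's own proof: apply \cref{fact:Berta} with the purifying system $R$ in the first slot, discard the conditional min-entropy term via \cref{claim:positivity}, use the fact that $\sigma^R=\rho^R$ shares its spectrum with $\rho^A$ to replace $H_{\max}(R)$ by $H_{\max}(A)$, upgrade to $\widetilde{H}_{\max}$ via \cref{prop:Hmax_inequalities}, and then obtain the second inequality from \cref{fact:smooth_maxinfo_equivalence} together with monotonicity of the smoothed max-informations in the smoothing parameter. Your reading of the (undefined) symbol ${}^{2}I_{\max}$ and your parameter bookkeeping for the second inequality match the paper's (equally terse) treatment, so the argument is correct as given.
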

\begin{proof}
	From \cref{fact:Berta} we know that
	\[
	{}^{2}I_{\max}^{\eps}(X:R)_{\sigma}\leq H_{\max}^{\eps^2/48}(R)_{\sigma}-H_{\min}^{\eps^2/48}(R|X)_{\sigma}-2\log \frac{\eps^2}{24}
	\]
	Further, from \cref{claim:positivity}  we have that for any classical quantum state and any value of $\eps \in [0,1]$,
	\[
	H_{\min}^{\eps^2/48}(R|X)_{\sigma} \geq 0
	\]
	Note also that from the definition of the state $\sigma^{XR}$, 
	\begin{align*}
		\sigma^{R} & = \sum_x P_X(x)\rho_x^R = \sum_x \sqrt{\rho} \Lambda_x^T \sqrt{\rho}= \rho^R\;,
	\end{align*}
	which has the same eigenvalues of $\rho^A$. Therefore,
	\[
	H_{\max}^{\eps^2/48}(R)_{\sigma}= H_{\max}^{\eps^2/48}(A)_{\rho^A}\;.
	\]
	Finally by \cref{prop:Hmax_inequalities} we can upper bound $\hmax^{\eps^2/48}(A)_\rho$ by $\widetilde{H}_{\max}^{\eps^2/48}(A)_\rho$,
	and hence
	\[
	{}^{2}I_{\max}^{\eps}(X:R)_{\sigma^{XR}}\leq \widetilde{H}_{\max}^{\eps^2/48}(A)_{\rho}-2\log \frac{\eps^2}{24}
	\]
	This concludes the proof of \cref{eq:smooth_Imax2_Hmax}. The proof for \cref{eq:smooth_Imax1_Hmax} is a straightforward application of \cref{fact:smooth_maxinfo_equivalence} and the observation that $I_{\max}^\eps(A:B)_\rho$ and ${}^1I_{\max}^\eps(A:B)_\rho$ are both non-increasing in $\eps$.
\end{proof}

\begin{remark}
	In our main purity distillation protocol the register $R$ above is taken to be the joint register $RB$ and the above lemma can be directly applied.
\end{remark}


\subsection{Quantum asymptotic equipartition property (QAEP) and asymptotic i.i.d.  limits}
In this subsection we provide the asymptotic i.i.d.  limits of above defined one-shot smoothed entropic quantities and identify that they indeed converge to their Shannon analogues in the limit. The property that ensures that smoothed one-shot entropies converge to Shannon entropies in the i.i.d.  limit is often called as the \emph{asymptotic equipartition property} or AEP.

\begin{fact}[{Asymptotic i.i.d.  limit of $\hmax^\eps(A)$ \cite[Corollary~6.6 and Corollary~6.7]{Tomamichel_thesis}}] \label{fact:Hmaxi.i.d.}
	Given a state $\rho^A$,
	\[
	\lim\limits_{\eps\to 0}\lim\limits_{n\to \infty} \frac{1}{n} \hmax^{\eps}(A^{\otimes n}) = H(A)_\rho\;,
	\]
	where $H(A)_\rho$ denotes the von Neumann entropy of $\rho^A$.
\end{fact}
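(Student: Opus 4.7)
} The equality is a two-sided limit, so the strategy is to prove matching upper and lower bounds on $\frac{1}{n}\hmax^{\eps}(A^{\otimes n})$ and then take $n\to\infty$ followed by $\eps\to 0$.

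For the upper bound I would use the standard typical-subspace construction. Fix $\delta>0$ and let $\Pi_n$ be the projector onto the $\delta$-typical subspace of $\rho^{\otimes n}$, so that $\Tr[\Pi_n\rho^{\otimes n}]\ge 1-\eps_n$ with $\eps_n\to 0$ as $n\to\infty$, while $\Tr[\Pi_n]\le 2^{n(H(A)_\rho+\delta)}$. By the gentle measurement lemma (\cref{fact:Gentle_operator_lemma}) the subnormalized state $\tilde\rho_n:=\Pi_n\rho^{\otimes n}\Pi_n$ lies inside the generalized-fidelity ball $\cB^{\sqrt{\eps_n}}(\rho^{\otimes n})$. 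A Cauchy--Schwarz bound on the eigenvalues yields
\[
\Tr[\sqrt{\tilde\rho_n}]\;\le\;\sqrt{\operatorname{rank}(\tilde\rho_n)\cdot\Tr[\tilde\rho_n]}\;\le\;2^{n(H(A)_\rho+\delta)/2},
\]
which is essentially the same Cauchy--Schwarz move used in the proof of \cref{prop:Hmax_inequalities}. Therefore $H_{\max}(\tilde\rho_n)\le n(H(A)_\rho+\delta)$, and hence $\hmax^{\eps}(A^{\otimes n})\le n(H(A)_\rho+\delta)$ for every $\eps\ge\sqrt{\eps_n}$. Letting $n\to\infty$ and then $\delta\to 0$ delivers the upper bound.

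For the lower bound I would route through Rényi entropies. Recall that $\hmax$ coincides with the R\'enyi entropy $H_{1/2}$, and that the R\'enyi entropy $H_\alpha$ is additive on tensor powers: $H_\alpha(\rho^{\otimes n})=nH_\alpha(\rho)$. The key one-shot ingredient, which I would prove by a Markov-type argument on the eigenvalue distribution of $\rho$, is that for any $\alpha\in(\tfrac12,1)$,
\[
\hmax^{\eps}(\rho)\;\ge\;H_\alpha(\rho)\;-\;\frac{1}{\alpha-\tfrac12}\,\log\frac{1}{\eps}.
\]
Intuitively, smoothing can zero out only eigenvalues summing to at most $\eps$, and these eigenvalues cannot contribute too much to $H_{1/2}$ compared to $H_\alpha$ when $\alpha>\tfrac12$, because Markov's inequality bounds the number of eigenvalues smaller than a given threshold. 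Applied to $\rho^{\otimes n}$ the displayed inequality yields $\frac1n\hmax^{\eps}(\rho^{\otimes n})\ge H_\alpha(\rho)-O_\alpha\!\left(\tfrac{\log(1/\eps)}{n}\right)$. Sending $n\to\infty$ then $\alpha\to 1$ (using continuity of $\alpha\mapsto H_\alpha(\rho)$ at $\alpha=1$, which gives $\lim_{\alpha\to 1}H_\alpha(\rho)=H(A)_\rho$) and finally $\eps\to 0$ yields the matching lower bound.

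The main obstacle is the one-shot R\'enyi comparison $\hmax^{\eps}(\rho)\ge H_\alpha(\rho)-(\alpha-\tfrac12)^{-1}\log(1/\eps)$. A naive continuity-of-entropy approach (Fannes--Audenaert) fails here because the dimension $d^n$ grows exponentially in $n$, so the resulting correction $O(\eps\log d^n)$ is linear in $n$ and swamps the bound. The Rényi route is what circumvents this: by trading the smoothing parameter against the R\'enyi order via Markov's inequality on the ordered eigenvalues, one obtains a correction that is only $O(\log(1/\eps))$ in $n$, which vanishes after normalization by $n$.
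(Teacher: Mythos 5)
The paper itself does not prove this fact---it is imported verbatim from Tomamichel's thesis---so I am judging your argument on its own merits against the standard proof it would replace. Your upper bound is fine: projecting onto the $\delta$-typical subspace, invoking the gentle measurement lemma to place $\Pi_n\rho^{\otimes n}\Pi_n$ in the smoothing ball, and bounding $\Tr[\sqrt{\tilde\rho_n}]$ by Cauchy--Schwarz is exactly the right one-shot-to-typicality argument and gives $\limsup_n \frac1n\hmax^\eps(A^{\otimes n})\le H(A)_\rho+\delta$ for every fixed $\eps>0$.

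The lower bound, however, rests on a false lemma. The claimed inequality $\hmax^{\eps}(\rho)\ge H_\alpha(\rho)-\frac{1}{\alpha-1/2}\log\frac1\eps$ for $\alpha\in(\tfrac12,1)$ cannot hold: since $H_\alpha(\rho)>H(A)_\rho$ strictly for any non-flat $\rho$ when $\alpha<1$, applying it to $\rho^{\otimes n}$ and dividing by $n$ would give $\liminf_n\frac1n\hmax^\eps(\rho^{\otimes n})\ge H_\alpha(\rho)>H(A)_\rho$, contradicting your own (correct) upper bound. Concretely, take $\rho$ with one eigenvalue $1-\eps^2/2$ and the remaining mass $\eps^2/2$ spread uniformly over $M$ eigenvalues: the optimal smoothing deletes the entire tail, so $\hmax^\eps(\rho)\approx 0$, while $H_\alpha(\rho)\approx\log M+\frac{\alpha}{1-\alpha}\log(\eps^2/2)\to\infty$ as $M\to\infty$. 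The intuition fails because R\'enyi entropies of order $\alpha<1$ are, like $H_{1/2}$, inflated by long thin tails---precisely the feature smoothing is designed to remove---so they cannot lower-bound the smoothed quantity. The correct R\'enyi route (the one underlying the cited Corollaries 6.6--6.7) compares $\hmax^\eps$ to orders $\alpha>1$, for which $H_\alpha\le H$ and $H_\alpha\to H$ as $\alpha\to1^+$, typically via the smooth min-entropy or the inequality $\hmax^\eps(\rho)\ge H_\alpha(\rho)-\frac{g(\eps)}{\alpha-1}$ with $\alpha>1$. Separately, your stated reason for rejecting the Fannes--Audenaert route is mistaken: the correction there is $\eps\, n\log d+h(\eps)$, which after dividing by $n$ is $\eps\log d+o(1)$ and is killed by the \emph{outer} limit $\eps\to0$ in the statement. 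Since $H_{1/2}(\tilde\rho)\ge H(\tilde\rho)$ up to normalization of $\tilde\rho$, that elementary continuity argument already yields the lower bound for the double limit as stated, and would be the simplest way to repair your proof.
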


\begin{fact}[{Asymptotic i.i.d.  limit of $(\hmax')^\eps(A)$ \cite[Proposition~2]{NemaSen_decoupling}}] \label{fact:Hmax'_i.i.d.}
	Given a density matrix $\rho^A$,
	\[
	\lim\limits_{\eps\to 0}\lim\limits_{n\to \infty} \frac{1}{n} (\hmax')^{\eps}(\rho^{\otimes n}) = H(A)_\rho\;.
	\]
\end{fact}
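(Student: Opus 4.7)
I would sandwich $\tfrac{1}{n}(\hmax')^\eps(\rho^{\otimes n})$ between two quantities whose double limits coincide with $H(A)_\rho$, using \cref{prop:Hmax_inequalities} for the easy direction and a direct quantum AEP argument for the matching direction. Throughout I work with the spectral decomposition $\rho = \sum_i \lambda_i \ketbra{e_i}$, so that $\rho^{\otimes n} = \sum_{\vec i} \lambda_{\vec i}\,\ketbra{e_{\vec i}}$ with $\lambda_{\vec i} = \prod_{k=1}^n \lambda_{i_k}$.

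For the lower bound, \cref{prop:Hmax_inequalities} applied to $\rho^{\otimes n}$ gives $\hmax^\eps(A^{\otimes n})_{\rho^{\otimes n}} \leq (\hmax')^\eps(A^{\otimes n})_{\rho^{\otimes n}}$, and \cref{fact:Hmaxi.i.d.} states that $\lim_{\eps\to 0}\lim_{n\to\infty}\tfrac{1}{n}\hmax^\eps(A^{\otimes n}) = H(A)_\rho$. Combining the two directly yields
\[
\liminf_{\eps\to 0}\liminf_{n\to\infty}\tfrac{1}{n}(\hmax')^\eps(\rho^{\otimes n}) \;\geq\; H(A)_\rho .
\]

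For the matching upper bound, I would exploit the fact that, under the self-weighted distribution $\Pr[\vec i] = \lambda_{\vec i}$, the random variable $-\tfrac{1}{n}\log\lambda_{\vec I}$ is an average of $n$ i.i.d.\ copies of $-\log\lambda_{I_1}$, whose expectation is $H(A)_\rho$. Hence, for every $\delta>0$,
\[
p_n(\delta) \;\coloneqq\; \sum_{\vec i\,:\,\lambda_{\vec i} < 2^{-n(H(A)_\rho + \delta)}} \lambda_{\vec i} \;\xrightarrow[n\to\infty]{}\; 0,
\]
by the weak law of large numbers (a Chernoff bound would give a quantitative rate, but is not needed). Now fix $\eps,\delta>0$ and choose $n$ large enough that $p_n(\delta) < \eps$. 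Since the trimming prescribed in \cref{def:Hmax} zeros out the \emph{smallest} eigenvalues whose total mass is at most $\eps$, and the eigenvalues strictly below the threshold $2^{-n(H(A)_\rho + \delta)}$ together carry mass only $p_n(\delta)<\eps$, all of them are removed in forming $\rho'^{(n)}$. Therefore $\lambda_{\min}(\rho'^{(n)}) \geq 2^{-n(H(A)_\rho+\delta)}$, which by definition gives
\[
(\hmax')^\eps(A^{\otimes n})_{\rho^{\otimes n}} \;=\; \log \bigl\lVert (\rho'^{(n)})^{-1}\bigr\rVert_\infty \;\leq\; n\bigl(H(A)_\rho + \delta\bigr).
\]
Dividing by $n$, sending $n\to\infty$ and then $\delta\to 0$ (independently of $\eps$) yields $\limsup_{\eps\to 0}\limsup_{n\to\infty}\tfrac{1}{n}(\hmax')^\eps(\rho^{\otimes n}) \leq H(A)_\rho$, and the claim follows.

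The only technical step is the tail estimate $p_n(\delta)\to 0$, which is nothing more than the classical AEP applied to the eigenvalue distribution of $\rho$; I do not anticipate any real obstacle. The main conceptual point is simply that the deterministic trimming in \cref{def:Hmax} automatically peels off the low-eigenvalue tail once that tail drops below mass $\eps$, so unlike for the standard $\hmax^\eps$ one does not need to optimize over an $\eps$-ball of smoothed states.
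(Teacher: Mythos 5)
Your proposal is correct. Note, however, that the paper does not prove this statement at all: it is imported as a black-box fact, citing Proposition~2 of the decoupling reference, so there is no in-paper proof to compare against. Your self-contained argument is the natural one and is sound: the lower bound follows from $\hmax^\eps \leq (\hmax')^\eps$ (\cref{prop:Hmax_inequalities}) together with \cref{fact:Hmaxi.i.d.}, and the upper bound is the classical AEP applied to the eigenvalue distribution of $\rho^{\otimes n}$. The one step worth spelling out is the claim that the deterministic trimming removes \emph{all} eigenvalues below the threshold $2^{-n(H(A)_\rho+\delta)}$: this holds because that set is downward-closed in the eigenvalue ordering (hence a prefix of the ascending arrangement) and has total mass $p_n(\delta)<\eps$, while the trimming in \cref{def:Hmax} removes the \emph{maximal} prefix of mass at most $\eps$, which therefore contains it; consequently the smallest surviving eigenvalue is at least $2^{-n(H(A)_\rho+\delta)}$ and $(\hmax')^\eps(\rho^{\otimes n})\leq n(H(A)_\rho+\delta)$. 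Since this bound holds for every $\eps>0$ once $n$ is large (with $\delta$ sent to zero independently of $\eps$), combining it with the lower bound gives the stated double limit.
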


\begin{proposition}[{Asymptotic i.i.d.  limit of $(\widetilde{H}_{\max})^\eps(A)$}]\label{prop:Hmax_tilde_i.i.d.} 
	Given a density matrix $\rho^A$,
	\[
	\lim \limits_{\eps\to 0}\lim \limits_{n\to \infty} \frac{1}{n} (\widetilde{H}_{\max})^{\eps}(\rho^{\otimes n}) = H(A)_\rho\;.
	\]
\end{proposition}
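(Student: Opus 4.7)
The plan is to prove this as a direct squeeze argument, leveraging the two asymptotic equipartition results already recorded in the excerpt (\cref{fact:Hmaxi.i.d.} and \cref{fact:Hmax'_i.i.d.}) together with the chain of inequalities in \cref{prop:Hmax_inequalities}. No new entropic machinery should be needed.

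First, I would apply \cref{prop:Hmax_inequalities} to the state $\rho^{\otimes n}$ on the system $A^{\otimes n}$. This gives, for every $n \geq 1$ and every $\eps \in [0,1]$, the two-sided bound
\[
H_{\max}^{\eps}(A^{\otimes n})_{\rho^{\otimes n}} \;\leq\; \widetilde{H}_{\max}^{\eps}(A^{\otimes n})_{\rho^{\otimes n}} \;\leq\; (H_{\max}')^{\eps}(A^{\otimes n})_{\rho^{\otimes n}}\;.
\]
Dividing through by $n$ preserves the inequalities, and one then takes first the limit $n\to\infty$ and subsequently the limit $\eps\to 0$ (in this order, matching the statement of the proposition).

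The outer two quantities in the sandwich are handled by invoking \cref{fact:Hmaxi.i.d.} on the left, which gives $\lim_{\eps\to 0}\lim_{n\to\infty}\tfrac{1}{n}H_{\max}^{\eps}(A^{\otimes n})_{\rho^{\otimes n}} = H(A)_{\rho}$, and \cref{fact:Hmax'_i.i.d.} on the right, which gives $\lim_{\eps\to 0}\lim_{n\to\infty}\tfrac{1}{n}(H_{\max}')^{\eps}(A^{\otimes n})_{\rho^{\otimes n}} = H(A)_{\rho}$. Since both bounding sequences converge to the same limit $H(A)_\rho$, the standard squeeze/sandwich argument forces
\[
\lim_{\eps\to 0}\lim_{n\to\infty}\frac{1}{n}\widetilde{H}_{\max}^{\eps}(A^{\otimes n})_{\rho^{\otimes n}} \;=\; H(A)_{\rho}\;,
\]
which is exactly the claim.

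There is essentially no obstacle here: all the real work has already been done in establishing the sandwich inequality of \cref{prop:Hmax_inequalities} and the two QAEP-type results. The only mildly subtle point to double-check is that the order of limits in the conclusion matches the order of limits in the hypotheses of \cref{fact:Hmaxi.i.d.} and \cref{fact:Hmax'_i.i.d.}; since the proposition is stated with the same $\lim_{\eps\to 0}\lim_{n\to\infty}$ ordering, no re-derivation or uniformity argument in $\eps$ is required. Thus the proof is expected to be a short paragraph that just chains the three cited results together.
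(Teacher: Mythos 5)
Your proposal is correct and follows exactly the paper's own (very terse) proof: the paper also just invokes \cref{prop:Hmax_inequalities} together with the sandwich property of limits, with \cref{fact:Hmaxi.i.d.} and \cref{fact:Hmax'_i.i.d.} supplying the two outer limits. Nothing is missing; your write-up simply makes explicit the steps the paper leaves implicit.
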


\begin{proof}
	The proof is a direct application of \cref{prop:Hmax_inequalities} and the sandwich property of limits.
\end{proof}

Finally, our rate expression for purity distillation protocol also includes $I_H^\eps$ and $I_{\max}^\eps$. The following fact provides the asymptotic i.i.d.  limits of these quantities. Note that these one-shot mutual information quantities are derived from $D_H^\eps$ and $D_{\max}^\eps$, respectively, see \cref{def:I_H} and \cref{def:Imax1}. We refer the reader to \cite{QAEP,TomamichelHayashi, TomamichelTan, KeLi, Ciganovic_2014} for the proof. 
\begin{fact}\label{fact:QAEP}
	Given the quantum states $\rho^{A}$ and $\sigma^A$, the following hold:
	\begin{align*}
		&\lim \limits_{\eps\to 0} \lim \limits_{n\to \infty}\frac{1}{n} D_H^{\eps}(\rho^{\otimes n}\|\sigma^{\otimes n}) = D(\rho\|\sigma) \\
		 &\lim \limits_{\eps\to 0} \lim \limits_{n\to \infty}\frac{1}{n} D_{\max}^{\eps}(\rho^{\otimes n}\|\sigma^{\otimes n}) = D(\rho\|\sigma)\;,
	\end{align*}
	where $D(\rho\|\sigma)$ is the quantum relative entropy.
\end{fact}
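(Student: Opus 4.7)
\medskip

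\noindent\textbf{Proof plan.} Both limits are instances of the quantum asymptotic equipartition property, and the plan is to establish them independently via the standard Stein's-lemma machinery, since they are in fact well-known theorems (Hiai--Petz, Ogawa--Nagaoka, Tomamichel--Hayashi, Tomamichel--Colbeck--Renner). I will sketch the argument for the $D_H^\eps$ limit first, and then deduce the $D_{\max}^\eps$ limit either by reduction to hypothesis testing or via a direct typical-projector argument.

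\medskip

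\noindent\textbf{Step 1: Quantum Stein's lemma for $D_H^\eps$.} The goal is to show that for every fixed $\eps\in(0,1)$ one already has $\lim_n \tfrac{1}{n}D_H^\eps(\rho^{\otimes n}\|\sigma^{\otimes n})=D(\rho\|\sigma)$, which in particular implies the iterated limit in the Fact. For achievability, I would pick $\delta>0$ and define the projector $T_n$ onto the subspace spanned by eigenvectors of $\rho^{\otimes n}-2^{n(D(\rho\|\sigma)-\delta)}\sigma^{\otimes n}$ corresponding to nonnegative eigenvalues. Using the quantum typicality technique of Hiai--Petz (or a pinching-based variant as in Hayashi), one shows that $\Tr[T_n\rho^{\otimes n}]\to 1$ and $\Tr[T_n\sigma^{\otimes n}]\le 2^{-n(D(\rho\|\sigma)-\delta)}$; hence $T_n$ is admissible in the supremum defining $D_H^\eps$ for large $n$, giving $\tfrac{1}{n}D_H^\eps\ge D(\rho\|\sigma)-\delta$. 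For the converse, given any admissible $T_n$ with $\Tr[T_n\rho^{\otimes n}]\ge 1-\eps$, I would apply data processing of the quantum relative entropy to the two-outcome measurement $\{T_n, I-T_n\}$, obtaining
\begin{equation*}
nD(\rho\|\sigma)\ge d\!\left(1-\eps\,\|\,\Tr[T_n\sigma^{\otimes n}]\right),
\end{equation*}
where $d(\cdot\|\cdot)$ is binary relative entropy. Rearranging yields $\tfrac{1}{n}\log\tfrac{1}{\Tr[T_n\sigma^{\otimes n}]}\le D(\rho\|\sigma)+o_n(1)$, so $\tfrac{1}{n}D_H^\eps\le D(\rho\|\sigma)+o_n(1)$, matching the achievability bound.

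\medskip

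\noindent\textbf{Step 2: AEP for $D_{\max}^\eps$.} Here I would split into an upper and a lower bound. For the upper bound, the standard approach is to exhibit a nearby state $\tilde\rho_n\in\cB^\eps(\rho^{\otimes n})$ whose max-relative entropy against $\sigma^{\otimes n}$ is close to $nD(\rho\|\sigma)$. A clean way is to use the pinching map $\cP_{\sigma^{\otimes n}}$ of Hayashi, which commutes $\rho^{\otimes n}$ with $\sigma^{\otimes n}$ at the cost of a polynomial factor, reducing the problem to the classical case; on the classical typical set one has $\rho'_n\le 2^{n(D(\rho\|\sigma)+\delta)}\sigma^{\otimes n}$, and gentle measurement bounds the trace distance. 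For the lower bound, I would use the operator inequality $D_H^\eps(\rho\|\sigma)\le D_{\max}^{\eps'}(\rho\|\sigma)+\log\frac{1}{1-\eps-\eps'}$ (straightforward from the definitions: if $\tilde\rho\le 2^\lambda\sigma$ with $\tilde\rho\in\cB^{\eps'}(\rho)$, then any admissible test for $D_H^\eps$ gives $\Tr[T\sigma]\ge 2^{-\lambda}\Tr[T\tilde\rho]\ge 2^{-\lambda}(1-\eps-\eps')$) to conclude
\begin{equation*}
\tfrac{1}{n}D_{\max}^\eps(\rho^{\otimes n}\|\sigma^{\otimes n})\ge \tfrac{1}{n}D_H^{\eps''}(\rho^{\otimes n}\|\sigma^{\otimes n})-o_n(1)
\end{equation*}
for suitable $\eps''$; invoking Step~1 then yields $\liminf\ge D(\rho\|\sigma)$.

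\medskip

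\noindent\textbf{Main obstacle.} The only genuinely technical point is the noncommutative achievability bound for $D_{\max}^\eps$: when $[\rho,\sigma]\neq 0$ one cannot diagonalize both in a common basis, so the construction of the smoothing state $\tilde\rho_n$ requires either Hayashi's pinching inequality (which costs an additive $O(\log\operatorname{poly}(n))$ term that vanishes after dividing by $n$) or a direct Tomamichel--Colbeck--Renner typical-projector argument using the Alicki--Fannes continuity of the relative entropy. Once this smoothing is in place, the two iterated limits in the Fact follow by taking $n\to\infty$ first and then $\eps\to 0$, which is actually redundant here since the Stein bound already holds uniformly in $\eps\in(0,1)$.
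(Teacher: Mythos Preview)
The paper does not actually prove this statement: it is recorded as a \emph{fact} and the authors simply refer the reader to the literature (\cite{QAEP,TomamichelHayashi,TomamichelTan,KeLi,Ciganovic_2014}) for the proof. Your sketch is correct and is essentially the standard route taken in those references --- quantum Stein's lemma (Hiai--Petz achievability plus data-processing converse) for $D_H^\eps$, and for $D_{\max}^\eps$ an upper bound via pinching/typical projectors together with a lower bound by comparison with $D_H^\eps$ --- so there is nothing to compare against beyond noting that you have supplied what the paper deliberately omits.
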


\begin{fact}\label{fact:Imax_I_H_i.i.d.}
	For any bipartite quantum states $\rho^{AB}$,
	\begin{align*}
		&\lim \limits_{\eps\to 0} \lim \limits_{n\to \infty}\frac{1}{n} I_H^{\eps}(A^n:B^n)_{\rho^{\otimes n}} = I(A:B)_\rho\;, \\
		& \lim \limits_{\eps\to 0} \lim \limits_{n\to \infty}\frac{1}{n} I_{\max}^{\eps}(A^n:B^n)_{\rho^{\otimes n}} = I(A:B)_\rho\;.
	\end{align*}
\end{fact}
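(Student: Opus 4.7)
The plan is to reduce both limits to the quantum asymptotic equipartition property (QAEP) for the underlying divergences stated in \cref{fact:QAEP}, namely $\lim_{\eps\to 0}\lim_{n\to\infty} \tfrac{1}{n} D_H^\eps(\rho^{\otimes n}\|\sigma^{\otimes n}) = D(\rho\|\sigma)$ and its analogue for $D_{\max}^\eps$.

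For the first limit, note that \cref{def:I_H} together with the product structure of the marginals of $(\rho^{AB})^{\otimes n}$ gives
\[
I_H^\eps(A^n:B^n)_{\rho^{\otimes n}} = D_H^\eps\!\left((\rho^{AB})^{\otimes n}\,\big\|\,(\rho^A\otimes \rho^B)^{\otimes n}\right).
\]
Instantiating \cref{fact:QAEP} with $\rho = \rho^{AB}$ and $\sigma = \rho^A\otimes\rho^B$ immediately yields that the limit equals $D(\rho^{AB}\|\rho^A\otimes \rho^B) = I(A:B)_\rho$. This first part is essentially a relabelling.

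For the second limit, the difficulty is that the smoothing ball in \cref{def:Imax1} modifies both factors of the product marginal simultaneously, so $I_{\max}^\eps$ does not directly equal a $D_{\max}^\eps$ between two fixed states. The plan is to sandwich the quantity between two expressions that both converge to $I(A:B)_\rho$. For the upper bound, \cref{fact:smooth_maxinfo_equivalence} furnishes a correction $g(\eps) = O(\log(1/\eps^2))$ with $I_{\max}^\eps(A:B)_\rho \leq {}^1 I_{\max}^{\eps'}(A:B)_\rho + g(\eps)$. Next, by \cref{def:Imax2}, choosing $\sigma^B = \rho^B$ in the inner minimization gives ${}^1 I_{\max}^{\eps'}(A^n:B^n)_{\rho^{\otimes n}} \leq D_{\max}^{\eps'}\!\left((\rho^{AB})^{\otimes n}\,\|\,(\rho^A\otimes\rho^B)^{\otimes n}\right)$, modulo an additive perturbation term capturing the fact that the smoothed marginal $\tilde\rho^A$ differs slightly from $\rho^A$. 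Applying QAEP for $D_{\max}^\eps$ controls the leading rate, and the perturbation term is absorbed in the joint $\eps\to 0$, $n\to\infty$ limit, giving $\limsup \tfrac{1}{n}I_{\max}^\eps \le I(A:B)_\rho$. For the lower bound, the standard relation $D_H^\eps(\rho\|\sigma) \leq D_{\max}^{\eps''}(\rho\|\sigma) + \log\tfrac{1}{1-\eps-\eps''}$ between hypothesis-testing and max-relative entropies lifts, after an analogous choice in the minimizations, to $I_H^\eps(A:B)_\rho \leq I_{\max}^{\eps''}(A:B)_\rho + O(\log(1/\eps))$. Combined with the already established first limit, this forces $\liminf \tfrac{1}{n}I_{\max}^\eps \ge I(A:B)_\rho$, closing the sandwich.

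The main obstacle is in the upper bound: carefully handling the substitution of the smoothed marginal $\tilde\rho^A$ by the unperturbed $\rho^A$, because $D_{\max}$ is notoriously sensitive to small perturbations of the reference operator. The standard remedy (found in Tomamichel--Hayashi and Tomamichel--Tan) is to enlarge the smoothing ball by an amount that vanishes at the rate level, and to exploit the fact that for tensor-product references the perturbation contributes at most an $O(\log n)$ penalty, hence an $O(\tfrac{\log n}{n})$ rate term that disappears in the AEP limit. Beyond this technical care, the argument is essentially a reduction to \cref{fact:QAEP} through \cref{fact:smooth_maxinfo_equivalence}, which is precisely why this fact is stated as a straightforward consequence of existing AEP results.
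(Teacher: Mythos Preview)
The paper does not actually prove this fact; it is stated with a pointer to the literature (\cite{QAEP,TomamichelHayashi,TomamichelTan,KeLi,Ciganovic_2014}) immediately preceding \cref{fact:QAEP}. So there is no in-paper argument to compare against, and your proposal should be judged on its own.

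Your treatment of the $I_H^\eps$ limit is fine: since the marginals of $(\rho^{AB})^{\otimes n}$ are exactly $(\rho^A)^{\otimes n}$ and $(\rho^B)^{\otimes n}$, \cref{def:I_H} gives $I_H^\eps(A^n:B^n)_{\rho^{\otimes n}} = D_H^\eps\!\big((\rho^{AB})^{\otimes n}\,\|\,(\rho^A\otimes\rho^B)^{\otimes n}\big)$, and \cref{fact:QAEP} finishes the job. Your lower bound for the $I_{\max}^\eps$ limit via $I_H^\eps \le I_{\max}^{\eps''} + O(\log\tfrac{1}{\eps})$ is also standard and correct.

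The gap is in your upper bound for $I_{\max}^\eps$. After passing to ${}^1 I_{\max}^{\eps'}$ via \cref{fact:smooth_maxinfo_equivalence}, you still face the first factor $\tilde\rho^{A^n}$ rather than $(\rho^A)^{\otimes n}$, and you correctly flag that $D_{\max}$ is unstable under perturbations of the second argument. Your claimed remedy (``$O(\log n)$ penalty'' from enlarging the smoothing ball, attributed to Tomamichel--Hayashi/Tomamichel--Tan) is not substantiated: those references concern second-order expansions for fixed product references, not control of $D_{\max}(\tilde\rho\,\|\,\tilde\rho^A\otimes\sigma^B)$ versus $D_{\max}(\tilde\rho\,\|\,(\rho^A)^{\otimes n}\otimes\sigma^B)$, and no $O(\log n)$ bound of this type follows from them. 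As written, this step is an assertion, not an argument. A clean way to close the sandwich using only tools already in the paper is to invoke \cref{fact:Berta}: ${}^1 I_{\max}^{\eps'}(A^n:B^n)_{\rho^{\otimes n}} \le H_{\max}^{\eps'^2/48}(A^n)_{\rho^{\otimes n}} - H_{\min}^{\eps'^2/48}(A^n|B^n)_{\rho^{\otimes n}} + O(\log\tfrac{1}{\eps'})$, and then apply the AEPs for $H_{\max}^\eps$ and $H_{\min}^\eps$ (both in \cite{QAEP,Tomamichel_thesis}) to get $H(A)-H(A|B)=I(A:B)$ in the limit. This sidesteps the marginal-substitution issue entirely and matches the route taken in \cite{Ciganovic_2014}.
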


\end{document}